\newtheorem{theorem}{Theorem}
\newtheorem{corollary}{Corollary}
\newtheorem{property}{Property}
\tikzset{My Line Style/.style={samples=400}}
\begin{document}

\title{\mbox{}\vspace{0.3cm}\\
\textsc{\huge Finite Field Multiple Access II: \\ from symbol-wise to codeword-wise} 
\vspace{0.5cm}}

\vspace{0.5cm}
\author{\normalsize
Qi-yue~Yu, {\it IEEE Senior Member},
Shi-wen~Lin, and Ting-wei~Yang
\thanks{Q.-Y.~Yu (email: yuqiyue@hit.edu.cn), S.-W.~Lin (email: 23S005089@stu.hit.edu.cn) and T.-W.~Yang (email: yangtingwei@stu.hit.edu.cn) are affiliated with Harbin Institute of Technology, Heilongjiang, China.}
}

\maketitle
\vspace{-0.5in}
\begin{abstract}
A finite-field multiple-access (FFMA) system separates users within a finite field by utilizing different element-pairs (EPs) as virtual resources. The Cartesian product of distinct EPs forms an EP code, which serves as the input to a finite-field multiplexing module (FF-MUX). 
This allows the FFMA technique to reorder the channel coding and multiplexing modules, enabling the superimposed signals to function as codewords that can be decoded by a channel code. This flexibility allows the FFMA system to efficiently support a large number of users with short packet traffic, addressing the finite blocklength (FBL) challenge in multiuser reliable transmission.
Designing EP codes is a central challenge in FFMA systems. In this paper, we construct EP codes based on a bit(s)-to-codeword transformation approach and define the corresponding EP code as a codeword-wise EP (CWEP) code. We then investigate the encoding process of EP codes, and propose unique sum-pattern mapping (USPM) structural property constraints to design uniquely decodable CWEP codes. 
Next, we present the \(\kappa\)-fold ternary orthogonal matrix \({\bf T}_{\rm o}(2^{\kappa}, 2^{\kappa})\) over GF\((3^m)\), where \(m = 2^{\kappa}\), and the ternary non-orthogonal matrix \({\bf T}_{\rm no}(M,m)\) over GF\((3^m)\), for constructing specific CWEP codes.
Based on the proposed CWEP codes, we introduce three FFMA modes: channel codeword multiple access (FF-CCMA), code division multiple access (FF-CDMA), and non-orthogonal multiple access (FF-NOMA). Simulation results demonstrate that all three modes effectively support massive user transmissions with well-behaved error performance. 
\end{abstract}

\begin{IEEEkeywords}
Multiple access, binary source transmission, finite-field multiple-access (FFMA),  
element pair (EP), codeword-wise EP code, channel codeword multiple access (CCMA), 
code division multiple access (CDMA), non-orthogonal multiple access (NOMA), 
Gaussian multiple-access channel (GMAC), 3-dimensional butterfly network, polarization adjusted vector.
\end{IEEEkeywords}

\newpage
\setcounter{page}{1}
\section{Introduction}
For next-generation wireless communications, one promising application is ultra-massive machine-type communications (um-MTC), which aims to support a large number of users (or devices) with short packet traffic while maintaining robust error performance \cite{6G, MIT_2017}.   
In this context, the \textit{finite-field multiple-access (FFMA)} technique is proposed, which separates multiple users in a finite field using \textit{element-pairs (EPs)}, also referred to as \textit{virtual resource blocks (VRBs)} \cite{FFMA, FFMA_ITW}. 
The Cartesian product of \( J \) distinct EPs forms an EP code, where the EP code acts as the input to a \textit{finite-field multiplexing module (FF-MUX)}, enabling the FFMA technique to interchange the order of channel coding and multiplexing. This approach effectively addresses the finite-block length (FBL) problem in multiuser reliable transmission.

One of the core challenges of the FFMA technique is to construct well-behaved EP codes. In \cite{FFMA}, EP codes are constructed using a straightforward \textit{bit-to-symbol transform} approach. For example, the additive inverse EP (AIEP) code, \( \Psi_{\rm s} \), is constructed over GF(\( p \)), and the orthogonal uniquely-decodable EP (UD-EP) code, \( \Psi_{\rm o,B} \), is constructed over GF(\( 2^m \)).
With the aid of the AIEP code \( \Psi_{\rm s} \), the multiplexing efficiency of an FFMA network can increase by a factor of \( \log_2 (p-1) \) \cite{FFMA}. Additionally, based on the orthogonal UD-EP code \( \Psi_{\rm o,B} \) over GF(\( 2^m \)), we can design \textit{time-division multiple access in finite field (FF-TDMA) system}.

In fact, the performance of an FFMA system is primarily determined by the design of its EP codes. In other words, different EP codes can lead to different multiple-access (MA) systems.
It is well-known that there are various \textit{complex-field multiple-access (CFMA)} techniques that distinguish users by allocating different physical resource blocks, such as TDMA, CDMA (code division multiple access) \cite{Walsh_1964, Walsh_1971, Walsh_1971_2, Walsh_1993, Walsh_1999, FAdachi_2005}, IDMA (interleave division multiple access) \cite{IDMA1, IDMA2, SIDMA}, NOMA (non-orthogonal multiple access) \cite{YChen_2018, QWang_2018, QY_ISJ_2019, SCMA_2013, UD_CDMA1_2012, UD_CDMA2_2012, UD_CDMA3_2014, UD_CDMA4_2016, UD_CDMA5_2018, UD_CDMA6_2019} and others. These classical CFMA techniques play crucial and distinct roles in supporting multiuser transmissions.
For instance, the orthogonal spreading sequences used in CDMA can provide spreading gain and separate users' sequences with a simple correlation detector \cite{Walsh_1999, FAdachi_2005}. The NOMA technique, on the other hand, enhances spectral efficiency (SE) by allowing multiple users to share the same physical resources \cite{YChen_2018}. In general, an NOMA system can separate users at the receiving end using spreading, power-domain, codebooks, and other methods.


It is appealing to extend these classical CFMA techniques into finite fields to support massive user transmissions with short packet traffic. However, this introduces several challenges. For instance, classical Walsh codes \cite{Walsh_1964, Walsh_1971} are typically constructed over the \textit{binary field}, which does not exhibit the \textit{unique sum-pattern mapping (USPM)} structural property as defined in \cite{FFMA}. Fortunately, an EP code can be constructed over various finite fields, not limited to the binary field or its extension fields, thus expanding the design space.

In this paper, we propose a general approach to constructing EP codes, namely the \textit{bit(s)-to-codeword transform}, which broadens the applicability of FFMA systems and supports various modes. We then define two specific types of \textit{codeword-wise} EP codes: the \textit{single codeword EP (S-CWEP)} and the \textit{additive inverse codeword EP (AI-CWEP)} codes. Next, we describe the encoder for a codeword-wise EP code. The input to this encoder can be either a single bit or multiple bits (referred to as a codeword), and the output is a codeword determined by the EP encoder. Specifically, we consider two input modes: the serial mode and the parallel mode of the codeword-wise EP encoder. Furthermore, we investigate the impact of a channel code and present the framework for an FFMA system integrated with channel coding.

To ensure the designed EP codes uniquely decodable, we present \textit{USPM structural property constraints} for both the S-CWEP codes over GF($2^m$) and AI-CWEP codes over GF($3^m$).
If the generator matrix of an S-CWEP code over GF($2^m$) satisfies the USPM constraint, we can obtain a \textit{uniquely decodable S-CWEP (UD-S-CWEP)} code, which can be used to support a \textit{channel codeword multiple-access in finite-field (FF-CCMA)} system.

Then, we present \textit{$\kappa$-fold ternary orthogonal matrix} ${\bf T}_{\rm o}(2^{\kappa}, 2^{\kappa})$ over GF($3^m$) where $m = 2^{\kappa}$, and \textit{ternary non-orthogonal matrix} ${\bf T}_{\rm no}(M,m)$ over GF($3^m$).
Based on the $\kappa$-fold ternary orthogonal matrix ${\bf T}_{\rm o}(2^{\kappa}, 2^{\kappa})$ and its additive inverse matrix ${\bf T}_{\rm o, ai}(2^{\kappa}, 2^{\kappa})$, we can construct UD-AI-CWEP codes $\Psi_{\rm ai,T}$ over GF($3^m$), where $m = 2^{\kappa}$. An FFMA based on the UD-AI-CWEP $\Psi_{\rm ai,T}$ over GF($3^m$) system is proposed for supporting massive users transmission.
Without considering the channel code ${\mathcal C}_{gc}$, the UD-AI-CWEP code $\Psi_{\rm ai,T}$ based FFMA system degenerates into a classical CDMA system, indicating the proposed FFMA system can form an \textit{error-correction orthogonal spreading code}. We call such an FFMA system \textit{CDMA in finite-field (FF-CDMA)},
which has \textit{double-orthogonality} in both finite-field and complex-field.

In addition, based on the ternary non-orthogonal matrix ${\bf T}_{\rm no}(M,m)$ over GF($3^m$) and its addtive inverse matrix ${\bf T}_{\rm no, ai}(M,m)$, where $M > m$, we construct non-orthogonal CWEP (NO-CWEP) codes $\Psi_{\rm no}$. The FFMA based on the NO-CWEP code $\Psi_{\rm no}$ is called an \textit{NOMA in finite-field (FF-NOMA)} system. Without the channel code ${\mathcal C}_{gc}$, the NO-CWEP code $\Psi_{\rm no}$ based FFMA system degenerates into an NOMA system. Thus, the FF-NOMA system can form an \textit{error-correction non-orthogonal spreading code}. The NO-EP code is also suitable for network FFMA systems, which can be used in a \textit{3-dimensional butterfly network}.

The remainder of this paper is organized as follows.
Section II introduces two types of codeword-wise EP codes over finite fields: the single codeword EP code (S-CWEP) and the additive inverse codeword EP code (AI-CWEP).  
Section III presents the encoding process for EP codes.  
Section IV discusses the USPM constraints used for designing codeword-wise EP codes.  
In Section V, we construct the codeword-wise EP codes.  
Section VI introduces the transmitter and receiver for the FFMA system over GF($3^m$) in a Gaussian multiple access channel (GMAC).  
Section VII outlines the decoding process of EP codes.  
A summary of the FFMA system is provided in Section VIII.  
Section IX presents network FFMA systems based on overload EP codes.  
Simulations of the proposed FFMA systems are provided in Section X, followed by the conclusion in Section XI.

In this paper, the symbol $\mathbb B=\{0, 1\}$, ${\mathbb T} = \{0, 1, 2\}$ and $\mathbb {C}$ express the binary-field, ternary-field and complex-field, respectively. 
The notation $(a)_q$ stands for modulo-$q$, and/or an element in GF($q$).
An EP and an EP code are expressed by $C_j$ and $\Psi$, respectively.

\vspace{-0.1in}
\section{EP Codes over Finite Fields}


In this section, we first provide the definition of codeword-wise EP codes over the extension field $\text{GF}(p^m)$ of the prime field $\text{GF}(p)$, where $q = p^m$, $p$ is a prime number, and $m$ is a positive integer with $m \geq 2$.  
As defined in \cite{FFMA}, the prime number $p$ is referred to as the \textit{prime factor (PF)}, and the integer $m$ is referred to as the \textit{extension factor (EF)}.  We then introduce two specific types of codeword-wise EP codes: the \textit{single codeword EP (S-CWEP)} and the \textit{additive inverse codeword EP (AI-CWEP)} codes.

\vspace{-0.1in}
\subsection{Definition of codeword-wise EP codes}

Let $\alpha$ be a primitive element in $\text{GF}(p^m)$, where $p$ is a prime number and $m$ is a positive integer with $m \geq 2$.
Then, the powers of $\alpha$, namely $\alpha^{-\infty} = 0, \alpha^0 = 1, \alpha, \alpha^2, \ldots, \alpha^{(p^m - 2)}$, give all the $p^m$ elements of GF($p^m$). 
Each element $\alpha^j$, with $j = -\infty, 0, \ldots, p^m - 2$, in GF($p^m$) can be expressed as a linear sum of $\alpha^{-\infty} = 0, \alpha^0 = 1, \alpha, \alpha^2, \ldots, \alpha^{(m - 1)}$ with coefficients from GF($p$) as follows:
\begin{equation} \label{e2.2}
\alpha^j = a_{j,0} + a_{j,1} \alpha + a_{j,2} \alpha^2 + \ldots + a_{j,m-1} \alpha^{(m-1)}.      
\end{equation}
From (\ref{e2.2}), it shows that the element $\alpha^j$ can be uniquely represented by the $m$-tuple $(a_{j,0}, a_{j,1}, \ldots, a_{j,m-1})$ over GF($p$). 
An element in GF($p^m$) can be expressed in three forms, namely \textit{power, polynomial and $m$-tuple forms} \cite{FFMA}.

For a binary source transmission system, the transmit bit is either $(0)_2$ or $(1)_2$.
Hence, let an EP denote by $C_j = ({\alpha}^{l_{j,0}}, {\alpha}^{l_{j,1}})$, 
where $0 \le l_{j,0}, l_{j,1} \le p^m-2$ and ${\alpha}^{l_{j,0}} \neq {\alpha}^{l_{j,1}}$.
The subscript ``$j$'' of ``$l_{j,0}$'' and ``$l_{j,1}$'' stands for the $j$-th EP, 
and the subscripts ``$0$'' and ``$1$'' of ``$l_{j,0}$'' and ``$l_{j,1}$'' represent the input bits are $(0)_2$ and $(1)_2$, respectively.
Let $M$ distinct EPs denote as $C_1 = ({\alpha}^{l_{1,0}}, {\alpha}^{l_{1,1}}), 
C_2 = ({\alpha}^{l_{2,0}}, {\alpha}^{l_{2,1}}), \ldots, 
C_j = ({\alpha}^{l_{j,0}}, {\alpha}^{l_{j,1}}), \ldots, 
C_M = ({\alpha}^{l_{M,0}}, {\alpha}^{l_{M,1}})$ for $1 \le j \le M$, then the Cartesian product
\begin{equation*}
{\Psi} \triangleq C_1 \times C_2 \times \ldots C_j \times \ldots \times C_M,
\end{equation*}
of the $M$ EPs forms an $M$-user EP code ${\Psi}$ over GF($p^m$) with $2^M$ codewords. Note that \( \Psi \) is also used to denote an EP set to minimize the number of variables and simplify the notation.

For the \( M \)-user EP code \( \Psi = \{C_1, C_2, \ldots, C_M\} \), each element \( \alpha^{l_{j,0}} \) (or \( \alpha^{l_{j,1}} \)) of \( C_j \) can be expressed in an \( m \)-tuple form, where \( 1 \le j \le M \). Thus, we can construct an \( M \times m \) matrix \( {\bf G}_{\rm M}^{\bf 0} \) by arranging the \( M \) elements \( \alpha^{l_{1,0}}, \alpha^{l_{2,0}}, \ldots, \alpha^{l_{M,0}} \) as its rows, i.e., from the first row to the \( M \)-th row of \( {\bf G}_{\rm M}^{\bf 0} \), as follows:
\begin{equation}
  {\bf G}_{\rm M}^{\bf 0} = 
  \begin{bmatrix}
      \alpha^{l_{1,0}}, &
      \alpha^{l_{2,0}}, &
      \cdots, &
      \alpha^{l_{M,0}}
  \end{bmatrix}^{\rm T},
\end{equation}
where the subscript ``M'' denotes multiplexing, and the superscript ``\( {\bf 0} \)'' indicates that all \( M \) users transmit the bit \( (0)_2 \). The \( M \times m \) matrix \( {\bf G}_{\rm M}^{\bf 0} \) is referred to as the \textit{full-zero generator matrix}.

Similarly, the elements \( \alpha^{l_{1,1}}, \alpha^{l_{2,1}}, \ldots, \alpha^{l_{M,1}} \) can form another \( M \times m \) matrix \( {\bf G}_{\rm M}^{\bf 1} \), given by
\begin{equation}
  {\bf G}_{\rm M}^{\bf 1} = 
  \begin{bmatrix}
      \alpha^{l_{1,1}}, &
      \alpha^{l_{2,1}}, &
      \cdots, &
      \alpha^{l_{M,1}}
  \end{bmatrix}^{\rm T},
\end{equation}
which is referred to as the \textit{full-one generator matrix}. 
The superscript ``\( {\bf 1} \)'' indicates that all \( M \) users transmit the bit \( (1)_2 \).

\vspace{-0.1in}
\subsection{Definition of Single CWEP Codes}
For a given finite-field GF($2^m$), if the two elements ${\alpha}^{l_{j,0}}$ and ${\alpha}^{l_{j,1}}$ of the EP $C_j^{} = ({\alpha}^{l_{j,0}}, {\alpha}^{l_{j,1}})$ satisfies
${\alpha}^{l_{j,0}} = {\bf 0}$ and ${\alpha}^{l_{j,1}} \neq {\bf 0}$, 
i.e., $C_j^{} = ({\bf 0}, {\alpha}^{l_{j,1}})$, where ${\bf 0}$ is an $m$-tuple,
then the EP $C_j^{}$ is called a \textit{single codeword EP (S-CWEP)}.
The Cartesian product
\begin{equation*}
{\Psi}_{\rm cw} \triangleq C_1^{} \times C_2^{} \times \ldots \times C_M^{},
\end{equation*}
of the $M$ S-CWEPs forms an $M$-user S-CWEP code ${\Psi}_{\rm cw}$ over GF($2^m$) with $2^M$ codewords. The subscript ``cw'' indicates ``codeword''.
The full-zero generator matrix of the S-CWEP code ${\Psi}_{\rm cw}$ is an $M \times m$ zero matrix,
i.e., ${\bf G}_{\rm M}^{\bf 0} = {\bf 0}$.

\textbf{Example 1:}
For an extension field GF($2^8$) of the prime field GF($2$), it can support $M \le 8$ users.
Suppose $M = 4$, we can construct a $4$-user S-CWEP code 
${\Psi}_{\rm cw} = \{C_1^{}, C_2^{}, C_3^{}, C_{4}^{}\}$, given as
\begin{equation} \label{e.Ex2}
  \begin{aligned}
 C_1^{} = ({\bf 0}, {\alpha}^{l_{1,1}}) = (0000 0000, 1111 1111) \\
 C_2^{} = ({\bf 0}, {\alpha}^{l_{2,1}}) = (0000 0000, 0000 1111) \\
 C_3^{} = ({\bf 0}, {\alpha}^{l_{3,1}}) = (0000 0000, 0011 0011) \\
 C_4^{} = ({\bf 0}, {\alpha}^{l_{4,1}}) = (0000 0000, 0101 0101) \\
  \end{aligned},
\end{equation}
whose Cartesian product can form a $4$-user S-CWEP code ${\Psi}_{\rm cw}$ with $2^{4}= 16$ codewords.
Thus, its full-one generator matrix ${\bf G}_{\rm M}^{\bf 1}$ can be given as,
\begin{equation} \label{e.RM_8}
  {\mathbf G}_{\rm M}^{\bf 1} = \left[
  \begin{array}{c}
    {\alpha^{l_{1,1}}}\\
    {\alpha^{l_{2,1}}}\\
    {\alpha^{l_{3,1}}}\\
    {\alpha^{l_{4,1}}}\\
  \end{array} 
  \right] =
  \left[
  \begin{array}{*{16}{cccccccccccccccc}}
    1 & 1  & 1 & 1 & 1 & 1 & 1 & 1  \\
    0 & 0  & 0 & 0 & 1 & 1 & 1 & 1  \\
    0 & 0  & 1 & 1 & 0 & 0 & 1 & 1  \\
    0 & 1  & 0 & 1 & 0 & 1 & 0 & 1  \\
  \end{array}
  \right],
\end{equation}
which is a $4 \times 8$ matrix, whose rank is $4$.
$\blacktriangle \blacktriangle$

In fact, the proposed orthogonal UD-EP code $\Psi_{\rm o,B}$, constructed over $\text{GF}(2^m)$, is a special case of the S-CWEP code, where its full-one generator matrix ${\mathbf G}_{\rm M}^{\bf 1}$ is an $m \times m$ identity matrix.
For example, suppose $\Psi_{\rm o,B}$ is constructed over GF($2^4$), then its full-one generator matrix ${\mathbf G}_{\rm M}^{\bf 1}$ can be given as
\begin{equation} 
  {\mathbf G}_{\rm M}^{\bf 1} = \left[
  \begin{array}{c}
    {\alpha^{l_{1,1}}}\\
    {\alpha^{l_{2,1}}}\\
    {\alpha^{l_{3,1}}}\\
    {\alpha^{l_{4,1}}}\\
  \end{array} 
  \right] =
  \left[
  \begin{array}{*{16}{cccccccccccccccc}}
    1 & 0  & 0 & 0 \\
    0 & 1  & 0 & 0 \\
    0 & 0  & 1 & 0 \\
    0 & 0  & 0 & 1 \\
  \end{array}
  \right],
\end{equation}
which is a $4 \times 4$ identity matrix.

\subsection{Definition of Additve Inverse CWEP Codes}

For a given finite-field GF($p^m$) where $p$ is a prime larger than $2$ and $m \ge 2$, if the two elements ${\alpha}^{l_{j,0}}$ and ${\alpha}^{l_{j,1}}$ of the EP $C_j^{\rm } = ({\alpha}^{l_{j,0}}, {\alpha}^{l_{j,1}})$ satisfies
\begin{equation} \label{e.G_AIEP_def}
  {\alpha}^{l_{j,0}} \oplus {\alpha}^{l_{j,1}} = {\bf p},
\end{equation}
where $\bf p$ is an $m$-tuple whose elements are all $p$, i.e., ${\bf p} = (p, p, \ldots, p)$.
We call the EP $C_j^{\rm }$ an \textit{additive inverse CWEP (AI-CWEP)}.
The Cartesian product
\begin{equation*}
{\Psi}_{\rm ai,cw} \triangleq C_1^{} \times C_2^{} \times \ldots \times C_M^{},
\end{equation*}
of the $M$ AI-CWEPs forms an $M$-user AI-CWEP code ${\Psi}_{\rm ai,cw}$ over GF($p^m$) with $2^M$ codewords.
The full-zero and full-one generator matrices ${\bf G}_{\rm M}^{\bf 0}$ and ${\bf G}_{\rm M}^{\bf 1}$ of the AI-CWEP code ${\Psi}_{\rm ai,cw}$ satisfy the relationship
\begin{equation}
  {\bf G}_{\rm M}^{\bf 0} \oplus {\bf G}_{\rm M}^{\bf 1} = {\bf P},
\end{equation}
where ${\bf P}$ is an $M \times m$ matrix whose elements are all $p$.
Note that the full-zero generator matrix of an AI-CWEP code is not a zero matrix.

\textbf{Example 2:}
For an extension field GF($3^4$) of the prime field GF($3$), 
we can construct a $4$-user AI-CWEP code 
${\Psi}_{\rm ai,cw} = \{C_1^{}, C_2^{}, C_3^{}, C_4^{}\}$, given as
\begin{equation} \label{e.Length4}
  \begin{aligned}
 C_1^{}  = (2 2 2 2, 1 1 1 1), \quad
 C_2^{} &= (1 2 1 2, 2 1 2 1),\\
 C_3^{}  = (1 1 2 2, 2 2 1 1), \quad
 C_4^{} &= (2 1 1 2, 1 2 2 1),\\
  \end{aligned}
\end{equation}
whose Cartesian product can form an AI-CWEP code ${\Psi}_{\rm ai,cw}$ with $2^4=16$ codewords.
The full-zero and full-one generator matrices ${\bf G}_{\rm M}^{\bf 0}$ and ${\bf G}_{\rm M}^{\bf 1}$ of the AI-CWEP code $\Psi_{\rm ai,cw}$ can be given as
\begin{equation} 
{\bf G}_{\rm M}^{\bf 0} = 
\left[
  \begin{matrix}
  2 & 2 & 2 & 2 \\
  1 & 2 & 1 & 2 \\
  1 & 1 & 2 & 2 \\
  2 & 1 & 1 & 2 \\
  \end{matrix}
  \right],
  \quad  
  {\bf G}_{\rm M}^{\bf 1} = 
  \left[
  \begin{matrix}
  1 & 1 & 1 & 1 \\
  2 & 1 & 2 & 1 \\
  2 & 2 & 1 & 1 \\
  1 & 2 & 2 & 1 \\
  \end{matrix}
  \right], 
\end{equation}
which are two $4 \times 4$ matrices.
$\blacktriangle \blacktriangle$

\section{Encoding of Codeword-wise EP codes}

In this section, we assume that the sizes of the full-zero generator matrix \( {\bf G}_{\rm M}^{\bf 0} \) and the full-one generator matrix \( {\bf G}_{\rm M}^{\bf 1} \) of the EP code \( \Psi \) are both \( M \times m \). 
We begin by introducing the encoding process of a codeword-wise EP code. The input to this encoding can be either a bit or muliple bits (or a codeword), and the output is a codeword determined by the EP encoder. Specifically, we consider two modes of input: the \textit{serial mode} and the \textit{parallel mode} of the codeword-wise EP encoder. In the serial mode, the input consists of one bit, while in the parallel mode, the input is a $ K $-bit codeword.
Next, we provide a summary of the EP encoder for binary source transmission. Subsequently, we examine the output codeword-wise EP code, which is further encoded by a global channel code $ \mathcal{C}_{gc} $.
Finally, we present the framework of the FFMA with channel-coding system.

\vspace{-0.1in}
\subsection{Serial Mode: from a Bit to a Codeword}


For the serial mode, assume there are \( M \) users. In this case, the EP encoder processes one bit from each user sequentially, handling the input bit sequences one at a time.

Let ${\mathbf b}_j = (b_{j,0}, b_{j,1},\ldots, b_{j,k}, \ldots, b_{j,K-1})$ be the bit-sequence of the $j$-th user, where $1 \le j \le M$ and $0 \le k < K$.
Assume each user is assigned an EP, e.g., the EP of $C_j = (\alpha^{l_{j,0}}, \alpha^{l_{j,1}})$ is assigned to the $j$-th user.
Based on the EP $C_j$, we can encode the bit ${b}_{j,k}$ of the $j$-th user by a \textit{binary field to finite-field GF($q$) transform function} denoted by ${\mathrm F}_{{\mathrm B}2q}$, and obtain the element-sequence ${\bf u}_j = (u_{j,0},u_{j,1},\ldots, u_{j,k},\ldots, u_{j,K-1})$ whose length is $1 \times mK$.
For the $k$-th element $u_{j,k}$ of ${\bf u}_j$, we have
\begin{equation} \label{F_b2q}
  u_{j,k} = {\mathrm F}_{{\mathrm B}2q}(b_{j,k}) \triangleq b_{j,k} \odot C_j = 
  \left\{
    \begin{aligned}
      \alpha^{l_{j,0}}, \quad b_{j,k} = 0  \\
      \alpha^{l_{j,1}}, \quad b_{j,k} = 1  \\
    \end{aligned},
  \right.
\end{equation}
where $b_{j,k} \odot C_j$ is defined as a \textit{switching function} \cite{FFMA}.
If the input bit is $b_{j,k} = 0$, the transformed element is $u_{j,k} = \alpha^{l_{j,0}}$; otherwise, $u_{j,k}$ is equal to $u_{j,k} = \alpha^{l_{j,1}}$.

The \textit{finite-field multiplexing module (FF-MUX)} of the codeword-wise EP code constructed over GF($p^m$) is set to be a $M \times 1$ vector ${\mathcal A}_{\rm M} = [1, 1, \ldots, 1]^{\rm T}$ whose components are all ones \cite{FFMA}.
Then, the multiplexing is operated finite-field addition to the $k$-th elements of $M$ users (also referred to as the EP codeword), i.e., $(u_{1,k}, u_{2,k}, \ldots, u_{M,k})$, and is given by:
\begin{equation} \label{e.w_k}
  \begin{aligned}
  {w}_k = \bigoplus_{j=1}^{M} u_{j,k} = u_{1,k} \oplus u_{2,k} \oplus \ldots \oplus u_{M,k},
  \end{aligned}
\end{equation}
where ${w}_k$ is a $1 \times m$ (or an $m$-tuple) \textit{finite-field sum-pattern (FFSP)} block. 
The subscript \(k\) denotes the \(k\)-th bit of each user's contribution.

The $K$ FFSP blocks, namely ${w}_0, {w}_1, \ldots, {w}_{K-1}$, together form a $1 \times mK$ FFSP sequence:
\begin{equation} \label{e.serial_w}
  {\bf w} = ({w}_0, {w}_1, \ldots, {w}_{K-1}).
\end{equation}
It is noted that \( w \), written in italics, represents an \( m \)-tuple (or an FFSP block), while \( {\bf w} \), written in upright (roman) font, denotes a sequence consisting of several \( m \)-tuples (or FFSP blocks).
In fact, the EP encoder in serial mode corresponds to the sparse-form structure, referred to as the \textit{sparse-form FFMA (SF-FFMA)} defined in \cite{FFMA}.

In summary, the above encoding process includes two phases: one phase is the ${\mathrm F}_{{\mathrm B}2q}$ operation which transforms \textit{a bit to an $m$-tuple} based on the constructed EP code $\Psi$, and the other phase is finite-field addition operation by the FF-MUX ${\mathcal A}_{\rm M}$. The two-phases encoding process is expressed as ${\mathrm F}_{{\mathrm B}2q}/{\mathcal A}_{\rm M}$-encoding, i.e., ${\bf b}_j \mapsto {\bf u}_j \mapsto {\bf w}$.


Now, we further analyze the FFSP block given in (\ref{e.w_k}).
Let ${\bf b}[k] = (b_{1,k}, b_{2,k}, \ldots, b_{M,k})$ represent a $1 \times M$ \textit{user block} of $M$ users corresponding to the $k$-th component, where $0 \le k < K$. The generator matrix ${\mathbf G}_{\mathrm M}^{{\bf b}[k]}$ associated with ${\bf b}[k]$ is defined as
\begin{equation}
  {\mathbf G}_{\mathrm M}^{{\bf b}[k]} = 
   \left[b_{1,k} \odot C_1, b_{2,k} \odot C_2, \ldots, b_{M,k} \odot C_M \right]^{\rm T},
 \end{equation}
which forms an $M \times m$ matrix. The set of all possible combinations of ${\mathbf G}_{\mathrm M}^{{\bf b}[k]}$ constitutes a \textit{generator matrix set} ${\mathcal G}_{\mathrm M}$, i.e.,
${\mathcal G}_{\rm M} = 
\{{\mathbf G}_{\mathrm M}^{{\bf 0}}, \ldots, {\mathbf G}_{\mathrm M}^{{\bf b}[k]}, \ldots, {\mathbf G}_{\mathrm M}^{{\bf 1}}\}$,
which includes $2^M$ distinct $M \times m$ matrices, encompassing both the full-one and full-zero generator matrices.

Thus, (\ref{e.w_k}) can be rewritten as
\begin{equation} \label{e.b_kandG_M}
  \begin{aligned}
  {w}_k 
      = {\bf b}[k] \cdot {\mathbf G}_{\mathrm M}^{{\bf b}[k]} 
      \overset{(a)}{=} {\bf b}[k] \cdot {\mathbf G}_{\mathrm M}^{\bf 1} + 
        \overline{{\bf b}[k]} \cdot {\mathbf G}_{\mathrm M}^{\bf 0}
      \overset{(b)}{=} {\bf b}[k] \cdot {\mathbf G}_{\mathrm M}^{\bf 1}.
  \end{aligned}
\end{equation}
To justify step (a) in (\ref{e.b_kandG_M}), let 
$\overline{{\bf b}[k]} = (\overline{b_{1,k}}, \overline{b_{2,k}}, \ldots, \overline{b_{M,k}})$ represent the \textit{inverse user block} of ${\bf b}[k]$, where $\overline{b_{j,k}}$ is the complement of $b_{j,k}$: if $b_{j,k} = 0$, then $\overline{b_{j,k}} = 1$, and if $b_{j,k} = 1$, then $\overline{b_{j,k}} = 0$. 
This allows us to split ${\bf b}[k] \cdot {\mathbf G}_{\mathrm M}^{{\bf b}[k]}$ into two parts: ${\bf b}[k] \cdot {\mathbf G}_{\mathrm M}^{\bf 1}$ and $\overline{{\bf b}[k]} \cdot {\mathbf G}_{\mathrm M}^{\bf 0}$.

For a specific case, when the full-zero generator matrix is a zero matrix (as in the case of an S-CWEP code), we obtain step (b) in (\ref{e.b_kandG_M}). In this case, the FFSP ${w}_k$ is simply the user block ${\bf b}[k]$ encoded by a channel code with generator matrix ${\mathbf G}_{\mathrm M}^{\bf 1}$.

Hence, by utilizing the generator matrix set ${\mathcal G}_{\mathrm M}$, we can directly encode the $K$ input user blocks ${\bf b}[0], {\bf b}[1], \ldots, {\bf b}[K-1]$ into $K$ FFSP blocks ${w}_0, {w}_1, \ldots, {w}_{K-1}$. These resulting FFSP blocks can be regarded as the codewords of a \textit{multiuser code (MC)}, denoted by ${\mathcal C}_{mc}$. This encoding process is referred to as one-phase ${\mathcal G}_{\rm M}$-encoding, i.e., ${\bf b}[k] \mapsto {w}_k$.

\subsection{Parallel Mode: from a Codeword to a Codeword}

Suppose the integer $ M $ is the product of two integers $ K $ and $ J_{mc} $, i.e., $ M = K \cdot J_{mc} $.  
In the parallel mode, the input for each user is a $ K $-bit codeword, and this mode can support a maximum of $ J_{mc} $ users. The subscript ``mc'' stands for ``multiuser code''.

In the parallel mode, each bit $b_{j,k}$ of the bit-sequence ${\mathbf b}_j = (b_{j,0}, b_{j,1}, \dots, b_{j,k}, \dots, b_{j,K-1})$ for the $j$-th user is assigned a unique element from the EP, where $1 \leq j \leq J_{mc}$ and $0 \leq k < K$.
Consider that each user has $ K $ bits, so each user is assigned $ K $ distinct EPs. Specifically, for the $ j $-th user, the assigned $ K $ EPs are given by:
$
C_{(j-1) \cdot K} = (\alpha^{l_{(j-1) \cdot K,0}}, \alpha^{l_{(j-1) \cdot K,1}}),
C_{(j-1) \cdot K+1} = (\alpha^{l_{(j-1) \cdot K+1,0}}, \alpha^{l_{(j-1) \cdot K+1,1}}),
\dots,
C_{(j-1) \cdot K+k} = (\alpha^{l_{(j-1) \cdot K+k,0}}, \alpha^{l_{(j-1) \cdot K+k,1}}),
\dots,
C_{(j-1) \cdot K+(K-1)} = (\alpha^{l_{(j-1) \cdot K+(K-1),0}}, \alpha^{l_{(j-1) \cdot K+(K-1),1}}).
$
For the $ k $-th bit of the $ j $-th user, the corresponding transformed element is defined as:
\begin{equation} \label{F_pll_b2q}
  u_{j,k} = {\mathrm F}_{{\mathrm B}2q}(b_{j,k}) \triangleq 
                           b_{j,k} \odot C_{(j-1) \cdot K+k} = 
  \left\{
    \begin{aligned}
      \alpha^{l_{(j-1) \cdot K+k,0}}, \quad b_{j,k} = 0  \\
      \alpha^{l_{(j-1) \cdot K+k,1}}, \quad b_{j,k} = 1  \\
    \end{aligned}.
  \right.
\end{equation}
Thus, if the input bit $ b_{j,k} = 0 $, the corresponding transformed element is $ u_{j,k} = \alpha^{l_{(j-1) \cdot K+k,0}} $; otherwise, if $ b_{j,k} = 1 $, then $ u_{j,k} = \alpha^{l_{(j-1) \cdot K+k,1}} $.
We then sum the \( K \) transformed elements together to obtain the output element \( {\bf c}_j \) for the \( j \)-th user, as given by
\begin{equation}
  {\bf c}_j = \bigoplus_{k=0}^{K-1} u_{j,k},
\end{equation}
where \( 1 \le j \le J_{mc} \), and \( {\bf c}_j \) is also an \( m \)-tuple. 
In fact, the output element \( {\bf c}_j \) is also the \textit{FFSP block} of the \( K \) elements. 
Similarly, the EP encoder in parallel mode corresponds to the diagonal-form structure, known as the \textit{diagonal-form FFMA (DF-FFMA)} defined in \cite{FFMA}.

Next, multiplexing is applied to the \( J_{mc} \) users, which is equivalent to the sum of the \( K \) elements of the \( J_{mc} \) users in parallel mode. Let the \textit{parallel user block} be denoted as \( \mathbf{b}_{\rm pll} = (\mathbf{b}_1, \mathbf{b}_2, \dots, \mathbf{b}_{J_{mc}}) \), a \( 1 \times (K \cdot J_{mc}) \) vector, which can also be represented as a \( 1 \times M \) vector.
Suppose \( \overline{\mathbf{b}_{\rm pll}} = (\overline{\mathbf{b}_1}, \overline{\mathbf{b}_2}, \dots, \overline{\mathbf{b}_{J_{mc}}}) \) represents the \textit{inverse parallel user block} of \( \mathbf{b}_{\rm pll} \). The subscript ``pll'' denotes ``parallel''. Then, the corresponding FFSP block of $J_{mc}$ users can be expressed as:
\begin{equation} \label{e.pll_w_k}
  \begin{aligned}
  {w} = \bigoplus_{j=1}^{J_{mc}} {\bf c}_{j}
          = \bigoplus_{j=1}^{J_{mc}} \bigoplus_{k=0}^{K-1}  u_{j,k} 
 \overset{(a)}{=} {\bf b}_{\rm pll} \cdot {\mathbf G}_{\mathrm M}^{\bf 1} + 
       \overline{{\bf b}_{\rm pll}} \cdot {\mathbf G}_{\mathrm M}^{\bf 0}
 \overset{(b)}{=} {\bf b}_{\rm pll} \cdot {\mathbf G}_{\mathrm M}^{\bf 1}.
  \end{aligned}
\end{equation}
The derivations of step (a) in \eqref{e.pll_w_k} is similar to that in the serial mode, and will not be repeated here. 
For the S-CWEP code, we can obtain step (b) in (\ref{e.pll_w_k}), where the FFSP block ${w}$ is the parallel user block ${\bf b}_{\rm pll}$ encoded by the generator matrix ${\mathbf G}_{\mathrm M}^{\bf 1}$.

If there are a total of \( M \) users, these users can be grouped into \( K = \frac{M}{J_{mc}} \) distinct groups, with each group containing \( J_{mc} \) users. Each user within a group contributes a \( K \)-bit, and the resulting parallel user block, denoted as \( {\bf b}_{\rm pll}^{(t)} \), is encoded by the generator matrix set ${\mathcal G}_{\mathrm M}$ of the EP code $\Psi_{\rm cw}$. The superscript ``\( (t) \)'' indicates the \( t \)-th data group, where \( 1 \le t \le K \). The resulting FFSP sequence \( {\bf w} \) consists of \( K \) FFSP blocks, i.e.,
\begin{equation} \label{e.parallel_w}
  {\bf w} = ({w}^{(1)}, {w}^{(2)}, \ldots, {w}^{(t)}, \ldots, {w}^{(K)}),
\end{equation}
where the total length of the sequence ${\bf w}$ is \( 1 \times mK \), which is identical to the length in the serial mode. 
The superscript ``\( (t) \)'' also denotes the \( t \)-th FFSP block, where \( 1 \le t \le K \).

\begin{figure}[t] 
  \centering
  \includegraphics[width=0.99\textwidth]{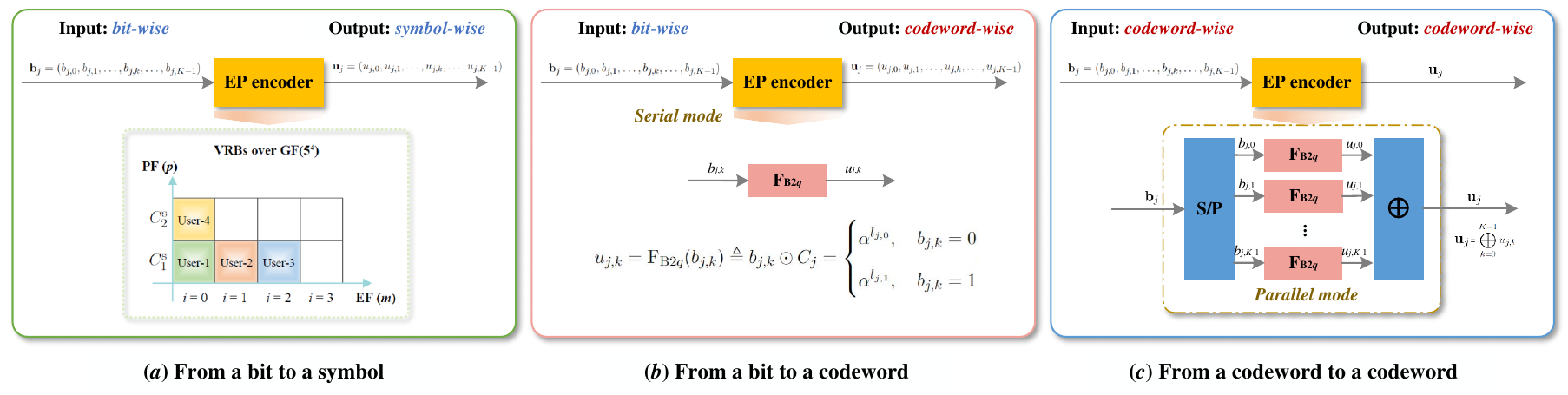}
 \caption{EP Encoder structure. (a) From a bit to a symbol. (b) From a bit to a codeword. (c) From a codeword to a codeword.} 
 \label{f_EP_Encoder}
 \vspace{-0.3in}
\end{figure}

\subsection{Overview of the EP Encoder}
Following \cite{FFMA} and this paper, we provide a summary of the EP encoder. Specifically, the input to an EP encoder can either be a bit or a codeword, while the output can either be a symbol or a codeword, as shown in Fig. \ref{f_EP_Encoder}.
When the output of the EP encoder is a symbol, we refer to the corresponding EP code as a \textit{symbol-wise} EP code. Conversely, when the output is a codeword, we call the corresponding EP code a \textit{codeword-wise EP} code.

In \cite{FFMA}, the input to the EP encoder is bit-wise, and the output is a symbol. The constructed EP codes are symbol-wise EP codes, which are based on a prime field GF($p$) and/or its extension field GF($p^m$). When \( p > 2 \), the number of EPs, also called virtual resource blocks (VRBs), is bounded by \( m \cdot \log_2(p-1) \) \cite{FFMA}. When \( p = 2 \), the number of EPs (or VRBs) is simply \( m \).
In this paper, the output of the EP encoder is a codeword, and the input can either be a bit or a codeword. This results in two modes for the EP encoder: serial mode and parallel mode, as discussed above.

In addition, distinct users and bits are assigned unique EPs in finite fields. At the receiver end, the identities of both users and bits are simultaneously recovered by decoding the FFSP. This leads to the following corollary (Corollary \ref{Coro_user_bit}) and theorem (Theorem \ref{Theorem_EP_encoder}).

\begin{corollary} \label{Coro_user_bit}
  In an FFMA system, since both users and bits are uniquely identified by EPs, the terms ``user'' and ``bit'' are interchangeable and play equivalent roles in finite field.
\end{corollary}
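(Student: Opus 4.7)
The plan is to exploit the bijection between EPs and elementary transmissions that is already implicit in Sections III.A and III.B, and then to observe that the finite-field multiplexing treats the two roles symmetrically. In both encoding modes, every bit that contributes to a single FFSP block is tagged by exactly one EP drawn from $\Psi$, and every EP in $\Psi$ corresponds to exactly one bit of exactly one user. My approach would therefore be to spell out this bijection rigorously and then to argue that the sum-pattern produced by the FF-MUX depends only on the multiset of (EP, bit-value) pairs, not on how those pairs are grouped into ``users.''

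First I would make the correspondence explicit in each mode. In serial mode at time index $k$, (\ref{e.w_k}) expresses $w_k$ as the finite-field sum of $M$ elements, one per user, drawn from the $M$ distinct EPs $C_1, \ldots, C_M$; this yields a bijection between $\Psi$ and the set $\{b_{1,k}, \ldots, b_{M,k}\}$ of bits multiplexed in that block. In parallel mode, (\ref{F_pll_b2q}) already assigns a distinct EP $C_{(j-1)K+k}$ to each bit $b_{j,k}$ of each user, so the same bijection between the $M = K \cdot J_{mc}$ EPs and the $M$ transmitted bits holds, only now the partition of EPs into users has block size $K$ instead of $1$.

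Second I would invoke the symmetry of the FF-MUX. Because ${\mathcal A}_{\mathrm M} = [1, 1, \ldots, 1]^{\rm T}$ has identical entries, the FFSP block produced by (\ref{e.w_k}) or (\ref{e.pll_w_k}) is invariant under any permutation of the EPs, including permutations that cross the user-boundaries induced by the partition. Equivalently, if one relabels the EPs so that what was a ``user index'' becomes a ``bit index'' (or vice versa), the algebraic form of $w_k$, the generator-matrix set ${\mathcal G}_{\mathrm M}$, and hence the decoding problem at the receiver are all preserved. Therefore any decoder that recovers bits from $w_k$ simultaneously recovers user identities, and any decoder that recovers user identities simultaneously recovers bits, establishing the interchangeability asserted in the corollary.

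The main obstacle, as I see it, is not computational but definitional: the statement is a structural observation rather than a hard theorem, and the care required is in making precise what ``interchangeable'' means. I would address this by phrasing interchangeability as the invariance of the FFMA encoding/decoding map under relabeling of the EP index set, with the user/bit partition being a derived, rather than intrinsic, structure. Once that formulation is fixed, the proof reduces to the bijection argument and the symmetry of the all-ones FF-MUX above.
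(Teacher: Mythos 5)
Your argument is correct and takes essentially the same approach as the paper: the paper gives no formal proof, treating the corollary as an immediate consequence of the observation stated just before it, namely that distinct users and bits are each assigned unique EPs (one EP per user in serial mode, one EP per bit in parallel mode, cf.\ (\ref{F_b2q}) and (\ref{F_pll_b2q})) and that both identities are recovered simultaneously by decoding the FFSP. Your explicit EP--bit bijection together with the permutation symmetry of the all-ones FF-MUX ${\mathcal A}_{\rm M}$ is simply a careful spelling-out of that same observation.
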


\begin{theorem} \label{Theorem_EP_encoder}
  (\textbf{Upper Bound on Bits})  
  Let \( \Psi \) be an EP code constructed over \( \text{GF}(p^m) \), where the generator matrix in the generator matrix set \( \mathcal{G}_{\rm M} \) has size \( M \times m \), with $M$ representing the number of EPs, \( p = 2, 3 \), and \( m \geq 2 \). Suppose there are \( J_{mc} \) users, and each user transmits \( K \) bits. Then, the relationship between the number of users \( J_{mc} \) and the number of bits per user \( K \) is given by:
  \begin{equation}
     J_{mc} \cdot K \le M,
  \end{equation} 
  where the inequality is constrained by the number of EPs.
\end{theorem}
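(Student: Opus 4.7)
My plan is to prove this bound by a direct counting argument built on Corollary~\ref{Coro_user_bit}. First I would interpret the setting: by Corollary~\ref{Coro_user_bit}, inside the finite field a ``user'' and a ``bit'' play the same role, since both are identified only through their assigned EP. Hence every bit that appears in the transmission must consume a distinct EP from the EP code $\Psi$, regardless of whether the bit belongs to the same user as another bit.

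Second, I would count the \emph{demand}. If $J_{mc}$ users each contribute a $K$-bit codeword, then the total number of scalar bits that must be separately identifiable at the receiver is $J_{mc} \cdot K$, and Corollary~\ref{Coro_user_bit} forces each of these to be bound to its own EP. This is precisely the assignment made explicit in (\ref{F_pll_b2q}) of the parallel mode, where the $k$-th bit of the $j$-th user is bound to the EP $C_{(j-1)\cdot K + k}$, and the $K \cdot J_{mc}$ indices exhaust distinct EPs of $\Psi$.

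Third, I would count the \emph{supply}. The hypothesis states that every matrix in the generator matrix set $\mathcal{G}_{\rm M}$ has size $M \times m$, so $\Psi$ contains exactly $M$ EPs, one per row of ${\bf G}_{\rm M}^{\bf 1}$ (equivalently ${\bf G}_{\rm M}^{\bf 0}$). Comparing demand to supply then yields the desired inequality $J_{mc} \cdot K \le M$.

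The main subtlety is not computational but interpretational: the claim rests entirely on the one-to-one correspondence between (user, bit) pairs and EP indices, which is exactly what Corollary~\ref{Coro_user_bit} asserts. Once that correspondence is accepted, the inequality reduces to the elementary fact that one cannot pack more than $M$ distinct indices into a set of cardinality $M$. I would close by observing that the bound is tight: it is saturated in the parallel mode whenever $M = K \cdot J_{mc}$, which is consistent with the dimensioning used earlier in Section~III-B.
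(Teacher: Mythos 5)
Your counting argument is correct and coincides with the paper's own (implicit) justification: the paper states this theorem without a formal proof, relying exactly on the remark that distinct users and bits are assigned unique EPs (Corollary~\ref{Coro_user_bit}) together with the parallel-mode assignment in (\ref{F_pll_b2q}), so that $J_{mc}\cdot K$ distinct EP indices must fit into the $M$ rows of the generator matrices. Your demand-versus-supply phrasing and the tightness observation at $M = K\cdot J_{mc}$ add nothing beyond, but also nothing in conflict with, the paper's reasoning.
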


\vspace{-0.1in}
\subsection{Channel Encoding on EP-coding}

In this subsection, we analyze the effect of channel codes on EP-coding, which can be viewed as the concatenation of an EP code \( \Psi \) and a channel code \( {\mathcal C}_{gc} \). The subscript ``gc'' of \( {\mathcal C}_{gc} \) stands for ``global channel code'', indicating that the channel code is applied throughout the entire transmission.

The effect of channel codes on EP-coding, as determined by the locations in the extension field, has been previously discussed in \cite{FFMA}. In \cite{FFMA}, the EPs (or VRBs) are determined based on their positions, and the elements \( {\bf u}_j \) produced by the EP encoder are passed to the channel code \( {\mathcal C}_{gc} \), resulting in the encoded codeword \( {\bf v}_j = {\bf u}_j \cdot {\bf G}_{gc} \).
Suppose the codeword \( {\bf v} \) is the FFSP block encoded by the channel code \( {\mathcal C}_{gc} \), i.e., 
${\bf v} = \left( \bigoplus_{j=1}^J {\bf u}_j \right) \cdot {\bf G}_{gc}$.
In \cite{FFMA}, it was shown that ${\bf v} = \sum_{j=1}^{J} {\bf v}_j$.
Thus, in this paper, we focus exclusively on the case where the EPs are located at the same positions and analyze the effect of the channel codes.

Suppose the EP code \( \Psi \) is constructed over \( \text{GF}(p^m) \), with the generator matrix from the generator matrix set \( \mathcal{G}_{\rm M} \) having sizes \( M \times m \), where \( M \) is an integer, \( p = 2 \) or \( 3 \), and \( m \geq 2 \). 
In this paper, we focus on the parallel mode of the EP encoder for analysis, as the serial mode can be considered a special case of the parallel mode. Therefore, let the EP code \( \Psi \) support \( J_{mc} \) users, each transmitting \( K \) bits, with the assumption that \( M = J_{mc} \cdot K \).
Next, after passing the bit-sequence \( \mathbf{b}_j \) of the \( j \)-th user to the encoder in parallel mode, we obtain the corresponding element \( \mathbf{c}_j \) for the \( j \)-th user, which is the sum of \( K \) elements $(u_{j,0}, u_{j,1}, ..., u_{j,K-1})$, i.e., 
${\bf c}_j = \oplus_{k=0}^{K-1} u_{j,k}$, where \( \mathbf{c}_j \) is an \( m \)-tuple.

To simplify the computational complexity, we assume that an $(N, mT)$ linear block channel code \( \mathcal{C}_{gc} \) is constructed over \( \text{GF}(p) \), where \( T \) and \( N \) are integers. Here, \( mT \) represents the length of the information section, which can be partitioned into \( T \) data blocks, each of length \( m \), and \( N \) is the length of the codeword of \( \mathcal{C}_{gc} \). Let \( R \) denote the length of the parity section, where \( R = N - mT \).

Let the generator matrix of the channel code \( \mathcal{C}_{gc} \) be an \( mT \times N \) matrix \( \mathbf{G}_{gc} \), i.e., \( \mathbf{G}_{gc} = [{\bf g}_1, {\bf g}_2, \ldots, \\ {\bf g}_t, ..., {\bf g}_T]^{\rm T} \), where \( \mathbf{G}_{gc} \) is a \( T \times 1 \) array, and each array \( \mathbf{g}_t \) is an \( m \times N \) matrix over \( \text{GF}(p) \) for \( 1 \leq t \leq T \). In other words, each array \( \mathbf{g}_t \) corresponds to the data block \( t \) of the information section.

In this paper, we assume that \( T \geq 1 \), meaning that the element-sequence of the \( j \)-th user, \( \mathbf{c}_j \), is always smaller than or equal to the length of the information section of the global channel code, \( mT \). We further assume that all \( J_{mc} \) users are located at the first block (\( t=1 \)) of the information section. 
Therefore, we append zeros to the element-sequence \( \mathbf{c}_j \) and define the transmitted information section of the \( j \)-th user as \( \mathbf{c}_{j, \rm D}^{(1)} = (\mathbf{c}_j, \mathbf{0}) \), where \( \mathbf{0} \) is a \( 1 \times (T-1)m \) zero vector. This ensures that the length of the transmitted information section becomes \( mT \), which can then be encoded using the global channel code \( \mathcal{C}_{gc} \).
The superscript ``$(1)$" indicates that the \( j \)-th user is located at the first data block of the information section of the channel code \( \mathcal{C}_{gc} \); and the subscript ``D'' repsrents the ``diagonal-form'' as defined in \cite{FFMA}.
Consequently, the encoded codeword \( \mathbf{v}_j^{(1)} \) of the \( j \)-th user is given by:
\begin{equation}
    \begin{aligned}
   {\bf v}_j^{(1)} = {\bf c}_{j,\rm D}^{(1)} \cdot {\bf G}_{gc}\\
   \end{aligned}.
\end{equation}

Thus, the sum of $J_{mc}$-user's codewords ${\bf v}_{1}^{(1)}, {\bf v}_{2}^{(1)}, \ldots, {\bf v}_{J_{mc}}^{(1)}$ are given as (\ref{e.ChannelCode_deduction}).
\begin{small}
\begin{equation} \label{e.ChannelCode_deduction}
    \begin{aligned}
   {\bf v}_{sum} =&\bigoplus_{j=1}^{J_{mc}} {\bf v}_{j}^{(1)}  
   = \bigoplus_{j=1}^{J_{mc}} \left[{\bf c}_{j, \rm D}^{(1)} \cdot {\bf G}_{gc} \right] 
   = \bigoplus_{j=1}^{J_{mc}} \left[({\bf c}_{j}, {\bf 0}, ..., {\bf 0}) \cdot 
                  [{\bf g}_{1}, {\bf g}_{2}, ..., {\bf g}_{T}]^{\rm T} \right]
   = \bigoplus_{j=1}^{J_{mc}} {\bf c}_{j} \cdot {\bf g}_{1} \\
   =& ({\bf c}_1 \oplus {\bf c}_2 \oplus ... \oplus {\bf c}_j \oplus ... \oplus {\bf c}_{J_{mc}} ) \cdot {\bf g}_{1} \\
   \overset{(a)}{=}& {w}^{(1)} \cdot {\bf g}_{1} 
   \overset{(b)}{=} {\bf w}_{\rm D}^{(1)} \cdot {\bf G}_{gc} = {\bf v}^{(1)}.
   \end{aligned}
\end{equation}
\end{small}
We can deduce (a) from equation (\ref{e.ChannelCode_deduction}) as follows: ${w}^{(1)}= \oplus_{j=1}^{J_{mc}} \mathbf{c}_{j} = \mathbf{c}_{1} \oplus \mathbf{c}_{2} \oplus \cdots \oplus \mathbf{c}_{J_{mc}}$,
and deduce (b) based on the assumption that \( \mathbf{w}_{\rm D}^{(1)} = ({w}^{(1)}, \mathbf{0}) \), where \( \mathbf{0} \) is a \( 1 \times m(T-1) \) zero vector and ${\bf w}_{\rm D}^{(1)}$ is a $1 \times mT$ vector.
In equation (\ref{e.ChannelCode_deduction}), \( \mathbf{v}^{(1)} \) represents the FFSP sequence \( \mathbf{w}_{\rm D}^{(1)} \) encoded by the generator \( \mathbf{G}_{gc} \).

From equation (\ref{e.ChannelCode_deduction}), we observe that the sum of the codewords \( \mathbf{v}_{1}^{(1)}, \mathbf{v}_{2}^{(1)}, \dots, \mathbf{v}_{J_{mc}}^{(1)} \) for the \( J_{mc} \)-user system is equal to the codeword \( \mathbf{v}^{(1)} \), which represents the FFSP sequence \( \mathbf{w}^{(1)} \) encoded by the generator \( \mathbf{G}_{gc} \), i.e.,
$\mathbf{v}^{(1)} = \bigoplus_{j=1}^{J_{mc}} \mathbf{v}_j^{(1)}$.

In summary, based on the results from \cite{FFMA} and this paper, we observe that whether the users are located at different positions or at the same location, the encoded codeword of FFSP is always equal to the sum of the individual users' codewords. This result can be generalized to a case where users simultaneously occupy both the same and different locations, a further deduction of which is not provided here.

Furthermore, due to the properties of finite fields, we can interchange the order of operations for finite-field addition \( {\mathcal A}_{\rm M} \) and the channel encoder \( \mathbf{G}_{gc} \), leading to the same encoded codeword. Consequently, the concatenation of the EP code \( \Psi \) and the channel code \( {\mathcal C}_{gc} \) can be implemented in two distinct ways:

\begin{enumerate}
  \item As a three-phase encoding process of \( {\rm F}_{{\rm B}2q}/{\mathcal A}_{\rm M}/{\mathbf G}_{gc} \), represented by the transformation \( \mathbf{b}_j \mapsto \mathbf{u}_j \mapsto \mathbf{w} \mapsto \mathbf{v} \).
  \item As a three-phase encoding process of \( {\rm F}_{{\rm B}2q}/{\mathbf G}_{gc}/{\mathcal A}_{\rm M} \), represented by the transformation \( \mathbf{b}_j \mapsto \mathbf{u}_j \mapsto \mathbf{v}_j \mapsto \mathbf{v} \).
\end{enumerate}

The first one, \( {\rm F}_{{\rm B}2q}/{\mathcal A}_{\rm M}/{\mathbf G}_{gc} \)-encoding, is suitable for decoding at the receiver. 
In contrast, the second one, \( {\rm F}_{{\rm B}2q}/{\mathbf G}_{gc}/{\mathcal A}_{\rm M} \)-encoding, is typically used for encoding at the transmitters of \( J \) users.

\vspace{-0.1in}
\subsection{Framework of an FFMA system with Channel Coding}
In this subsection, we present the framework of an FFMA system with channel coding. As described in \cite{FFMA}, the key idea behind an FFMA system is to exchange the order of channel code and MUX modules. Specifically, for an FFMA system, an EP code serves as the input to an FF-MUX, followed by a channel code. Thus, the overall framework of the FFMA system is primarily determined by the interaction between these two codes: the EP code, denoted as $\Psi$, and the channel code, represented by ${\mathcal C}_{gc}$.

Suppose the EP code $\Psi$ is constructed over $\text{GF}(p^m)$, where the generator matrix in the set ${\mathcal G}_{\rm M}$ has sizes $M \times m$, with $M$ being an integer, $p = 2, 3$, and $m \geq 2$. Each user transmits $K$ bits, with $1 < K < m$. The channel code $\mathcal{C}_{gc}$ is an $(N, mT)$ linear block code over $\text{GF}(p)$, where $T$ and $N$ are integers. We assume that $T > K$, so that the length of the information section of the channel code, $mT$, is longer than the length of the EP code in serial mode, $mK$.

As presented in \cite{FFMA}, the channel code in systematic form offers better error performance. Therefore, without further elaboration, this paper assumes that the generator matrix of the channel code is in systematic form, denoted as ${\bf G}_{gc}$, instead of ${\bf G}_{gc,sym}$ as defined in \cite{FFMA}. Furthermore, the systematic form of the channel code allows the FFMA system to operate in a diagonal-form structure.

First, the framework of an FFMA system can be divided into two sections: the information section and the parity section, which correspond to the sections of the channel code ${\mathcal C}_{gc}$. The information section consists of $T$ data blocks, each of length $m$, while the parity block length is $R = N - mT$. The $T$ data blocks can be treated as $T$ time slots, similar to the structure of slotted ALOHA. The indices of the data blocks range from $1, 2, \ldots, t, \ldots, T$.

Next, each data block can be assigned an EP code, which differs from the approach in \cite{FFMA}. In \cite{FFMA}, a data block is assigned a bit-sequence. Given $T$ data blocks, we can assign $T$ EP codes corresponding to these blocks.
Suppose the FFMA system with channel coding supports $J$ users. Each EP encoder can operate in either serial or parallel mode to serve the users.
\begin{enumerate}
  \item 
  In serial mode, we assume that $M$ users and each transmitting a $1$-bit share the same EP code (or data block). Each user transmits $K$ bits, and the $M$ users share the same $K$ data blocks. Let the indices of the $K$ data blocks be $t, t+1, \dots, t + (K-1)$. The corresponding arrays from ${\mathcal C}_{gc}$, i.e., ${\bf g}_t, {\bf g}_{t+1}, \dots, {\bf g}_{t+(K-1)}$, are assigned to these blocks. Thus, the information section can be viewed as a $\frac{T}{K} \times \frac{T}{K}$ diagonal array, with each array of size $M \times mK$. 
  \item 
  In parallel mode, we assume that $J_{mc}$ users and each transmitting $K$ bits share the same EP code (or data block). Given that $K < m$, the $J_{mc}$ users share a single data block, whose index is set to be $t$. The $t$-th array from ${\mathcal C}_{gc}$, defined as ${\bf g}_t$, is assigned to the $J_{mc}$ users. In this case, the information section can be represented as a $T \times T$ diagonal array, with each array of size $J_{mc} \times m$.
\end{enumerate}

It is important to determine the maximum number of users $J$ that can be served by the FFMA system. Referring to Corollary \ref{Coro_user_bit} and Theorem \ref{Theorem_EP_encoder}, we can derive the following result, stated in Theorem \ref{Theorem_EP_ChannelCode}.
\begin{theorem} \label{Theorem_EP_ChannelCode}
  (\textbf{Upper Bound on Users})
  Consider an FFMA system with channel coding, where the channel code $\mathcal{C}_{gc}$ is an $(N, mT)$ linear block code over $\text{GF}(p)$, with $T$ and $N$ being integers. The EP code $\Psi$ is constructed over $\text{GF}(p^m)$, where the generator matrix in the set ${\mathcal G}_{\rm M}$ has sizes $M \times m$, with $M$ representing the number of EPs, $p = 2, 3$ and $m \ge 2$. Suppose each user transmits $K$ bits. Then, the maximum number of users that can be served is given by:
  \begin{equation}
    J  \le \frac{M \cdot T}{K}.
  \end{equation}
\end{theorem}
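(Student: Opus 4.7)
The plan is to lift the single-block capacity bound of Theorem~\ref{Theorem_EP_encoder} to the full $T$-block structure of the channel code by applying it to each data block and summing. First, I would invoke the framework of Section III-E: the information section of ${\mathcal C}_{gc}$ has length $mT$ and decomposes into $T$ disjoint data blocks, each of length $m$, so each block can be treated as a standalone FFMA slot hosting its own copy of an EP code whose generator matrix has dimensions $M \times m$. The number of users one such block can support is then governed entirely by the single-block analysis of Theorem~\ref{Theorem_EP_encoder}.

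Next, I would apply Theorem~\ref{Theorem_EP_encoder} block-by-block in each of the two modes. In parallel mode, a single data block serves $J_{mc}$ users transmitting $K$ bits each, and Theorem~\ref{Theorem_EP_encoder} directly yields $J_{mc} K \le M$, hence at most $M/K$ users per block. In serial mode, by Corollary~\ref{Coro_user_bit} the roles of ``user'' and ``bit'' are interchangeable in the finite field, and the $M$ one-bit slots contributed by a single block aggregate across $K$ consecutive blocks into $M$ distinct $K$-bit users shared by those blocks, which averages to the same per-block rate of $M/K$ users. Both modes therefore give a per-block user capacity of at most $M/K$.

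Finally, since the $T$ data blocks are disjoint segments of the information section of ${\mathcal C}_{gc}$ and carry independent EP instances, the total number of supportable users is bounded by the sum of the per-block capacities, giving $J \le T \cdot (M/K) = MT/K$. The main obstacle is the mild bookkeeping of reconciling serial mode (which groups $K$ blocks per user) with parallel mode (which confines each user to one block) so that both lead to the same aggregate bound; this is precisely what Corollary~\ref{Coro_user_bit} is designed to handle, and it also clarifies why the statement is an inequality rather than an equality (equality requires $K \mid T$, otherwise the unused remainder blocks contribute only a floor term). Everything else reduces to Theorem~\ref{Theorem_EP_encoder} and the disjointness of the $T$ data blocks of ${\mathcal C}_{gc}$.
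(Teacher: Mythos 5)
Your proposal is correct and follows essentially the same route as the paper, which states the result by referring to Corollary~\ref{Coro_user_bit} and Theorem~\ref{Theorem_EP_encoder} within the framework where the information section of $\mathcal{C}_{gc}$ splits into $T$ data blocks of length $m$, each hosting an EP code bounded by $J_{mc}\cdot K \le M$ (serial mode grouping $M$ users over $K$ blocks, parallel mode confining $M/K$ users per block, exactly as in Example~3). Your block-by-block aggregation and the divisibility remark about when the bound is tight simply make explicit what the paper leaves as an immediate consequence.
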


\begin{figure}[t] 
  \centering
  \includegraphics[width=0.6\textwidth]{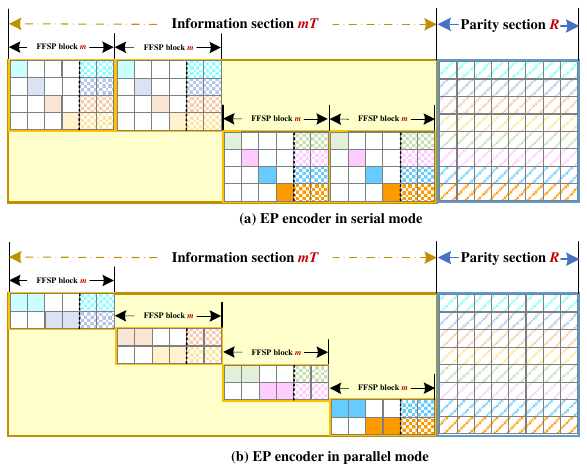}
 \caption{A diagram of Example 3 to show the framework of an FFMA system, where the total number of users is \( J = 8 \) and the number of bits per user is \( K = 2 \). 
 (a) EP encoder in serial mode;
 (b) EP encoder in parallel mode.
 } 
 \label{f_EP_ChannelCode}
 \vspace{-0.3in}
\end{figure}

\textbf{Example 3:}
We provide an example to illustrate the framework of an FFMA system with channel coding. Assume the size of the full-one generator matrix \( {\bf G}_{\rm M}^{\bf 1} \) of the EP encoder \( \Psi \) is \( 4 \times 6 \), and the size of the generator matrix \( {\bf G}_{gc} \) of the linear \( (32, 24) \) block channel code \( \mathcal{C}_{gc} \) is \( 24 \times 32 \). Consequently, the number of data blocks is \( T = 24 / 6 = 4 \). Suppose each user transmits \( K = 2 \) bits. The maximum number of supported users is then calculated as
\begin{equation*}
  J = \frac{M \cdot T}{K} = \frac{4 \times 4}{2} = 8.
\end{equation*}

Next, we organize the \( 8 \) users, each with \( 2 \) bits, into the framework.
If the EP encoder operates in serial mode, the framework of the proposed FFMA system is shown in Fig. \ref{f_EP_ChannelCode} (a). If the EP encoder operates in parallel mode, the framework is shown in Fig. \ref{f_EP_ChannelCode} (b). In both serial and parallel modes, a total of \( 16 \) bits can be accommodated.
$\blacktriangle \blacktriangle$

\section{Unique Sum-pattern Mapping Constraints}
There are various finite fields, which provide both flexibility and complexity in designing EP codes. In this paper, we focus on two types of codeword-wise EP codes: one is the S-CWEP code over \( \text{GF}(2^m) \), and the other is the AI-CWEP code over \( \text{GF}(3^m) \). We introduce the \textit{unique sum-pattern mapping (USPM) structural property} constraints for these two types of codeword-wise EP codes, one by one. Based on the USPM constraint, we can construct uniquely decodable S-CWEP (UD-S-CWEP) codes, which will be introduced in the following section.

\vspace{-0.15in}
\subsection{USPM constraint of S-CWEP code over GF($2^m$)}

As discussed earlier, for the EP encoder in serial mode, the FFSP of an S-CWEP code is simply the user block \( \mathbf{b}[k] \) encoded by a channel code with generator matrix \( \mathbf{G}_{\mathrm{M}}^{\mathbf{1}} \), i.e.,
\(
w_k = \mathbf{b}[k] \cdot \mathbf{G}_{\mathrm{M}}^{\mathbf{1}}.
\)
For the EP encoder in parallel mode, the FFSP of an S-CWEP code is given by
\(
w = \mathbf{b}_{\mathrm{pll}} \cdot \mathbf{G}_{\mathrm{M}}^{\mathbf{1}}.
\)
Therefore, investigating the properties of the full-one generator matrix \( \mathbf{G}_{\mathrm{M}}^{\mathbf{1}} \) is crucial for constructing UD-S-CWEP codes.
Next, we introduce the \textit{USPM structural property constraint} for the S-CWEP code over \( \text{GF}(2^m) \) in Theorem \ref{USPM_2m}.
It is important to note that, according to Corollary \ref{Coro_user_bit}, there is no distinction between \( \mathbf{b}[k] \) and \( \mathbf{b}_{\mathrm{pll}} \) in finite fields. Therefore, we proceed by using the serial mode to derive the USPM constraints.

\begin{theorem} \label{USPM_2m}
(\textbf{USPM constraint of S-CWEP codes over GF($2^m$)})
For an extension field GF($2^m$) where $m \ge 2$, a S-CWEP code $\Psi_{\rm cw} = \{C_1^{}, C_2^{}, \ldots, C_M^{}\}$ is utilized to support an $M$-user FFMA system, with $M \le m$. To ensure unique decoding of the FFSP block $w_k$, there must be a one-to-one mapping between the user block $\mathbf{b}[k]$ and the FFSP block $w_k$, i.e., $\mathbf{b}[k] \leftrightarrow w_k$. This indicates that the code possesses a unique sum-pattern mapping structural property. The $M \times m$ full-one generator matrix of the multiuser code, denoted as $\mathbf{G}_{\mathrm{M}}^{\mathbf{1}} = \left[\alpha^{l_{1,1}}, \alpha^{l_{2,1}}, \cdots, \alpha^{l_{M,1}}\right]^{\rm T}$, should have full row rank, which can be expressed as
  \begin{equation} \label{e.thereom1}
    {\rm Rank}\left({\bf G}_{\rm M}^{\bf 1}\right) = M,
  \end{equation}
indicating that $\alpha^{l_{1,1}}, \alpha^{l_{2,1}}, \cdots, \alpha^{l_{M,1}}$ are linearly independent. We refer to $\Psi_{\rm cw}$ as a uniquely decodable S-CWEP (UD-S-CWEP) code.
\end{theorem}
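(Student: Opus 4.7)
The plan is to reduce Theorem~\ref{USPM_2m} to the standard linear-algebra fact that a linear map is injective if and only if the matrix implementing it has full row rank. First I would invoke step (b) of equation~(\ref{e.b_kandG_M}), which applies precisely because an S-CWEP code has $\mathbf{G}_{\rm M}^{\bf 0}=\mathbf{0}$, and yields the compact formula $w_k = \mathbf{b}[k]\cdot \mathbf{G}_{\rm M}^{\bf 1}$. Before any argument, I would clarify the underlying vector-space setting: $\mathrm{GF}(2^m)$ is an $m$-dimensional vector space over $\mathrm{GF}(2)$, each row $\alpha^{l_{j,1}}$ is identified with its $m$-tuple representation from (\ref{e2.2}), and the finite-field addition ``$\oplus$'' used to form $w_k$ is exactly the coordinate-wise $\mathrm{GF}(2)$ addition. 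Thus $\mathbf{G}_{\rm M}^{\bf 1}$ can be viewed unambiguously as an $M\times m$ matrix over $\mathrm{GF}(2)$ and $\mathbf{b}[k]\mapsto w_k$ is a $\mathrm{GF}(2)$-linear map.

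Next I would translate ``unique decoding of the FFSP block'' into the injectivity of this linear map $\varphi:\mathrm{GF}(2)^M\to\mathrm{GF}(2)^m$, $\varphi(\mathbf{b})=\mathbf{b}\cdot\mathbf{G}_{\rm M}^{\bf 1}$: the one-to-one correspondence $\mathbf{b}[k]\leftrightarrow w_k$ in the statement is literally injectivity of $\varphi$. The two directions then follow from rank-nullity applied over $\mathrm{GF}(2)$. For sufficiency, if $\mathrm{Rank}(\mathbf{G}_{\rm M}^{\bf 1})=M$, the rows $\alpha^{l_{1,1}},\ldots,\alpha^{l_{M,1}}$ are $\mathrm{GF}(2)$-linearly independent, so the only $\mathbf{b}\in\mathrm{GF}(2)^M$ with $\mathbf{b}\cdot\mathbf{G}_{\rm M}^{\bf 1}=\mathbf{0}$ is $\mathbf{b}=\mathbf{0}$; hence distinct user blocks yield distinct FFSP blocks. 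For necessity, if $\mathrm{Rank}(\mathbf{G}_{\rm M}^{\bf 1})<M$, there exists a nonzero $\mathbf{b}^{\star}$ in the kernel of $\varphi$, so any pair $\mathbf{b}[k]\ne \mathbf{b}[k]\oplus \mathbf{b}^{\star}$ would be mapped to the same FFSP block, violating the one-to-one correspondence and breaking unique decodability.

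There is no deep obstacle; the principal subtlety — and thus the step I would write most carefully — is the identification in the first paragraph, since the statement is phrased in terms of powers of $\alpha$ in $\mathrm{GF}(2^m)$ while the rank is taken of a matrix over $\mathrm{GF}(2)$. I would also record that the hypothesis $M\le m$ is exactly what makes the condition $\mathrm{Rank}(\mathbf{G}_{\rm M}^{\bf 1})=M$ attainable, and note that Corollary~\ref{Coro_user_bit} justifies that the identical argument, verbatim, handles the parallel-mode formula $w=\mathbf{b}_{\rm pll}\cdot\mathbf{G}_{\rm M}^{\bf 1}$, so it suffices to prove the serial-mode version.
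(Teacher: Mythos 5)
Your proposal is correct and follows essentially the same route as the paper: both start from the S-CWEP identity $w_k = \mathbf{b}[k]\cdot\mathbf{G}_{\rm M}^{\bf 1}$ (valid since $\mathbf{G}_{\rm M}^{\bf 0}=\mathbf{0}$) and equate unique decodability with injectivity of this $\mathrm{GF}(2)$-linear map, i.e., linear independence of the rows $\alpha^{l_{1,1}},\ldots,\alpha^{l_{M,1}}$. Your write-up merely spells out both directions via the kernel argument and the $m$-tuple identification, details the paper's terser proof leaves implicit.
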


\begin{proof}
For a S-CWEP code, we have $w_k = {\bf b}[k] \cdot {\mathbf G}_{\mathrm M}^{\bf 1}$, 
\begin{equation*}
  w_k = b_{1,k} \cdot \alpha^{l_{1,1}} + b_{2,k} \cdot \alpha^{l_{2,1}} + \ldots +
             b_{M,k} \cdot \alpha^{l_{M,1}}.
\end{equation*}
Thus, $\alpha^{l_{1,1}}, \alpha^{l_{2,1}},\cdots,\alpha^{l_{M,1}}$ should be linear independent, and it is a one-to-one mapping between the user-block ${\bf b}[k]$ and the FFSP block $w_k$, with the USPM structural property.
In other words, the full-one generator matrix ${\bf G}_{\rm M}^{\bf 1}$ is full row rank.
\end{proof}

Recall Example 1, the rank of ${\bf G}_{\rm M}^{\bf 1}$ of the S-CWEP code $\Psi_{\rm cw}$ is equal to $4$.
Hence, the proposed S-CWEP code $\Psi_{\rm cw}$ is a $4$-user UD-S-CWEP code.
In fact, Eq. (\ref{e.RM_8}) is also the generator matrix of $1$-order RM($1, 3$) code, whose codeword length is $8$ and minimum distance is equal to $2^2 = 4$.

We can construct UD-S-CWEP codes based on classical channel codes, such as LDPC codes \cite{LinBook3}, Reed-Muller (RM) codes, and others. When a UD-S-CWEP code \( \Psi_{\rm cw} \) is constructed based on the channel code \( \mathcal{C}_{mc} \), the full-one generator matrix \( \mathbf{G}_{\rm M}^{\mathbf{1}} \) of \( \Psi_{\rm cw} \) is identical to the generator matrix of \( \mathcal{C}_{mc} \). We can also refer to \( \mathcal{C}_{mc} \) as a \textit{multiuser code (MC)}, as its role is to distinguish users and function as output of an FF-MUX.
Additionally, the \textit{loading factor} \( \eta \) of the UD-S-CWEP code \( \Psi_{\rm cw} \), is given by \( \eta = \frac{M}{m} \le 1 \).
Furthermore, if the full-one generator matrix of an UD-S-CWEP code is derived from a channel code, the resulting system is referred to as \textit{channel codeword multiple access (CCMA) in finite fields (FF-CCMA)}.

From Theorem \ref{USPM_2m}, we observe that the FFSP block \( w_k \) is, in fact, a codeword of the multiuser code \( \mathcal{C}_{mc} \). It is important to note that an EP code is employed at the transmitting end for each user, while the multiuser code is used at the receiving end to separate users.

\subsection{USPM constraint of AI-CWEP code over GF($3^m$)}

Before we introduce the USPM structural property constraint of the AI-CWEP codes over GF($3^m$), we first introduce some properties of the prime-filed GF($3$).

\begin{property}
For a prime-field GF($3$), there are three elements $(0)_3, (1)_3$ and $(2)_3$. 
There are some basic properties of the prime-field GF($3$).
  \begin{enumerate}
    \item
    The inverse of $(1)_3$ is $(2)_3$, and the inverse of $(2)_3$ is $(1)_3$, i.e.,
    $\overline{(1)}_3 = (2)_3$ and $\overline{(2)}_3 = (1)_3$.
    \item
    Multiplying $2$ to the elements $(1)_3$ and $(2)_3$ are equal to the inverse of the elements $(1)_3$ and $(2)_3$, respectively, i.e.,  
    $2 \cdot (1)_3 = \overline{(1)}_3$ and $2 \cdot (2)_3 = \overline{(2)}_3$.
    \item
    The power of $(1)_3$ is still $(1)_3$, i.e., $(1)_3^{n} = (1)_3$, 
    and the power of $(2)_3$ can be given as
      \begin{equation*}
        (2)_3^{n} = \left\{
          \begin{aligned}
            (1)_3, & \quad n = 2l, \\
            (2)_3, & \quad n = 2l+1, \\
          \end{aligned}
          \right.
      \end{equation*}
    where $l \in {\mathbb Z}$. If the power $n$ is an even number, $(2)_3^{n} = (1)_3$; 
    otherwise, $(2)_3^{n} = (2)_3$.
  \end{enumerate}
\end{property}

The USPM structural property constraint of the AI-CWEP code over GF($3^m$) is presented as following.

\begin{theorem} \label{USPM_3m}
(\textbf{USPM constraint of AI-CWEP code over GF($3^m$)})
For an extension field GF($3^m$) where $m \ge 2$, 
an AI-CWEP code $\Psi_{\rm ai,cw} =\{C_1^{}, C_2^{}, \ldots, C_M^{}\}$ is utilized to support an $M$-user FFMA system, with $M \le m$.
To ensure unique decoding of the FFSP block $w_k$, there must be a one-to-one mapping between the user block ${\bf b}[k]$ and the FFSP block $w_k$, i.e., ${\bf b}[k] \leftrightarrow w_k$.
This indicates that the code possesses a unique sum-pattern mapping structural property.
The $M \times m$ full-one generator matrix of the multiuser code, 
denoted as ${\bf G}_{\rm M}^{\bf 1} = \left[\alpha^{l_{1,1}}, \alpha^{l_{2,1}},\cdots,\alpha^{l_{M,1}}\right]^{\rm T}$, should have full row rank, which can be expressed as
  \begin{equation} \label{e.thereom1}
    {\rm Rank}\left({\bf G}_{\rm M}^{\bf 1}\right) = M,
  \end{equation}
indicating that $\alpha^{l_{1,1}}, \alpha^{l_{2,1}},\cdots,\alpha^{l_{M,1}}$ are linear independent. Consequently, the rows of any matrix in the generator matrix set $\mathcal{G}_{\rm M}$, 
e.g., $[\alpha^{l_{1,1}}, \alpha^{l_{2,1}}, \ldots, \alpha^{l_{j,0}}, \ldots, \alpha^{l_{M,1}}]^{\rm T}$, are also linearly independent. We refer to $\Psi_{\rm ai,cw}$ as a uniquely decodable AI-CWEP (UD-AI-CWEP) code.
\end{theorem}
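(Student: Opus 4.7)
My approach is to reduce the AI-CWEP uniqueness requirement to a standard linear-algebra statement over GF($3$), exploiting the additive-inverse relation between ${\bf G}_{\rm M}^{\bf 0}$ and ${\bf G}_{\rm M}^{\bf 1}$. Starting from equation (\ref{e.b_kandG_M}), I would write $w_k = {\bf b}[k] \cdot {\bf G}_{\rm M}^{\bf 1} \oplus \overline{{\bf b}[k]} \cdot {\bf G}_{\rm M}^{\bf 0}$. The defining AI relation $\alpha^{l_{j,0}} \oplus \alpha^{l_{j,1}} = {\bf p}$ reduces over GF($3$) to $\alpha^{l_{j,0}} = 2 \cdot \alpha^{l_{j,1}}$, so ${\bf G}_{\rm M}^{\bf 0} = 2 \cdot {\bf G}_{\rm M}^{\bf 1}$ row-by-row.

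Substituting and collecting coefficients yields $w_k = {\bf e}[k] \cdot {\bf G}_{\rm M}^{\bf 1}$, where ${\bf e}[k] \in \{1,2\}^M \subset \text{GF}(3)^M$ has components $e_{j,k} = b_{j,k} + 2\overline{b_{j,k}}$: the bit $b_{j,k}=0$ yields $e_{j,k}=2$ and $b_{j,k}=1$ yields $e_{j,k}=1$. Since the map ${\bf b}[k] \mapsto {\bf e}[k]$ is a bijection between $\{0,1\}^M$ and $\{1,2\}^M$, the USPM property ${\bf b}[k] \leftrightarrow w_k$ becomes equivalent to injectivity of the restricted map ${\bf e}[k] \mapsto {\bf e}[k] \cdot {\bf G}_{\rm M}^{\bf 1}$ on $\{1,2\}^M$.

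Next I would prove that this restricted injectivity is equivalent to ${\bf G}_{\rm M}^{\bf 1}$ having full row rank over GF($3$). Sufficiency is immediate: full row rank forces the left null space to be trivial, so distinct inputs already map to distinct outputs over all of $\text{GF}(3)^M$, and in particular on $\{1,2\}^M$. For necessity the key observation is the difference-set identity $\{1,2\}^M - \{1,2\}^M = \text{GF}(3)^M$, since the componentwise differences $1-1$, $2-2$, $2-1$, $1-2$ realize respectively $0,0,1,2 \in \text{GF}(3)$. Hence any nonzero left-null vector of ${\bf G}_{\rm M}^{\bf 1}$ can be written as ${\bf e}-{\bf e}'$ with ${\bf e},{\bf e}' \in \{1,2\}^M$, producing a genuine collision; this forces (\ref{e.thereom1}) and thereby linear independence of $\alpha^{l_{1,1}},\ldots,\alpha^{l_{M,1}}$.

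For the ``consequently'' clause, any matrix in ${\mathcal G}_{\rm M}$ is obtained from ${\bf G}_{\rm M}^{\bf 1}$ by replacing certain rows $\alpha^{l_{j,1}}$ with $\alpha^{l_{j,0}} = 2 \alpha^{l_{j,1}}$, i.e., by scaling selected rows by the nonzero scalar $2 \in \text{GF}(3)$; since nonzero row scaling preserves rank, every member of ${\mathcal G}_{\rm M}$ inherits full row rank $M$. The main subtlety I anticipate is the necessity direction: one must remember that ${\bf e}[k]$ lives only in the restricted set $\{1,2\}^M$, so a priori the restricted injectivity could be strictly weaker than global injectivity over $\text{GF}(3)^M$; the difference-set identity above is precisely the ingredient that makes the two equivalent and keeps the rank condition sharp.
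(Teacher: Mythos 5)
Your proof is correct, but it takes a genuinely different route from the paper's. The paper assumes ${\rm Rank}\left({\bf G}_{\rm M}^{\bf 1}\right) = M$ and argues by sequential substitution: it replaces $\alpha^{l_{j,1}}$ by $\alpha^{l_{j,0}}$ one row at a time and, invoking the additive-inverse relation (\ref{e.G_AIEP_def}) together with the rank hypothesis, shows that the sum of the rows of each substituted matrix stays nonzero ($\alpha^{s_1} \neq {\bf 0}$, $\alpha^{s_2} \neq {\bf 0}$, and so on up to $\alpha^{s_M} \neq {\bf 0}$), and from this concludes that every matrix in ${\mathcal G}_{\rm M}$ has full row rank. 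You instead collapse the additive-inverse structure into the single identity ${\bf G}_{\rm M}^{\bf 0} = 2\,{\bf G}_{\rm M}^{\bf 1}$ over GF($3$), rewrite $w_k = {\bf e}[k]\cdot {\bf G}_{\rm M}^{\bf 1}$ with ${\bf e}[k] \in \{1,2\}^M$, and prove a genuine two-way equivalence: full row rank gives injectivity on all of ${\rm GF}(3)^M$ and hence on $\{1,2\}^M$ (sufficiency), while the difference-set identity $\{1,2\}^M - \{1,2\}^M = {\rm GF}(3)^M$ converts any nonzero left-null vector into a collision between two admissible coefficient vectors (necessity). Your handling of the ``consequently'' clause --- each matrix in ${\mathcal G}_{\rm M}$ arises from ${\bf G}_{\rm M}^{\bf 1}$ by scaling selected rows by the unit $2$, and nonzero row scaling preserves rank --- is a one-line argument that actually delivers the full linear-independence claim for every substituted row set, whereas the paper's substitution chain only verifies that the all-ones combination (the sum of all rows) of each substituted matrix is nonzero, which is strictly weaker than the linear independence asserted in the theorem statement. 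Moreover, the necessity direction (that the USPM property forces the rank condition) is not addressed in the paper's proof at all; your difference-set observation is precisely what makes the rank condition sharp rather than merely sufficient. In short, your argument is both more complete and more economical than the paper's.
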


\begin{proof}
Since ${\bf G}_{\rm M}^{\bf 1}$ is a full row rank matrix, it indicates that
\begin{equation} \label{e.theorem1_1}
  \alpha^{l_{1,1}} \oplus \alpha^{l_{2,1}} \oplus \cdots \oplus  \alpha^{l_{M,1}} \neq {\bf 0}.
\end{equation}

First, we take the element $\alpha^{l_{1,0}}$ of the AI-CWEP $C_1^{} = (\alpha^{l_{1,0}}, \alpha^{l_{1,1}})$ instead of the element $\alpha^{l_{1,1}}$ into the left side of the inequality in (\ref{e.theorem1_1}), and the other elements are kept the same. Then, we have
  \begin{equation} \label{e.theorem1_2}
    \begin{aligned}
   \alpha^{s_1} &= \alpha^{l_{1,0}} \oplus \left(\alpha^{l_{2,1}} \oplus \cdots \oplus  
                 \alpha^{l_{J,1}}\right) \Rightarrow\\
   \alpha^{s_1} \oplus \alpha^{l_{1,1}} &= \left(\alpha^{l_{1,0}} \oplus \alpha^{l_{1,1}}\right) \oplus
   \left(\alpha^{l_{2,1}} \oplus \cdots \oplus \alpha^{l_{J,1}}\right),\\
              &\overset{(a)}{=} \alpha^{l_{2,1}} \oplus \cdots \oplus \alpha^{l_{J,1}},
    \end{aligned}
  \end{equation}
where (a) is deduced based on the definition given by (\ref{e.G_AIEP_def}).
Because of the full rank constraint, it is able to know
\begin{equation} \label{e.theorem1_3}
  \begin{aligned}
  \alpha^{l_{2,1}} \oplus \cdots \oplus \alpha^{l_{M,1}} &\neq \alpha^{l_{1,1}},\\
  \end{aligned}
\end{equation}
which is taken into (\ref{e.theorem1_2}). Then, it is able to derive that
  \begin{equation} \label{e.theorem1_4}
    \begin{aligned}
   \alpha^{s_1} \oplus \alpha^{l_{1,1}} \neq \alpha^{l_{1,1}} \Rightarrow  \alpha^{s_1} \neq {\bf 0}.
    \end{aligned}
  \end{equation}
indicating the combination $\alpha^{l_{1,0}}, \alpha^{l_{2,1}},\ldots, \alpha^{l_{M,1}}$ are also linear independent.

Next, we take the element $\alpha^{l_{2,0}}$ of the AI-CWEP $C_2^{} = (\alpha^{l_{2,0}}, \alpha^{l_{2,1}})$ instead of the element $\alpha^{l_{2,1}}$ into the right side of (\ref{e.theorem1_2}), and the other elements are kept the same. It shows 
  \begin{equation} \label{e.theorem1_5}
    \begin{aligned}
   \alpha^{s_2} &=  \left(\alpha^{l_{1,0}} \oplus \alpha^{l_{2,0}}\right)
                  \oplus \left(\alpha^{l_{3,1}} \oplus \cdots \oplus  
                 \alpha^{l_{M,1}}\right) \Rightarrow\\
   & \alpha^{s_2} \oplus \alpha^{l_{1,1}} \oplus \alpha^{l_{2,1}} 
      = \left(\alpha^{l_{1,0}} \oplus \alpha^{l_{1,1}}\right) \oplus \left(\alpha^{l_{2,0}} \oplus \alpha^{l_{2,1}}\right) \oplus \\
   & \left(\alpha^{l_{3,1}} \oplus \cdots \oplus \alpha^{l_{M,1}}\right)
              \overset{(b)}{=} \alpha^{l_{3,1}} \oplus \cdots \oplus \alpha^{l_{M,1}},
    \end{aligned}
  \end{equation}
where (b) is deduced based on (\ref{e.G_AIEP_def}).
Because of the full rank constraint, it is able to know
\begin{equation} \label{e.theorem1_6}
  \begin{aligned}
  \alpha^{l_{3,1}} \oplus \cdots \oplus \alpha^{l_{M,1}} &\neq \alpha^{l_{1,1}} \oplus \alpha^{l_{2,1}}.\\
  \end{aligned}
\end{equation}
Take (\ref{e.theorem1_6}) into (\ref{e.theorem1_5}), it is known that
  \begin{equation} \label{e.theorem1_7}
    \begin{aligned}
   \alpha^{s_2} \oplus \alpha^{l_{1,1}} \oplus \alpha^{l_{2,1}} 
   \neq \alpha^{l_{1,1}} \oplus \alpha^{l_{2,1}} \Rightarrow  \alpha^{s_2} \neq {\bf 0}.
    \end{aligned}
  \end{equation}
indicating the combination $\alpha^{l_{1,0}}, \alpha^{l_{2,0}}, \alpha^{l_{3,1}}, \ldots, \alpha^{l_{M,1}}$ are also linear independent. Repeat the substitution operation, and we can obtain the similarly process and results.

When all the elements $\alpha^{l_{j,1}}$ have been substituted by $\alpha^{l_{j,0}}$ for $1 \le j \le M$,
we set
  \begin{equation} \label{e.theorem1_8}
    \begin{aligned}
   \alpha^{s_M} = \alpha^{l_{1,0}} \oplus \alpha^{l_{2,0}} \oplus \cdots \oplus  
                 \alpha^{l_{M,0}} 
                \overset{(c)}{=} 
   2\cdot \left(\alpha^{l_{1,1}} \oplus \cdots \oplus \alpha^{l_{M,1}}\right)
          \neq {\bf 0},
    \end{aligned}
  \end{equation} 
where (c) is deduced because of Property 1.

Thus, if ${\rm Rank}\left({\bf G}_{\rm M}^{\bf 1}\right) = M$, any matrix in the generator matrix set ${\mathcal G}_{\rm M}$ is a full row rank matrix.
\end{proof}

Recall Example 2, the rank of ${\bf G}_{\rm M}^{\bf 1}$ of the AI-CWEP code $\Psi_{\rm ai,cw}$ is equal to $4$. Thus, the proposed AI-CWEP code $\Psi_{\rm ai,cw}$ of Example 2 is a $4$-user UD-AI-CWEP code.

From Theorems \ref{USPM_2m} and \ref{USPM_3m}, we observe that if the full-one generator matrix \( \mathbf{G}_{\rm M}^{\mathbf{1}} \) of the CWEP code \( \Psi_{\rm cw} \) has full row rank, then \( \Psi_{\rm cw} \) is a UD-CWEP code. In this context, we will refer to the full-one generator matrix \( \mathbf{G}_{\rm M}^{\mathbf{1}} \) of the UD-CWEP code \( \Psi_{\rm cw} \) as the \textit{generator matrix} for brevity.
In fact, various generator matrices \( \mathbf{G}_{\rm M}^{\mathbf{1}} \) can be directly designed to obtain a range of UD-CWEP codes. A well-designed generator matrix \( \mathbf{G}_{\rm M}^{\mathbf{1}} \) can significantly enhance performance in several areas, such as error performance, low-complexity detection, and more.

\subsection{Classification of EP Codes}

The loading factor, denoted as \( \eta \), is a parameter used to measure the resource utilization efficiency of EP codes. For an AIEP code over a prime field \( \text{GF}(p) \), the loading factor \( \eta \) is generally determined by the sizes of the FF-MUX \( \mathcal{A}_{\rm M} \). In contrast, for an EP code over an extension field \( \text{GF}(p^m) \), where \( p = 2, 3 \) and \( m \geq 2 \), the loading factor \( \eta \) is typically determined by the size of the generator matrix \( \mathbf{G}_{\rm M}^{\mathbf{1}} \). Consequently, the loading factor can be less than, equal to, or greater than one. Based on the different loading factors, the EP code \( \Psi \) can be classified into three modes. For clarity, we consider the EP encoder operating in serial mode as an example, while the parallel mode operates similarly, and will not be repeated here.
\begin{enumerate}
  \item
  \textbf{Error Correction EP Codes.}  
  The loading factor of the generator matrix \( \mathbf{G}_{\rm M}^{\mathbf{1}} \) for the S-CWEP code \( \Psi_{\rm cw} \) is less than one, i.e., \( \eta < 1 \). In this case, the number of served users \( M \) is less than the extension field size \( m \), i.e., \( M < m \). The proposed S-CWEP code \( \Psi_{\rm cw} \) may offer error correction (or detection) capabilities if its generator matrix is designed based on a channel code.
  \item
  \textbf{Orthogonal EP codes. }
  The loading factor of the generator matrix ${\bf G}_{\rm M}^{\bf 1}$ for the orthogonal UD-EP code $\Psi_{\rm o,B}$ over GF($2^m$) is equal to one, representing a typical orthogonal mode. In this case, the number of served users $M$ is equal to the extension field $m$, i.e., $M = m$. For example, in Example 2, the loading factor of the generator matrix is equal to one, i.e., $\eta = 1$, thus allowing the proposed AI-CWEP code $\Psi_{\rm ai,cw}$ to support orthogonal transmission mode.
  \item
  \textbf{Overload EP codes. }
  The \textit{overload EP codes} have loading factors greater than one, i.e., $\eta > 1$. In this scenario, the number of served users $M$ exceeds the extension field $m$, i.e., $M > m$. Typically, there are two main types of overload EP codes: the AIEP code constructed over the prime field GF($p$), and the \textit{non-orthogonal EP (NO-EP) code}. The former, the AIEP code, has been investigated in \cite{FFMA}, while the latter, the NO-EP code, will be introduced in the following section. The loading factor of the AIEP code $\Psi_{\rm s}$ over GF($p$) is in the range $1 \leq \eta \leq \log_2 (p-1)$, representing overloaded information on a \textit{symbol-wise} basis; in contrast, the NO-EP codes represent overloaded information on a \textit{codeword-wise} basis.
\end{enumerate}

\section{Construction of Codeword-wise EP Codes}

In the following, we begin by constructing UD-S-CWEP codes over GF($2^m$) based on linear block channel codes. Next, we introduce ternary matrices, specifically \textit{ternary orthogonal matrices} and \textit{ternary non-orthogonal matrices}, as generator matrices to develop AI-CWEP codes over GF($3^m$).

\subsection{Construction of UD-S-CWEP Codes Using Linear Block Channel Codes}

To construct UD-S-CWEP codes over \( \text{GF}(2^m) \), any classical linear block code can serve as a generator matrix, including LDPC codes \cite{LinBook3}, RM codes, Reed-Solomon (RS) codes \cite{Shu2009}, polar codes \cite{Polar2009}, and others. In this paper, we focus specifically on using LDPC codes to construct UD-S-CWEP codes, while polar codes will be explored in future work.

Let the generator matrix of an LDPC code ${\mathcal C}_{mc}$ be an $M \times m$ binary matrix ${\bf G}_{mc} \in {\mathbb B}^{M \times m}$. Then, we can set the generator matrix ${\bf G}_{\rm M}^{\bf 1}$ of an UD-S-CWEP code $\Psi_{\rm cw}$ as
\begin{equation}
  {\bf G}_{\rm M}^{\bf 1}  = {\bf G}_{mc}.
\end{equation}
Since the generator matrix ${\bf G}_{mc}$ of the linear channel code ${\mathcal C}_{mc}$ is of full row rank, ${\bf G}_{\rm M}^{\bf 1}$ satisfies the USPM constraint. Therefore, based on the linear block channel code ${\mathcal C}_{mc}$, we can construct an $M$-user UD-S-CWEP $\Psi_{\rm cw}$ over GF($2^m$) that has $2^M$ codewords. Notably, the extension factor (EF) $m$ of the UD-S-CWEP code $\Psi_{\rm cw}$ is equal to the codeword length of the linear block channel code ${\mathcal C}_{mc}$. Numerous studies have explored the construction of well-behaved linear block channel codes, which can be employed as UD-S-CWEP codes \cite{LinBook3}.

\textbf{Example 4:}
Suppose a $(16, 12)$ linear block code is used to construct a UD-S-CWEP code $\Psi_{\rm cw}$, and the generator matrix ${\bf G}_{eg}$ of the linear block code is given by (\ref{e.G}).

\vspace{-0.1in}
\begin{small}
\begin{equation} \label{e.G}
\begin{array}{ll}
  {\bf G}_{eg} = 
  \left[
  \begin{array}{cccccccccccc:cccc}
   1 & 0 & 0 & 0 & 0 & 0 & 0 & 0 & 0 & 0 & 0 & 0 & 1 & 0 & 0 & 0 \\
   0 & 1 & 0 & 0 & 0 & 0 & 0 & 0 & 0 & 0 & 0 & 0 & 0 & 1 & 0 & 0 \\
   0 & 0 & 1 & 0 & 0 & 0 & 0 & 0 & 0 & 0 & 0 & 0 & 0 & 0 & 1 & 0 \\
   0 & 0 & 0 & 1 & 0 & 0 & 0 & 0 & 0 & 0 & 0 & 0 & 0 & 0 & 0 & 1 \\
   0 & 0 & 0 & 0 & 1 & 0 & 0 & 0 & 0 & 0 & 0 & 0 & 0 & 0 & 0 & 1 \\
   0 & 0 & 0 & 0 & 0 & 1 & 0 & 0 & 0 & 0 & 0 & 0 & 1 & 0 & 0 & 0 \\
   0 & 0 & 0 & 0 & 0 & 0 & 1 & 0 & 0 & 0 & 0 & 0 & 0 & 1 & 0 & 0 \\
   0 & 0 & 0 & 0 & 0 & 0 & 0 & 1 & 0 & 0 & 0 & 0 & 0 & 0 & 1 & 0 \\
   0 & 0 & 0 & 0 & 0 & 0 & 0 & 0 & 1 & 0 & 0 & 0 & 0 & 0 & 1 & 0 \\
   0 & 0 & 0 & 0 & 0 & 0 & 0 & 0 & 0 & 1 & 0 & 0 & 0 & 0 & 0 & 1 \\
   0 & 0 & 0 & 0 & 0 & 0 & 0 & 0 & 0 & 0 & 1 & 0 & 1 & 0 & 0 & 0 \\
   0 & 0 & 0 & 0 & 0 & 0 & 0 & 0 & 0 & 0 & 0 & 1 & 0 & 1 & 0 & 0 \\
  \end{array}
  \right].
\end{array}
\end{equation}
\end{small}

Next, let the generator matrix ${\bf G}_{\rm M}^{\bf 1}$ of the UD-S-CWEP code $\Psi_{\rm cw}$ be ${\bf G}_{\rm M}^{\bf 1} = {\bf G}_{eg}$.
Hence, based on the $12 \times 16$ generator matrix ${\bf G}_{\rm M}^{\bf 1}$, we can obtain a $12$-user UD-S-CWEP code $\Psi_{\rm cw} = \{C_1^{}, C_2^{}, \ldots, C_{12}^{} \}$ over GF($2^{16}$), given as
\begin{equation*}
  \begin{array}{ll}
  C_1^{} = (0000000000000000, 1000000000001000),
  C_2^{} = (0000000000000000, 0100000000000100), \\
  C_3^{} = (0000000000000000, 0010000000000010), 
  C_4^{} = (0000000000000000, 0001000000000001), \\
  C_5^{} = (0000000000000000, 0000100000000001), 
  C_6^{} = (0000000000000000, 0000010000001000), \\
  C_7^{} = (0000000000000000, 0000001000000100), 
  C_8^{} = (0000000000000000, 0000000100000010), \\
  C_9^{} = (0000000000000000, 0000000010000010), 
  C_{10}^{} = (0000000000000000, 0000000001000001), \\
  C_{11}^{} = (0000000000000000, 0000000000101000), 
  C_{12}^{} = (0000000000000000, 0000000000010100). \\
   \end{array}
\end{equation*}
Each element of $C_{j}^{}$ for $1 \le j \le 12$ is a $16$-tuple, determined by the codeword length of the linear block channel code.
The Cartesian product of $C_1^{} \times C_2^{} \times \ldots \times C_{12}^{}$ can form an UD-S-CWEP code $\Psi$ with $2^{12}=4096$ EP codewords.
$\blacktriangle \blacktriangle$

\subsection{Ternary Orthogonal Matrices for Constructing AI-CWEP Codes}

We begin by defining the \textit{ternary orthogonal matrix} and its \textit{additive inverse matrix} over $\text{GF}(3^m)$, and we present several key properties of these matrices. Building upon the concept of ternary orthogonal matrices, we then construct AI-CWEP codes over $\text{GF}(3^m)$.

Let ${\bf T}_{\rm o}$ be an $m \times m$ matrix that satisfies the following two conditions:
\begin{enumerate}
\item 
All elements of ${\bf T}_{\rm o}$ belong to GF($3$), specifically $(0)_3$, $(1)_3$, and $(2)_3$; and
\item 
${\bf T}_{\rm o} \cdot {\bf T}_{\rm o}^{\rm T} = {\bf I}_m$ or ${\bf T}_{\rm o} \cdot {\bf T}_{\rm o}^{\rm T} = 2 \cdot {\bf I}_m$, where ${\bf I}_m$ is an $m \times m$ identity matrix.
\end{enumerate}
If both conditions are satisfied, ${\bf T}_{\rm o}$ is referred to as a \textit{ternary orthogonal matrix}, with the subscript ``o'' denoting ``orthogonal''.

Suppose ${\bf T}_{\rm o}(2, 2)$ is a $2 \times 2$ matrix over GF($3$), given as
\begin{equation}
  {\bf T}_{\rm o}(2, 2)  = \left[
    \begin{matrix}
      1 & 1 \\
      2 & 1 \\
    \end{matrix}
  \right].
\end{equation}
Based on the Property 1 of GF($3$), the \textit{two-fold Kronecker product} of ${\bf T}_{\rm o}(2, 2)$ is defined as

\vspace{-0.1in}
\begin{small}
\begin{equation}
  {\bf T}_{\rm o}(2^2, 2^2)  \triangleq \left[
    \begin{array}{cc}
      1 & 1 \\
      2 & 1 \\
    \end{array}
  \right] \otimes
  \left[
    \begin{array}{cc}
      1 & 1 \\
      2 & 1 \\
    \end{array}
  \right] =
   \left[
    \begin{array}{cccc}
      1 & 1 & 1 & 1\\
      2 & 1 & 2 & 1\\
      2 & 2 & 1 & 1\\
      1 & 2 & 2 & 1\\
    \end{array}
  \right],
\end{equation}
\end{small}
where $\otimes$ indicates Kronecker product.

The \textit{three-fold Kronecker product} of ${\bf T}_{\rm o}(2, 2)$ is defined as
\begin{equation}
\begin{aligned}
  {\bf T}_{\rm o}(2^3, 2^3) &\triangleq \left[
    \begin{array}{cc}
      1 & 1 \\
      2 & 1 \\
    \end{array}
  \right] \otimes
   \left[
    \begin{array}{cccc}
      1 & 1 & 1 & 1\\
      2 & 1 & 2 & 1\\
      2 & 2 & 1 & 1\\
      1 & 2 & 2 & 1\\
    \end{array}
  \right] 
  = \left[
    \begin{array}{cccccccc}
      1 & 1 & 1 & 1 & 1 & 1 & 1 & 1\\
      2 & 1 & 2 & 1 & 2 & 1 & 2 & 1\\
      2 & 2 & 1 & 1 & 2 & 2 & 1 & 1\\
      1 & 2 & 2 & 1 & 1 & 2 & 2 & 1\\
      2 & 2 & 2 & 2 & 1 & 1 & 1 & 1\\
      1 & 2 & 1 & 2 & 2 & 1 & 2 & 1\\
      1 & 1 & 2 & 2 & 2 & 2 & 1 & 1\\
      2 & 1 & 1 & 2 & 1 & 2 & 2 & 1\\
    \end{array}
  \right].
\end{aligned}
\end{equation}
Similarly, we can define the $\kappa$-fold Kronecker product of ${\bf T}_{\rm o}(2, 2)$ as,
\begin{equation}
  {\bf T}_{\rm o}(2^\kappa, 2^\kappa) = {\bf T}_{\rm o}(2, 2) \otimes 
                                        {\bf T}_{\rm o}(2^{\kappa-1}, 2^{\kappa-1}),
\end{equation}
which is a $2^\kappa \times 2^\kappa$ matrix. Obviously, ${\bf T}_{\rm o}(2^\kappa, 2^\kappa)$ satisfies the USPM structural property constraint of AI-CWEP codes over GF($3^m$).

Next, we define the \textit{additive inverse matrix of the ternary orthogonal matrix} by 
\begin{equation}
{\bf T}_{\rm o, ai}(2^\kappa, 2^\kappa) = {\bf P} - {\bf T}_{\rm o}(2^\kappa, 2^\kappa)
\overset{(a)}{=} 2 \cdot {\bf T}_{\rm o}(2^\kappa, 2^\kappa),
\end{equation}
where ($a$) is deduced based on the Property 1 of GF($3$).
For example, the $2 \times 2$ additive inverse matrix of the ternary orthogonal matrix ${\bf T}_{\rm o, ai}(2, 2)$ is given as
\begin{equation}
  {\bf T}_{\rm o, ai}(2, 2)  = {\bf P} - {\bf T}_{\rm o}(2, 2) = \left[
    \begin{matrix}
      2 & 2 \\
      1 & 2 \\
    \end{matrix}
  \right],
\end{equation}
and the $4 \times 4$ additive inverse matrix of the ternary orthogonal matrix ${\bf T}_{\rm o, ai}(2^2, 2^2)$ is
\begin{equation}
  {\bf T}_{\rm o, ai}(2^2, 2^2) = {\bf P} - {\bf T}_{\rm o}(2^2, 2^2) =
   \left[
    \begin{array}{cccc}
      2 & 2 & 2 & 2\\
      1 & 2 & 1 & 2\\
      1 & 1 & 2 & 2\\
      2 & 1 & 1 & 2\\
    \end{array}
  \right].
\end{equation}

Based on the definition of the ternary orthogonal matrices, some properties are presented as following.

\begin{property}
For the $\kappa$-fold ternary orthogonal matrix ${\bf T}_{\rm o}(2^\kappa, 2^\kappa)$ (or ${\bf T}_{\rm o}$ for simply) and its additive inverse matrix ${\bf T}_{\rm o, ai}(2^\kappa, 2^\kappa)$ (or ${\bf T}_{\rm o, ai}$ for simply), there are some basic properties.
\begin{enumerate}
  \item
  The cross-correlation of any two rows of ${\bf T}_{\rm o}$ is $(0)_3$.
  \item
  The self-correlation of any row of ${\bf T}_{\rm o}$ is $(1)_3$ or $(2)_3$.
  \item
  If ${\bf T}_{\rm o} \cdot {\bf T}_{\rm o}^{\rm T} = {\bf I}_{2^\kappa}$,
  then ${\bf T}_{\rm o} \cdot {\bf T}_{\rm o, ai}^{\rm T} = 2 \cdot {\bf I}_{2^\kappa}$;
  otherwise, 
  if ${\bf T}_{\rm o} \cdot {\bf T}_{\rm o}^{\rm T} = 2 \cdot {\bf I}_{2^\kappa}$,
  then ${\bf T}_{\rm o} \cdot {\bf T}_{\rm o, ai}^{\rm T} = {\bf I}_{2^\kappa}$.
\end{enumerate}
\end{property}

Then, based on the $\kappa$-fold ternary orthogonal matrix ${\bf T}_{\rm o}(2^{\kappa}, 2^{\kappa})$, we can construct an UD-AI-CWEP code over GF($3^m$), denoted by ${\Psi}_{\rm ai,T}$, i.e., ${\Psi}_{\rm ai,T} = \{C_1^{\rm cd}, C_2^{\rm cd}, \ldots, C_j^{\rm cd}, \ldots, C_M^{\rm cd}\}$ with $C_j^{\rm cd} = (\alpha^{l_{j,0}}, \alpha^{l_{j,1}})$ for $1 \le j \le M$ and $M \le m=2^{\kappa}$.
Let the generator matrix ${\bf G}_{\rm M}^{\bf 1}$ of the UD-AI-CWEP code ${\Psi}_{\rm ai,T}$ be ${\bf T}_{\rm o}(2^{\kappa}, 2^{\kappa})$, i.e.,
\begin{equation*}
  {\bf G}_{\rm M}^{\bf 1}  = {\bf T}_{\rm o}(2^{\kappa}, 2^{\kappa}),
\end{equation*}
and the full-zero generator matrix ${\bf G}_{\rm M}^{\bf 0}$ of ${\Psi}_{\rm ai,T}$ be ${\bf T}_{\rm o, ai}(2^{\kappa}, 2^{\kappa})$, i.e.,
\begin{equation*}
  {\bf G}_{\rm M}^{\bf 0}  = {\bf T}_{\rm o, ai}(2^{\kappa}, 2^{\kappa}),
\end{equation*}
where $m = 2^{\kappa}$. 
For the $j$-th AI-CWEP $C_j^{\rm cd} = (\alpha^{l_{j,0}}, \alpha^{l_{j,1}})$,
let $\alpha^{l_{j,0}}$ and $\alpha^{l_{j,1}}$ be the $j$-th row of ${\bf G}_{\rm M}^{\bf 0}$ 
and the $j$-th row of ${\bf G}_{\rm M}^{\bf 1}$, respectively. 
Then, we obtain totally $2^{\kappa}$ AI-CWEPs, which together form the UD-AI-CWEP code ${\Psi}_{\rm ai,T} = \{C_1^{\rm cd},C_2^{\rm cd}, \ldots, C_j^{\rm cd}, \ldots, C_{2^\kappa}^{\rm cd} \}$. 
The loading factor of the UD-AI-CWEP code $\Psi_{\rm ai,T}$ is equal to $\eta = 1$.
In fact, Example 2 is constructed using the $2$-fold ternary orthogonal matrix ${\bf T}_{\rm o}(4, 4)$ and its addtive inverse matrix ${\bf T}_{\rm o, ai}(4, 4)$, where $\kappa=2$.

In the following section, we will show that an UD-AI-CWEP code ${\Psi}_{\rm ai,T}$ in complex-field become as a Walsh code, which can play an role as orthogonal spreading code. Thus, the subscript ``cd'' of $C_{j}^{\rm cd}$ stands for ``code division''.

In addition, the UD-AI-CWEP code ${\Psi}_{\rm ai,T}$ can also be concatenated with a channel encoder, forming an FFMA with the channel-coding system.

\textbf{Example 5:}
We analyze the channel encoding process of the UD-AI-CWEP code ${\Psi}_{\rm ai,cw}$ presented in Example 2. The EP encoder operates in serial mode.
The channel code used is a $(16, 12)$ linear block code, with the generator matrix ${\bf G}_{eg}$ over $\text{GF}(3)$ in systematic form, as given in (\ref{e.G}).
Given that the generator matrix ${\bf G}_{eg}$ has sizes $12 \times 16$, the information section can accommodate $T = 3$ data blocks, with each block having a length of $m = 4$. The length of the parity section is $R = 16 - 12 = 4$.

Let the total number of users be $J = 3$, with each user having $K = 3$ bits. The bit sequences of the three users are:
${\bf b}_1 = (1, 1, 0)$, ${\bf b}_2 = (1, 0, 1)$ and ${\bf b}_3 = (0, 0, 1)$, respectively. We assign the AI-CWEPs to the three users as following:
\begin{equation*}
  {b}_{1,k} \rightleftharpoons C_1^{}= (2222, 1111), \quad
  {b}_{2,k} \rightleftharpoons C_2^{}= (1212, 2121), \quad
  {b}_{3,k} \rightleftharpoons C_3^{}= (1122, 2211),
\end{equation*}
where $0 \le k <3$.
Then, the element-sequences of three users are given as 
\begin{equation*} \label{e.Ex4_1}
  \begin{array}{c}
{\bf b}_1 = (1, 1, 0)  \Rightarrow
{\bf u}_1 = (1111, 1111, 2222)\\
{\bf b}_2 = (1, 0, 1) \Rightarrow
{\bf u}_2 = (2121, 1212, 2121)\\
{\bf b}_3 = (0, 0, 1) \Rightarrow
{\bf u}_3 = (1122, 1122, 2211)\\ 
  \end{array}.
\end{equation*}

Each element-sequence ${\bf u}_j$ is encoded into a codeword 
${\bf v}_j = {\bf u}_j {\bf G}_{gc}$. 
The codewords for the three users are as follows:
\begin{equation*}
  \begin{array}{c}
{\bf v}_1 = {\bf u}_1 \cdot {\bf G}_{eg}  =
(1111, 1111, 2222, \textcolor{blue}{1111})\\
{\bf v}_2 = {\bf u}_2 \cdot {\bf G}_{eg}  =
(2121, 1212, 2121, \textcolor{blue}{0000})\\
{\bf v}_3 = {\bf u}_3 \cdot {\bf G}_{eg}  =
(1122, 1122, 2211, \textcolor{blue}{0102})
  \end{array},
\end{equation*}
where the last $4$ symbols of each codeword form a \textit{parity block}.
The sum of ${\bf v}_1$, ${\bf v}_2$ and ${\bf v}_3$ is equal to
\begin{equation*}
{\bf v}_{sum} = \bigoplus_{j=1}^3 {\bf v}_j = (1021, 0112, 0221,\textcolor{blue}{1210}).
\end{equation*}

The FFSP sequence ${\bf w}$ and its encoded codeword ${\bf v}$ can be calculated as
\begin{equation*}
  \begin{aligned}
{\bf w} &= \bigoplus_{j=1}^3 {\bf u}_j = (w_0, w_1, w_2)=(1021, 0112, 0221), \\
{\bf v} &= {\bf w} \cdot {\bf G}_{eg} = ({\bf w}, {\bf v}_{\rm red}) = (1021, 0112, 0221,\textcolor{blue}{1210}).
  \end{aligned}
\end{equation*}

It is straightforward to verify that \( {\bf v} = {\bf v}_{sum} \), confirming that the order of finite-field addition \( \mathcal{A}_{\rm M} \) and channel encoding \( {\bf G}_{gc} \) does not affect the final encoded codewords. Additionally, the channel coding remains applicable to the ternary UD-AI-CWEA code.
$\blacktriangle \blacktriangle$

\subsection{Ternary Non-orthogonal Matrices for Constructing AI-CWEP Codes}

Unlike UD-CWEP codes, a \textit{non-orthogonal CWEP (NO-CWEP)} code $\Psi_{\rm no}$ does not satisfy the USPM structural constraint, where the subscript ``no'' stands for ``non-orthogonal''. Nevertheless, it is a type of overloaded EP code, known for providing high spectral efficiency (SE). 
In this subsection, we construct a type of non-orthogonal AI-CWEP (NO-AI-CWEP) codes, which are derived from \textit{ternary non-orthogonal matrices} constructed over an extension field GF($3^m$) of the prime field GF(3), where $m \ge 2$.

Let an $M \times m$ ternary non-orthogonal matrix over GF($3^m$) be denoted as ${\bf T}_{\rm no}(M, m)$, given by
\begin{equation} \label{e.T_no_general}
{\bf T}_{\rm no}(M, m) = \left[
  \begin{array}{ccccc}
    a_{1,0} & a_{1,1} & \ldots & a_{1,m-1}\\
    a_{2,0} & a_{2,1} & \ldots & a_{2,m-1}\\
    \vdots  & \vdots  & \ddots & \vdots  \\
    a_{M,0} & a_{M,1} & \ldots & a_{M,m-1}\\
  \end{array}
\right],
\end{equation}
where $a_{j,i} \in {\mathbb T}$ for $1 \le j \le M$, $0 \le i < m$ and $M > m$.
Define the additive inverse matrix of ${\bf T}_{\rm no}(M, m)$ by ${\bf T}_{\rm no, ai}(M, m)$, given as
\begin{equation} 
{\bf T}_{\rm no, ai}(M, m) = {\bf P} - {\bf T}_{\rm no}(M, m), 
\end{equation}
which is also an $M \times m$ ternary matrix over GF($3^m$).

Next, we construct an NO-AI-CWEP code $\Psi_{\rm no}$ based on the ternary non-orthogonal matrices ${\bf T}_{\rm no}(M, m)$ and ${\bf T}_{\rm no, ai}(M, m)$.
Let the full-one generator matrix ${\bf G}_{\rm M}^{\bf 1}$ and full-zero generator matrix ${\bf G}_{\rm M}^{\bf 0}$ of the NO-AI-CWEP code $\Psi_{\rm no}$ be
\begin{equation*}
  \begin{aligned}
  {\bf G}_{\rm M}^{\bf 1} &= {\bf T}_{\rm no}(M,m),\\
  {\bf G}_{\rm M}^{\bf 0} &= {\bf T}_{\rm no, ai}(M,m).
  \end{aligned}
\end{equation*}
Then, we can obtain an $M$-user NO-AI-CWEP code ${\Psi}_{\rm no} = \{C_1^{}, C_2^{}, \ldots, C_j^{}, \ldots, C_M^{}\}$ over GF($3^m$), where $C_j^{} = (\alpha^{l_{j,0}}, \alpha^{l_{j,1}})$ for $1 \le j \le M$.
In this case, $\alpha^{l_{j,0}}$ and $\alpha^{l_{j,1}}$ are set to be the $j$-th row of ${\bf G}_{\rm M}^{\bf 0}$ and ${\bf G}_{\rm M}^{\bf 1}$, respectively.
In addition, the loading factor of the NO-AI-CWEP code $\Psi_{\rm no}$ is equal to $\eta = M/m$. With the assumption $M > m$, it is able to know $\eta > 1$.

\textbf{Example 6:}
Now, we present a type of ternary non-orthogonal matrix constructed over GF($3^2$), which can be used to form an NO-AI-CWEP code.
Referring to \cite{UD_CDMA1_2012, UD_CDMA2_2012}, let the $3 \times 2$ matrix ${\bf T}_{\rm no}(3, 2)$ over GF($3^2$) be defined as follows:
\begin{equation} \label{e.T_no}
{\bf T}_{\rm no}(3, 2) = \left[
  \begin{array}{cc}
    {\bf T}_{\rm o}(2, 2)\\
    \hdashline
    {\bf E}\\
  \end{array}
\right] =
\left[
  \begin{array}{cc}
    1 & 1\\
    2 & 1\\
    \hdashline
    0 & 1\\
  \end{array}
\right].
\end{equation}
The additive inverse matrix of ${\bf T}_{\rm no}(3, 2)$ is defined as ${\bf T}_{\rm no, ai}(3, 2)$, 
given by
\begin{equation} 
{\bf T}_{\rm no, ai}(3, 2) = {\bf P} - {\bf T}_{\rm no}(3, 2) =
\left[
  \begin{array}{cc}
    2 & 2\\
    1 & 2\\
    \hdashline
    0 & 2\\
  \end{array}
\right],
\end{equation}
where $(0)_3 = (3)_3$ for GF($3$).

Suppose ${\bf G}_{\rm M}^{\bf 1} = {\bf T}_{\rm no}(3,2)$ and ${\bf G}_{\rm M}^{\bf 0} = {\bf T}_{\rm no, ai}(3,2)$. We can construct a $3$-user NO-AI-CWEP code ${\Psi}_{\rm no} = \{C_1^{}, C_2^{}, C_3^{}\}$ over GF($3^2$), where $C_j^{} = (\alpha^{l_{j,0}}, \alpha^{l_{j,1}})$ for $1 \le j \le 3$.
Here, $\alpha^{l_{j,0}}$ and $\alpha^{l_{j,1}}$ correspond to the $j$-th rows of ${\bf G}_{\rm M}^{\bf 0}$ and ${\bf G}_{\rm M}^{\bf 1}$, respectively. Consequently, the AI-CWEPs of the $3$-user NO-AI-CWEP code ${\Psi}_{\rm no}$ are given by:
\begin{equation*}
C_1^{} = (22, 11), \qquad
C_2^{} = (12, 21), \qquad
C_3^{} = (02, 01).
\end{equation*} 
The loading factor of the proposed NO-AI-CWEP code is equal to $\eta = 1.5$, which is also referred to as a overload codeword-wise EP code.

Next, we enumerate all possible combinations of the $3$ users, including the user block ${\bf b}[k]$, output element $u_{j,k}$, and FFSP block $w_k$, as shown in Table I.
From Table I, we observe that ${\bf b}[k] = (000)_2$ and ${\bf b}[k] = (111)_2$ share the same FFSP block, $w_k = (00)_3$. This indicates that the proposed AI-CWEP code ${\Psi}_{\rm no}$ does not satisfy the USPM constraint, and therefore, it is not an UD-AI-CWEP code.

\begin{table} [t] \label{tab.1}
  \begin{center}
    \caption{Combinations of the NO-AI-CWEP code $\Psi_{\rm no}$.}
    \begin{tabular}{c| c c c | c | c}
    \hline\\[-5.0mm]\hline
    ${\bf b}[k]$ & $u_{1,k}$ & $u_{2,k}$ & $u_{3,k}$ & FFSP $w_k$ & CFSP ${\bf r}^{(k)}$\\
    \hline
   $(000)_2$                  & $(22)_3$ & $(12)_3$ & $(02)_3$ & \textcolor{blue}{$(00)_3$} & $0, -3$\\
   $(\textcolor{red}{1}00)_2$ & $(11)_3$ & $(12)_3$ & $(02)_3$ & $(22)_3$ & $+2, -1$\\
   $(0\textcolor{red}{1}0)_2$ & $(22)_3$ & $(21)_3$ & $(02)_3$ & $(12)_3$ & $-2, -1$\\
   $(00\textcolor{red}{1})_2$ & $(22)_3$ & $(12)_3$ & $(01)_3$ & $(02)_3$ & $0, -1$\\
  \hdashline
  $(\textcolor{red}{111})_2$ & $(11)_3$ & $(21)_3$ & $(01)_3$ & \textcolor{blue}{$(00)_3$} & $0, +3$\\
  $(0\textcolor{red}{11})_2$ & $(22)_3$ & $(21)_3$ & $(01)_3$ & $(11)_3$ & $-2, +1$\\
  $(\textcolor{red}{1}0\textcolor{red}{1})_2$ & $(11)_3$ & $(12)_3$ & $(01)_3$ & $(21)_3$ & $+2, +1$\\
  $(\textcolor{red}{11}0)_2$ & $(11)_3$ & $(21)_3$ & $(02)_3$ & $(01)_3$ & $0, +1$\\
    \hline\\[-5.0mm]\hline
    \end{tabular}
  \vspace{-0.35in}
  \end{center}
\end{table}


Although the proposed NO-AI-CWEP code ${\Psi}_{\rm no}$ may not be separable in a finite-field, it can be separated in a complex-field or through certain network structures, e.g., a 3-dimensional butterfly network, as will be introduced in the subsequent sections. $\blacktriangle \blacktriangle$

\section{An FFMA over GF($3^m$) system in a GMAC}

This section introduces the transceiver of an FFMA system over GF($3^m$) in a GMAC. The block diagram of the system model is shown in Fig. \ref{f_SystemModel}. The transmitter consists of an EP encoder, a channel code encoder ${\bf G}_{gc}$, a \textit{finite-field to complex-field transform function} ${\rm F}_{{\rm F2C}}$, and a \textit{power allocation (PA)} module.
Compared to the transmitter described in \cite{FFMA}, the transmitter in this paper introduces a new module: the PA module. By adding the PA module, we can achieve \textit{polarization adjustment (PA)}, resulting in a PA-FFMA system. In fact, the transmitter with the PA module represents a more general system model of the FFMA system.

The framework of the FFMA with the channel-coding system is presented in Sect. III. As mentioned earlier, this framework is divided into two sections: the information section and the parity section. The information section consists of $T$ data blocks, with the indices of the data blocks ranging from $1, 2, \ldots, t, \ldots, T$.
Next, the $t$-th data block (for $1 \leq t \leq T$) is assigned an AI-CWEP code $\Psi$, which operates in serial mode. 
The AI-CWEP code $\Psi$ is constructed over $\text{GF}(3^m)$, with its full-one and full-zero generator matrices $\mathbf{G}_{\rm M}^{\bf 1}$ and $\mathbf{G}_{\rm M}^{\bf 0}$ defined as follows:
\[
  \begin{array}{cc}
    \mathbf{G}_{\rm M}^{\bf 1} = \mathbf{T}(M,m), \quad
    \mathbf{G}_{\rm M}^{\bf 0} = \mathbf{T}_{\rm ai}(M,m) = \mathbf{P} - \mathbf{T}(M,m),
  \end{array}
\]
where $\mathbf{T}(M,m)$ is an $M \times m$ ternary matrix with elements from $\mathbb{T}$, and $\mathbf{P}$ is an $M \times m$ matrix where all elements are equal to $p = 3$.

To illustrate the transmission process, we use the $t$-th data block (for $1 \leq t \leq T$) as an example. Suppose that the \( t \)-th data block can support \( J \) users, each transmitting a single bit, where \( J \leq M \).

\begin{figure}[t] 
  \centering
  \includegraphics[width=0.97\textwidth]{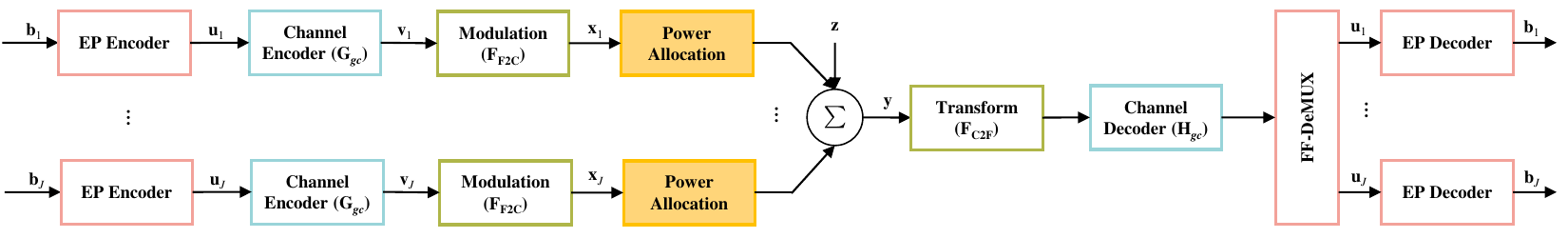}
 \caption{A block diagram of an FFMA system in a GMAC. The transmitter is consisted of an EP encoder, a channel code encoder ${\bf G}_{gc}$, a {finite-field to complex-field transform function} ${\rm F}_{{\rm F2C}}$, and {power allocation}.} 
 \label{f_SystemModel}
\end{figure}

\subsection{Transmitter of a PA-FFMA System}

Let the bit-sequence, consisting of a single bit, for the $j$-th user at the $t$-th block be denoted as ${\mathbf b}_j^{(t)}$, where $1 \leq j \leq J$. This bit-sequence ${\mathbf b}_j^{(t)}$ is then processed by the EP encoder, which operates in serial mode. After encoding, the bit-sequence ${\mathbf b}_j^{(t)}$ is transformed into the encoded element ${\mathbf u}_j^{(t)}$.
Since the encoded element ${\mathbf u}_j^{(t)}$ corresponds only to the $t$-th data block, we define the element sequence as:
\[
{\mathbf u}_{j,\rm D}^{(t)} = \left( {\mathbf 0}, \ldots, {\mathbf 0}, {\mathbf u}_j^{(t)}, {\mathbf 0}, \ldots, {\mathbf 0} \right),
\]
where ${\mathbf u}_j^{(t)}$ is placed at the $t$-th data block, and all other blocks are filled with zero vectors. Here, ${\mathbf 0}$ denotes a $1 \times m$ zero vector.

Subsequently, the sequence ${\mathbf u}_{j,\rm D}^{(t)}$ is encoded using an $(N, mT)$ linear block channel code ${\mathcal C}_{gc}$. The generator matrix $\mathbf{G}_{gc}$ of the code ${\mathcal C}_{gc}$ is structured in systematic form, consisting of two parts: the information section $\mathbf{I}_{\rm inf}$ and the parity section $\mathbf{F}_{\rm red}$. The subscripts ``inf'' and ``red'' refer to ``information'' and ``redundancy'', respectively.
The information section $\mathbf{I}_{\rm inf}$ is an $mT \times mT$ identity matrix, organized into a $T \times T$ array, where each element is either an $m \times m$ identity matrix $\mathbf{I}_{m}$ or an $m \times m$ zero matrix $\mathbf{0}$.
The parity section $\mathbf{F}_{\rm red}$ is structured as a $T \times 1$ array, defined as
$\mathbf{F}_{\rm red} = \left[ \mathbf{f}_{1}, \mathbf{f}_{2}, \ldots, \mathbf{f}_{t}, \ldots, \mathbf{f}_{T} \right]^{\rm T}$,
where each $\mathbf{f}_{t}$ is an $m \times R$ matrix for $1 \leq t \leq T$. Therefore, the size of $\mathbf{F}_{\rm red}$ is $mT \times R$.
Hence, the generator matrix in systematic form, $\mathbf{G}_{gc}$, can be expressed as:
\begin{equation} \label{e.gc_sym}
  \begin{aligned}
  {\bf G}_{gc} = [{\bf I}_{\rm inf} | {\bf F}_{\rm red}]  
  = \left[
  \begin{array}{ccccc:c}
  {\bf I}_{m} & {\bf 0}      & \ldots & {\bf 0}     & {\bf 0}     & {\bf f}_{1}  \\
  {\bf 0}     & {\bf I}_{m}  & \ldots & {\bf 0}     & {\bf 0}     & {\bf f}_{2}  \\
  \vdots      & \vdots       & \ddots & \vdots      & \vdots      & \vdots  \\
  {\bf 0}     & {\bf 0}      & \ldots & {\bf I}_{m} & {\bf 0}     & {\bf f}_{T-1}  \\
  {\bf 0}     & {\bf 0}      & \ldots & {\bf 0}     & {\bf I}_{m} & {\bf f}_{T}  \\
  \end{array}
  \right].
  \end{aligned}
\end{equation}

The encoded codeword ${\mathbf v}_{j}^{(t)}$ of the $j$-th user at the $t$-th data block can be written as
\begin{equation}
    \begin{aligned}
   {\mathbf v}_j^{(t)} = {\mathbf u}_{j,\rm D}^{(t)} \cdot {\mathbf G}_{gc}
        = ({\mathbf 0}, \ldots, {\mathbf 0}, {\mathbf u}_{j}^{(t)}, {\mathbf 0}, \ldots, {\mathbf 0}, 
                \textcolor{blue}{{\mathbf v}_{j, \rm red}^{(t)}}),
 \end{aligned}
\end{equation}
which consists of $T$ data blocks and a parity section, ${\mathbf v}_{j, \rm red}^{(t)} = {\mathbf u}_j^{(t)} \cdot {\mathbf F}_{\rm red}$. Here, the element ${\mathbf u}_{j}^{(t)}$ is placed in the $t$-th data block, with all other positions being zero.
For the sake of discussion, we also define $\mathbf{v}_{j,}^{(t)}$ in symbol-level form as $\mathbf{v}_{j}^{(t)} = (v_{j,0}^{(t)}, v_{j,1}^{(t)}, \ldots, v_{j,n}^{(t)}, \ldots, v_{j,N-1}^{(t)})$, which is a $1 \times N$ vector and $v_{j,n}^{(t)} \in {\mathbb T}$. 

Next, we modulate the encoded codeword ${\bf v}_{j}^{(t)}$ into a complex-field signal sequence ${\bf x}_{j}^{(t)}$, using the \textit{finite-field to complex-field transform function}, denoted as ${\rm F}_{\rm F2C}$. 
We can define ${\bf x}_{j}^{(t)}$ in symbol-level form as
${\bf x}_{j}^{(t)} = (x_{j,0}^{(t)}, x_{j,1}^{(t)}, \ldots, x_{j,n}^{(t)}, \ldots, x_{j,N-1}^{(t)})$, which is a $1 \times N$ vector.
Considering each symbol $v_{j,n}^{(t)} \in {\mathbb T}$, we utilize 3ASK (Amplitude Shift Keying) modulation as the transform function ${\rm F}_{\rm F2C}$, defined as follows:
\begin{equation} \label{e.3ASK}
x_{j,n}^{(t)} = {\rm F}_{\rm F2C}(v_{j,n}^{(t)}) =
\left\{
  \begin{matrix}
  +1,   & v_{j,n}^{(t)} = (1)_3 \\
   0,   & v_{j,n}^{(t)} = (0)_3 \\
  -1,   & v_{j,n}^{(t)} = (2)_3 \\
  \end{matrix},
\right.
\end{equation}
where $1 \le t \le T$, $1 \le j \le J$ and $0 \le n < N$. 
Similarly, the signal sequence ${\bf x}_{j}^{(t)}$ can be divided into $T$ signal data blocks and one signal parity section, i.e., 
${\bf x}_{j}^{(t)} = ({\bf 0}, \ldots, {\bf 0}, {\bf s}_{j}^{(t)}, {\bf 0}, \ldots, {\bf 0}, \textcolor{blue}{{{\bf x}_{j, \rm red}}})$, where ${\bf s}_{j}^{(t)}$ is a $1 \times m$ signal data block given by 
${\bf s}_{j}^{(t)} = {\rm F}_{\rm F2C}({\bf u}_{j}^{(t)})$.

Next, the modulated signal sequence \( \mathbf{x}_j^{(t)} \) undergoes power allocation. We define the \textit{polarization-adjusted vector (PAV)} in symbol-level form as \( \mu_{\text{pav}} = (\mu_0, \mu_1, \dots, \mu_n, \dots, \mu_{N-1}) \), which is a \( 1 \times N \) vector. In block-level form, the PAV is given by $\mu_{\text{pav}} = (\mathbf{0}, \dots, \mathbf{0}, \mu_{\text{inf}}^{(t)}, \mathbf{0}, \dots, \mathbf{0}, \mu_{\text{red}})$,
where \( \mu_{\text{inf}}^{(t)} \) is a \( 1 \times m \) non-negative vector representing the power allocated to the $t$-th data block, \( \mu_{\text{red}} \) is a \( 1 \times R \) non-negative vector representing the power allocated to the parity section, and \( \mathbf{0} \) denotes a \( 1 \times m \) zero vector. Let \( P_{\text{avg}} \) denote the average transmit power per symbol. To ensure that the total transmit power remains constant at \( N \cdot P_{\text{avg}} \), the following condition must hold:
\begin{equation}
  N = \|\mu_{\rm inf}^{(t)}\|_1 + \|\mu_{\rm red}\|_1,
\end{equation}
where $\| \cdot \|_1$ denotes the 1-norm.
Next, the transmit signal ${\bf x}_{j}^{(t)}$ is given by
\begin{equation}
  {\bf x}_{j, \rm PA}^{(t)} =  \sqrt{\mu_{\rm pav} P_{\rm avg}}  \circ 
                                {\bf x}_{j}^{(t)}
  = ({\bf 0}, \ldots, {\bf 0}, 
   \sqrt{\mu_{\rm inf}^{(t)} P_{\rm avg}} \circ {\bf s}_{j}^{(t)}, 
   {\bf 0}, \ldots, {\bf 0}, 
   \sqrt{\mu_{\rm red} P_{\rm avg}} \circ \textcolor{blue}{{\bf x}_{j, \rm red}}),
\end{equation}
where $\circ$ denotes the Hadamard product. The resulting signal ${\bf x}_{j, \rm PA}^{(t)}$ is then transmitted over a GMAC.

\subsection{FFSP Sequence}
To recover the bit sequences of the \( J \) users at reciver, it is essential to know the FFSP sequence \( \mathbf{w} \). As discussed earlier, the FFSP sequence consists of \( T \) FFSP blocks, denoted as
$\mathbf{w} = ({w}^{(1)}, {w}^{(2)}, \dots, \\ {w}^{(t)}, \dots, {w}^{(T)})$, where \( {w}^{(t)} \) represents the \( t \)-th FFSP block for \( 1 \leq t \leq T \). 
The codeword of the FFSP sequence \( \mathbf{w} \), encoded using the channel code \( \mathcal{C}_{gc} \), is expressed as
\[
\mathbf{v} = \mathbf{w} \cdot \mathbf{G}_{gc} 
= \bigoplus_{j=1}^{J} \mathbf{v}_{j}^{(t)}
= (\mathbf{w}, \textcolor{blue}{\mathbf{v}_{\text{red}}}),
\]
where \( \mathbf{v}_{\text{red}} \) denotes the parity section, which is given by \( \mathbf{v}_{\text{red}} = \mathbf{w} \cdot \mathbf{F}_{\text{red}} \). In symbol-level form, the codeword \( \mathbf{v} \) is written as
\(
\mathbf{v} = (v_0, v_1, \dots, v_n, \dots, v_{N-1}),
\)
which is a \( 1 \times N \) vector.

By obtaining the FFSP sequence \( \mathbf{w} \), the user block can be recovered, allowing us to retrieve the bit sequences of the \( J \) users at the reciver.

\vspace{-0.1in}
\subsection{Receiver of a PA-FFMA System}

At the receiving end, the received signal ${\bf y}$ is represented in block-level form as ${\bf y} = ({\bf y}_{\rm inf}, {\bf y}_{\rm red})$, where ${\bf y}_{\rm inf} = ({\bf y}^{(1)}, {\bf y}^{(2)}, \ldots, {\bf y}^{(t)}, \ldots, {\bf y}^{(T)})$, or in symbol-level form as ${\bf y} = (y_0, y_1, \ldots, y_n, \ldots, y_{N-1}) \in {\mathbb C}^{1 \times N}$. This represents the combined outputs of the $J$ users plus noise, given by
\vspace{-0.1in}
\begin{equation} 
{\bf y} = \sum_{j=1}^{J} {\bf x}_{j, \rm PA}^{(t)} + {\bf z} 
= \sqrt{\mu_{\rm inf}^{(t)} P_{\rm avg}} \circ \sum_{j=1}^{J} {\bf x}_{j}^{(t)} + {\bf z},
\end{equation}
where ${\bf z} \in \mathbb{C}^{1 \times N}$ represents an AWGN sequence with distribution ${\mathcal N}(0, N_0/2)$. 

We define ${\bf r} = \sum_{j=1}^{J} {\bf x}_{j}^{(t)}$, where ${\bf r}$ denotes the \textit{complex-field sum-pattern (CFSP)} of the modulated signal sequences. The CFSP signal sequence ${\bf r}$ can be represented in symbol-level form as
\(
{\bf r} = (r_0, r_1, \ldots, r_n, \ldots, r_{N-1}) \in \mathbb{C}^{1 \times N}.
\)
The $n$-th component $r_n$ of ${\bf r}$ is given by $r_{n} = \sum_{j=1}^{J} x_{j,n}^{(t)}$ for $0 \le n < N$.

Alternatively, we can partition ${\bf r}$ into block-level form as ${\bf r} = ({\bf r}_{\rm inf}, {\bf r}_{\rm red})$, where ${\bf r}_{\rm inf}$ consists of $T$ CFSP data blocks, such as ${\bf r}_{\rm inf} = ({\bf r}^{(1)}, {\bf r}^{(2)}, \ldots, {\bf r}^{(t)}, \ldots, {\bf r}^{(T)})$, 
and a parity CFSP section, denoted ${\bf r}_{\rm red}$. 

\begin{figure}[t] 
  \centering
  \includegraphics[width=0.56\textwidth]{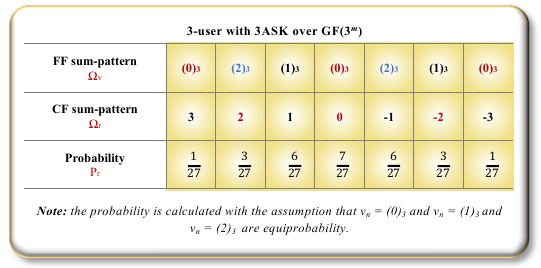}
 \caption{The relationship between CFSP signal $r_n$ and FFSP symbol $v_n$ of a $3$-user FFMA over GF($3^m$) system, where each user utilizes 3ASK.} 
 \label{f_TernaryMapTable}
 \vspace{-0.25in}
\end{figure}

Analogous to the BPSK signaling case, we also need to transform the received CFSP signal $r_n$ into its corresponding FFSP symbol $v_n$ using a \textit{complex-field to finite-field (C2F)} transform function $\rm F_{C2F}$, i.e., ${v}_n = {\rm F_{C2F}}({r}_n)$ for $0 \le n < N$. By examining the relationship between the CFSP signal $r_n$ and the FFSP symbol $v_n$, we can identify several key observations, as shown in Fig. \ref{f_TernaryMapTable}.
\begin{enumerate}
\item
The maximum and minimum values of $r_{n}$ are $J$ and $-J$, respectively. 
The set of $r_{n}$'s values in descending order is $\Omega_r = \{J, J-1, \ldots, -J\}$, in which the difference between two adjacent values is $1$. 
The total number of elements in $\Omega_r$ is equal to $|\Omega_r| = 2J+1$. 
\item
Since the values of $r_{n}$ are arranged in descending order, based on the transform function ${\rm F}_{\rm F2C}$, it is able to derive that the corresponding ternary set $\Omega_v$ of $\Omega_r$ is 
$\{(J)_3, (J-1)_3, (J-2)_3, \ldots\}$, in which $(0)_3$, $(2)_3$, and $(1)_3$ appear alternately,
i.e., $(0)_3 \mapsto (2)_3 \mapsto (1)_3 \mapsto (0)_3$.
When $r_n = 0$, the corresponding finite-field symbol $v_n$ is always $v_n = (0)_3$.
The number of $|\Omega_v|$ is also equal to $2J+1$, i.e., $|\Omega_v| = |\Omega_r|$.
\item
Assume $v_{n} = (0)_3, (1)_3$ and $(2)_3$ are equally likely, 
i.e., ${\rm Pr}(v_{n}=(0)_3) = {\rm Pr}(v_{n} = (1)_3) = {\rm Pr}(v_{n} = (2)_3) = 1/3$.
Let ${\iota}$ and ${\nu}$ denote the numbers of users which send ``$+1$'' and ``$-1$'', respectively.
Then, $J - \iota - \nu$ users send ``$0$''. 
In this case, the received CFSP signal $r_n$ is equal to $r_n = \iota - \nu$, whose probability is calculated as
\begin{equation*}
  {\mathcal P}_r(r_n = \iota - \nu) 
  = \frac{C_J^{\iota} \cdot C_{J-\iota}^{\nu}}{3^J},
\end{equation*}
where $0 \le \iota, \nu \le J$ and $\iota + \nu \le J$.
Hence, for a given $r_n$, we can calculate its probability ${\mathcal P}_r(r_n)$.
For $r_n = J, J-1, \ldots, -J$, we can obtain the corresponding probability set 
${\mathcal P}_r = \{{\mathcal P}_r(r_n=J), {\mathcal P}_r(r_n=J-1), \ldots, {\mathcal P}_r(r_n=-J)\}$, where $|{\mathcal P}_r| = |\Omega_r| = 2J + 1$.
\end{enumerate}

Based on the above facts, the transform function $\rm F_{C2F}$ maps each received CFSP signal $r_{n}$ to a unique FFSP symbol $v_{n}$, i.e., ${\rm F_{C2F}}: r_{n} \mapsto v_{n}$. 
With $r_{n} \in \{J, J-1, \ldots, -J\}$ and $v_{n} \in {\mathbb T}$, 
$\rm F_{C2F}$ is a \textit{many-to-one mapping} function.

Next, we calculate the posterior probabilities used for decoding ${\bf y}$. 
Since $y_{n}$ is determined by $r_{n}$ that is selected from the set $\Omega_r$, thus,
\begin{equation} \label{e_MAP}
  \begin{aligned}
  P(y_{n}) 
         &= \sum_{l=0}^{2J+1} {\mathcal P}_r(l) \cdot 
         \frac{1}{\sqrt{\pi N_0}} \exp \left\{
         - \frac{\left[y_{n} - \mu_{n} \Omega_r(l) \right]^2}{N_0} 
         \right\},  \\
  \end{aligned}
\end{equation}
where ${\mathcal P}_r(l)$ and $\Omega_r(l)$ stand for the $l$-th element in the sets ${\mathcal P}_r$ and $\Omega_r$, respectively. 
Thus, the posteriori probabilities of $v_{n} = \varsigma$ with $\varsigma \in {\mathbb T}$ are respectively given by

\vspace{-0.1in}
\begin{small}
\begin{equation}
  \begin{aligned}
%
   P(v_{n}=\varsigma |y_{n}) = \frac{\sum_{\Omega_v(l)= \varsigma} 
            {\mathcal P}_r(l)\cdot 
  \exp \left\{-\frac{\left[y_{n} - \mu_n \Omega_r(l) \right]^2}{N_0}\right\}}
  {{\sqrt{\pi N_0}} \cdot {P(y_{n})}}.\\
  \end{aligned}
\end{equation}
\end{small}

Next, \( P(v_n = \varsigma \mid y_n) \) is employed for decoding \( \mathbf{y} \). When LDPC codes are used as channel codes, the \( q \)-ary Sum Product Algorithm (QSPA) can be applied to decode the received signals and obtain the detected FFSP sequence
$\hat{\mathbf{w}} = (\hat{w}^{(1)}, \hat{w}^{(2)}, \ldots, \hat{w}^{(T)})$,
which consists of \( T \) FFSP blocks. Based on the obtained FFSP sequence \( \hat{\mathbf{w}} \) and the generator matrix \( \mathbf{G}_{\rm M}^{\mathbf{1}} \), the user block can then be recovered, which will further discussed in the following section.

In fact, we can easily extend the transmission process from the case where there are \( J \) users, each transmitting a single bit, to the case where there are \( J \) users, each transmitting \( K \) bits. We assume that the number of bits \( K \) for each user is less than or equal to the number of data blocks \( T \), i.e., \( K \leq T \). In this case, we can assign the \( k \)-th user-block \( {\mathbf b}[k] \) of \( J \) users to occupy the \( k \)-th data block of the information section, while the rest of the transmission process remains the same as described in the previous section. 
The corresponding equations can use either the superscript ``($k$)'' to denote the \( k \)-th data block or the subscript ``\( k \)'' to indicate the \( k \)-th user-block, both representing the same concept.

Note that, since the EP encoder operates in serial mode and the generator matrix \( \mathbf{G}_{gc} \) is in systematic form, the values and probabilities of the CFSP signals in the information section, specifically \( \Omega_r \) and \( \mathcal{P}_r \), can be calculated in an alternative manner.
As mentioned earlier, when the UD-AI-CWEP code \( \Psi_{\rm ai,T} \) is used in an FFMA system, the elements in \( \Psi_{\rm ai,T} \) consist solely of \( (1)_3 \) and \( (2)_3 \). In this case, the transform function \( {\rm F}_{\rm F2C} \) given by (\ref{e.3ASK}) simplifies to BPSK modulation.
\begin{equation} \label{e.BPSK}
x_{j,n}^{(t)} = {\rm F}_{\rm F2C}(v_{j,n}^{(t)}) =
\left\{
  \begin{matrix}
  +1,   & v_{j,n}^{(t)} = (1)_3 \\
  -1,   & v_{j,n}^{(t)} = (2)_3 \\
  \end{matrix}.
\right.
\end{equation}
Thus, we can directly apply the transform function ${\rm F}_{\rm C2F}$ for BPSK signaling, as discussed in \cite{FFMA}, to compute the values and probabilities of the CFSP signals for the information section.

\textbf{Example 7:}
Here, we provide an example to examine the transmission and reception processes of FFMA with channel coding. The encoded codewords \( {\mathbf v}_j \) are given in Example 5. The PAV \( \mu_{\rm pav} \) is set to be a \( 1 \times 16 \) full one vector, i.e., \( \mu_{\rm pav} = (1, 1, \dots, 1) \). Each codeword \( {\mathbf v}_j \) is then modulated into a signal sequence \( {\mathbf x}_j \) as follows:

\vspace{-0.2in}
\begin{small}
\begin{equation*}
  \begin{array}{lc}
{\bf v}_1 = 
(1111, 1111, 2222, \textcolor{blue}{1111}) \Rightarrow  
{\bf x}_1=\left((+1,+1,+1,+1), (+1,+1,+1,+1), (-1,-1,-1,-1), (+1,+1,+1,+1)\right);\\
{\bf v}_2 = 
(2121, 1212, 2121, \textcolor{blue}{0000}) \Rightarrow 
{\bf x}_2=\left((-1,+1,-1,+1), (+1,-1,+1,-1), (-1,+1,-1,+1), (0, 0, 0, 0)\right);\\
{\bf v}_3 = 
(1122, 1122, 2211, \textcolor{blue}{0102}) \Rightarrow 
{\bf x}_3=\left((+1,+1,-1,-1), (+1,+1,-1,-1), (-1,-1,+1,+1), (0, +1, 0, -1) \right).
  \end{array}
\end{equation*}
\end{small}

Then, the three modulated signal sequences ${\bf x}_1$,  ${\bf x}_2$, and  ${\bf x}_3$ are sent to the GMAC.
At the receiving end, assuming no effect of noise, the received CFSP signal sequence is

\vspace{-0.1in}
\begin{small}
\begin{equation*}
   \begin{array}{ll}
{\bf r} &= \sum_{j=1}^{3} {\bf x}_j = ({\bf r}_{\rm inf}, {\bf r}_{\rm red})
         = ({\bf r}_0, {\bf r}_1, {\bf r}_2, {\bf r}_{\rm red}) \\
        &= \left((+1, +3, -1, +1), (+3, +1, +1, -1), (-3, -1, -1, +1), (\textcolor{blue}{+1, +2, +1, 0})\right),
   \end{array}
\end{equation*}
\end{small}

\vspace{-0.05in}
which is then demodulated based on the designed transform function ${\rm F}_{\rm C2F}$. 
For $J = 3$, we have 
\begin{equation*}
\Omega_r = \{+3, +2, +1, 0, -1, -2, -3\} \quad \rm{and} \quad
\Omega_v = \{0, 2, 1, 0, 2, 1, 0\}. 
\end{equation*}
Hence, it is derived that 
${\rm F_{C2F}}(+3) = (0)_3$, 
${\rm F_{C2F}}(+2) = (2)_3$, 
${\rm F_{C2F}}(+1) = (1)_3$, 
${\rm F_{C2F}}(0) = (0)_3$, 
${\rm F_{C2F}}(-1) = (2)_3$, 
${\rm F_{C2F}}(-2) = (1)_3$, 
and ${\rm F_{C2F}}(-3) = (0)_3$.

Since no noise effect is assumed, this complex-field to finite-field transformation gives the following received sequence:
  \begin{equation*}
  \hat{\bf v} = {\rm F_{C2F}}({\bf r}) = 
  \left((1, 0, 2, 1), (0, 1, 1, 2), (0, 2, 2, 1), (\textcolor{blue}{1, 2, 1, 0})\right)_3.
  \end{equation*}
After removing the parity section, the decoded FFSP sequence \( \hat{\mathbf{w}} \) is given by
  \begin{equation*}
   \hat{\mathbf{w}} = (w_0, w_1, w_2) = (1021, 0112, 0221)_3,
  \end{equation*}
which consists of $3$ FFSP blocks, with each FFSP block being a 4-tuple. 
  
In the above, we assume that no noise affects the transmission. If the transmission is impacted by noise, the channel decoder must perform an error-correction process to estimate the transmitted sequence.  
$\blacktriangle \blacktriangle$

\section{Decoding of CWEP Codes}

This section investigates the decoding of the proposed CWEP codes, including the UD-S-CWEP codes constructed over GF($2^m$), UD-AI-CWEP codes constructed over GF($3^m$), and the NO-AI-CWEP code constructed over GF($3^2$).

\vspace{-0.1in}
\subsection{Decoding of UD-S-CWEP Codes}

As previously mentioned, we can utilize all classical linear block codes, such as LDPC codes, as the generator matrices ${\bf G}_{\rm M}^{\bf 1}$ for the UD-S-CWEP codes. In fact, the FFSP sequence can be viewed as an encoded codeword generated by passing the user block ${\bf b}_{\rm pll}$ through the LDPC encoder with generator matrix ${\bf G}_{\rm M}^{\bf 1}$.

Therefore, to decode a UD-S-CWEP code constructed over GF($2^m$), with its generator matrix designed based on an LDPC code, we can employ classical message passing algorithms (MPA), such as the sum-product algorithm (SPA), to de-multiplex the FFSP sequence, or alternatively, use the \textit{bifurcated minimum distance (BMD)} detection algorithm \cite{FFMA}. 
In general, classical message passing algorithms are more suitable for scenarios where the user is transmitting a relatively larger number of bits, while the BMD algorithm is better suited for cases involving a relatively smaller number of transmitted bits. 
These results will be further investigated in the simulation section.

\vspace{-0.1in}
\subsection{Decoding of UD-AI-CWEP Codes}

In Section V, we have constructed $\kappa$-fold ternary orthogonal matrix ${\bf T}_{\rm o}(2^{\kappa}, 2^{\kappa})$ over GF($3^m$). 
Now, we transform the finite-field $\kappa$-fold ternary orthogonal matrix ${\bf T}_{\rm o}(2^{\kappa}, 2^{\kappa})$ into complex-field form, 
by using the transform function ${\rm F}_{\rm F2C}$ given by (\ref{e.3ASK}).
It is easy to derive that, the complex-field form of ${\bf T}_{\rm o}(2^{\kappa}, 2^{\kappa})$, 
i.e., ${\bf C}_{\rm wal}(2^{\kappa}, 2^{\kappa}) = {\rm F}_{\rm F2C}\left({\bf T}_{\rm o}(2^{\kappa}, 2^{\kappa})\right)$, is an \textit{orthogonal Walsh code}, where the subscript ``wal'' stands for ``Walsh''. 
For $\kappa = 1, 2, 3$, it is shown that
\vspace{-0.1in}

\begin{small}
\begin{subequations}
  \begin{gather}
  {\rm F}_{\rm F2C}\left({\bf T}_{\rm o}(2, 2)\right)  = \left[
    \begin{matrix}
      +1 & +1 \\
      -1 & +1 \\
    \end{matrix}
  \right], \\
  {\rm F}_{\rm F2C}\left({\bf T}_{\rm o}(2^2, 2^2)\right)  = 
   \left[
    \begin{array}{cccc}
      +1 & +1 & +1 & +1\\
      -1 & +1 & -1 & +1\\
      -1 & -1 & +1 & +1\\
      +1 & -1 & -1 & +1\\
    \end{array}
  \right], \\
 {\rm F}_{\rm F2C}({\bf T}_{\rm o}(2^3, 2^3))  
 = \left[
    \begin{array}{cccccccc}
      +1 & +1 & +1 & +1 & +1 & +1 & +1 & +1\\
      -1 & +1 & -1 & +1 & -1 & +1 & -1 & +1\\
      -1 & -1 & +1 & +1 & -1 & -1 & +1 & +1\\
      +1 & -1 & -1 & +1 & +1 & -1 & -1 & +1\\
      -1 & -1 & -1 & -1 & +1 & +1 & +1 & +1\\
      +1 & -1 & +1 & -1 & -1 & +1 & -1 & +1\\
      +1 & +1 & -1 & -1 & -1 & -1 & +1 & +1\\
      -1 & +1 & +1 & -1 & +1 & -1 & -1 & +1\\
    \end{array}
  \right].
  \end{gather}
\end{subequations}
\end{small}

The orthogonal Walsh codes have been widely used in CDMA systems.
The \( j \)-th row of \( {\bf C}_{\rm wal} \), denoted as \( {\bf C}_{{\rm wal},j} \), represents the spreading code for the \( j \)-th user, where \( 1 \leq j \leq 2^{\kappa} \), with a spreading factor (SF) of \( 2^{\kappa} \).
Hence, without considering the channel code ${\mathcal C}_{gc}$, the FFMA over GF($3^m$) system utilizing the UD-AI-CWEP code $\Psi_{\rm ai,T}$ devolves into a classical CDMA system. 
In contrast, considering the channel code ${\mathcal C}_{gc}$, the FFMA over GF($3^m$) system based on the UD-AI-CWEP code $\Psi_{\rm ai,T}$ can form an \textit{error-correction orthogonal spreading code}.

There are two detection methods to recover the bit-sequences. One is by \textit{complex-field correlation operation}, and the other is by \textit{finite-field correlation operation}.
\begin{enumerate}
  \item
  Complex-field correlation operation. 
  As mentioned previously, for the $k$-th bit of the $j$-th user, denoted as $b_{j,k} \in \mathbb{B}$, it is assigned the AI-CWEP $C_j$. After passing through the EP encoder in serial mode, the output element ${\bf u}_{j}^{(k)}$, which is an $m$-tuple, is transformed into its complex-field representation ${\bf s}_{j}^{(k)}$, also an $m$-tuple, using the transform function ${\rm F}_{\rm F2C}$. 
  It is straightforward to derive that 
  ${\bf s}_{j}^{(k)} = {\rm F}_{\rm F2C}({\bf u}_{j}^{(k)}) = (2b_{j,k} - 1) \cdot {\bf C}_{{\rm wal},j}$.
  Therefore, the detected bit $\hat{b}_{j,k}$ is given by:
  \begin{equation}
  \hat{b}_{j,k} = \left\{
    \begin{array}{ll}
      (1)_2, & \text{if } {\bf y}^{(k)} \cdot {\bf C}_{{\rm wal},j} > \delta_{th}, \\
      (0)_2, & \text{if } {\bf y}^{(k)} \cdot {\bf C}_{{\rm wal},j} < -|\delta_{th}|, \\
    \end{array}
  \right.
  \end{equation}
  where $\delta_{th} \geq 0$ is the detection threshold. When the probabilities of $(0)_2$ and $(1)_2$ are equal, we can set $\delta_{th} = 0$.
  
  \item
  Finite-field correlation operation. 
  Based on the properties of the ternary orthogonal matrix ${\bf T}_{\rm o}$, we can also do finite-field correlation operation to the FFSP $w_k$ for recovering the transmit bits. If ${\bf T}_{\rm o} \cdot {\bf T}_{\rm o}^{\rm T} = {\bf I}_{2^\kappa}$, it is able to derive that
  \begin{equation}
  \hat{b}_{j,k}  = \left\{
    \begin{array}{ll}
      (1)_2, & {w}_k \cdot{\alpha}^{l_{j,1}} = (1)_3,\\
      (0)_2, & {w}_k \cdot{\alpha}^{l_{j,1}} = (2)_3.\\
    \end{array}
    \right.
  \end{equation}
  Otherwise, if ${\bf T}_{\rm o} \cdot {\bf T}_{\rm o}^{\rm T} = 2 \cdot {\bf I}_{2^\kappa}$, the recovered bits are given as 
  \begin{equation}
  \hat{b}_{j,k}  = \left\{
    \begin{array}{ll}
      (1)_2, & {w}_k \cdot{\alpha}^{l_{j,1}} = (2)_3,\\
      (0)_2, & {w}_k \cdot{\alpha}^{l_{j,1}} = (1)_3,\\
    \end{array}
    \right.
  \end{equation}
  where ${\alpha}^{l_{j,1}}$ is the $j$-th row of the orthogonal ternary matrix ${\bf T}_{\rm o}$.
\end{enumerate}

\textbf{Example 8:}
Recall Example 7, where we obtained the CFSP signal sequence ${\bf r}$ and its information section, i.e., ${\bf r}_{\rm inf} = ({\bf r}_0, {\bf r}_1, {\bf r}_2)$, as follows:
\begin{equation*}
{\bf r}_{\rm inf} = \left((+1, +3, -1, +1), (+3, +1, +1, -1), (-3, -1, -1, +1) \right).
\end{equation*}
If we exploit complex-field correlation operation, the detected bits are 
\begin{equation*} 
  \begin{array}{ll}
  \hat{\bf b}_1 = {\bf r}_{\rm inf} \cdot (+1, +1, +1, +1)
                = (+4, +4, -4) = (1, 1, 0)_2,\\
  \hat{\bf b}_2 = {\bf r}_{\rm inf} \cdot (-1, +1, -1, +1)
                = (+4, -4, +4) = (1, 0, 1)_2,\\
  \hat{\bf b}_3 = {\bf r}_{\rm inf} \cdot (-1, -1, +1, +1)
                = (-4, -4, +4) = (0, 0, 1)_2. \\
  \end{array}
\end{equation*}
If finite-field correlation operation is utilized to recover the bits, we should do correlation to the FFSP sequence, i.e., $\hat{\bf w} = ({w}_0, {w}_1, {w}_2) = (1021, 0112, 0221)$.
The finite-field correlation operation is given as 
\begin{equation*} 
  \begin{array}{ll}
  \hat{\bf b}_1 &= \hat{\bf w} \cdot (1111)_3 = (1021, 0112, 0221)_3 \cdot (1111)_3 
                = (1, 1, 2)_3 = (1, 1, 0)_2,\\
  \hat{\bf b}_2 &= \hat{\bf w} \cdot (2121)_3 = (1021, 0112, 0221)_3 \cdot (2121)_3 
                = (1, 2, 1)_3 = (1, 0, 1)_2,\\
  \hat{\bf b}_3 &= \hat{\bf w} \cdot (2211)_3 = (1021, 0112, 0221)_3 \cdot (2211)_3 
                = (2, 2, 1)_3 = (0, 0, 1)_2. \\
  \end{array}
\end{equation*}
Hence, both the complex-field and finite-field correlation operations can recover the transmit bits successfully.
$\blacktriangle \blacktriangle$

\subsection{Decoding of NO-CWEP Codes}

When the NO-CWEP code $\Psi_{\rm no}$ is used to support an FFMA system, the system is referred to as \textit{NOMA in finite-field (FF-NOMA)}. Without considering the channel code ${\mathcal C}_{gc}$, the FFMA system over GF($3^m$) based on the NO-CWEP code $\Psi_{\rm no}$ degenerates into a standard NOMA system, whose SE is determined by the loading factor of $\Psi_{\rm no}$, i.e., $\eta = M/m$ under the assumption that $M > m$. Thus, the FF-NOMA system can be viewed as an \textit{error-correction non-orthogonal spreading code}.

Although the FFSP blocks of the NO-CWEP code $\Psi_{\rm no}$ are non-orthogonal in the finite field, their corresponding CFSP signal blocks are distinct, as shown in Table I. From Table I, there is a one-to-one mapping between the transmitted user-block ${\bf b}[k]$ and the received CFSP signal block ${\bf r}^{(k)}$. Thus, by employing the \textit{maximum a posteriori (MAP)} criterion, we can recover the bit sequences in the complex field without ambiguity. Due to page limitations, we present only a special case of the NO-CWEP code and use the fundamental MAP detection algorithm.


\begin{figure*}[t]
  \centering
  \includegraphics[width=0.95\textwidth]{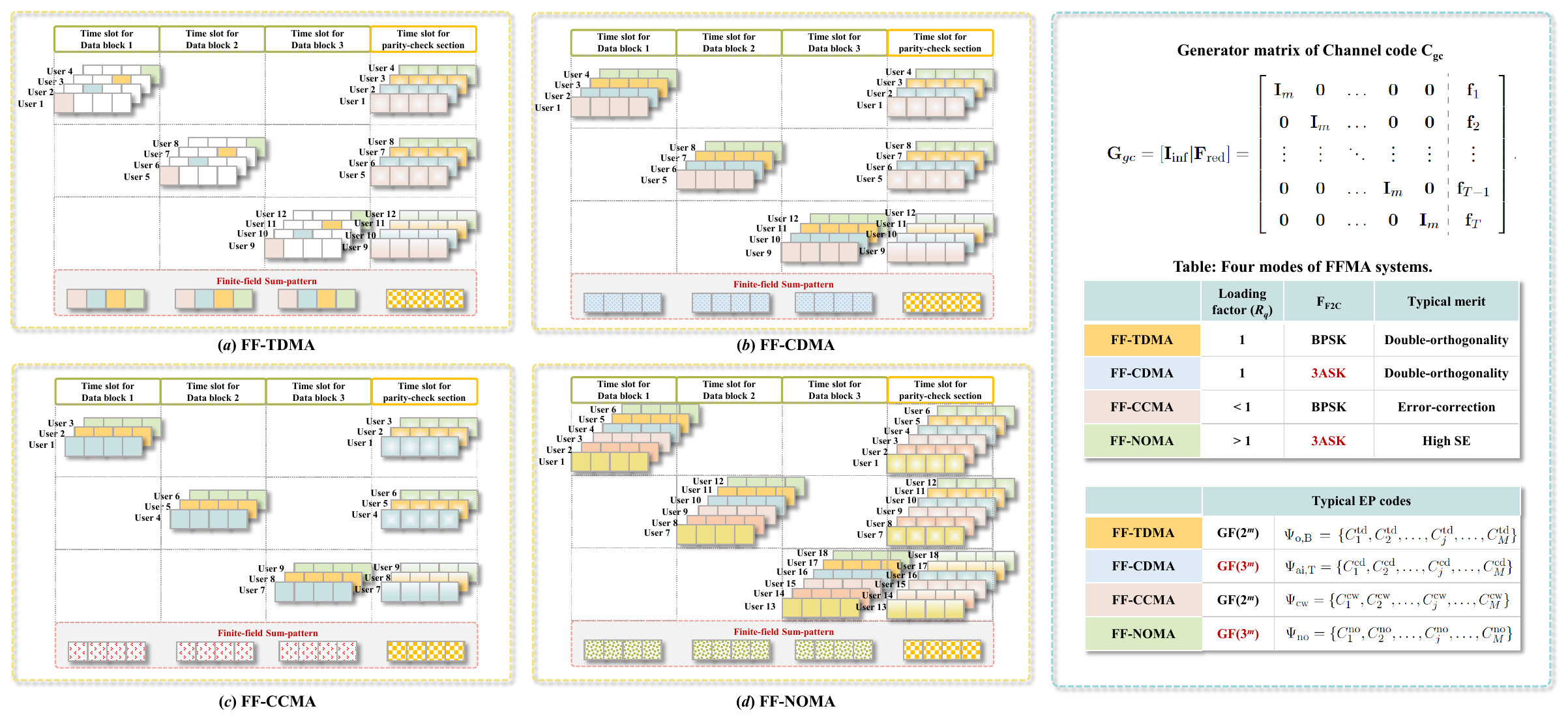}
  \caption{A diagram illustrating the four modes of FFMA systems: FF-TDMA, FF-CCMA, FF-CDMA, and FF-NOMA. Assume that a $(16, 12)$ linear block ${\mathcal C}_{gc}$ is used during transmission, with the parameters $m = 4$, $K =1$, $T = 3$, and $N = 16$.} 
  \label{f_summary}
  \vspace{-0.3in}
\end{figure*}

\section{A summary of FFMA systems}

In paper \cite{FFMA} and this paper, we totally introduce four modes of FFMA systems, they are FF-TDMA, FF-CCMA, FF-CDMA and FF-NOMA, as shown in Fig. \ref{f_summary}. Now, we summarize these four modes.

\subsection{FF-TDMA mode}

The loading factor of the generator matrix \( {\bf G}_{\rm M}^{\bf 1} \) in an FF-TDMA system is equal to one, i.e., \( \eta = 1 \). To support such an FF-TDMA system, the typical UD-EP code is constructed over \( \text{GF}(2^m) \), which is a type of symbol-wise EP code, denoted by \( {\Psi}_{\rm o,B} \), i.e., \( {\Psi}_{\rm o,B} = \{ C_1^{\rm td}, C_2^{\rm td}, \ldots, C_j^{\rm td}, \ldots, C_M^{\rm td} \} \), where \( C_j^{\rm td} = \alpha^{j-1} \cdot C_{\rm B} \) for \( 1 \le j \le M \) and \( M \le m \).

In addition, if the EP encoder operates in serial mode, the FF-TDMA system adopts a \textit{sparse-form (SF)} structure; whereas, if the EP encoder operates in parallel mode, the FF-TDMA system takes on a \textit{diagonal-form (DF)} structure. 

In \cite{FFMA}, we removed the default zeros in a DF-FFMA system to reduce the transmit power. Similarly, the default zeros in a SF-FFMA system can also be removed to achieve power reduction. By eliminating the default zeros in both SF-FFMA and DF-FFMA systems, it becomes possible to enable \textit{polarization-adjusted} configurations. This facilitates the optimization of channel capacity and the realization of PA-SF-FFMA and/or PA-DF-FFMA systems.
Unless otherwise specified, we adopt the PA configurations as the default in the following discussion. In other words, we use FFMA to refer to PA-FFMA for simplicity.

An FF-TDMA system exhibits \textit{double-orthogonality} in both finite-field and complex-field settings. This orthogonal property allows for flexible bit assignment to different users. Hence, the FF-TDMA mode serves as the foundation of an FFMA system.

Note that the FF-TDMA mode can be regarded as a special case of the FF-CCMA mode, since the orthogonal UD-EP code \( \Psi_{\rm o,B} \) is a special instance of the UD-CWEP code \( \Psi_{\rm cw} \). Therefore, we can summarize the above result in the following corollary:

\begin{corollary} \label{Coro_ff_TDMA}
  An FF-TDMA system can be viewed as a special case of the FF-CCMA system, which arises from the concatenation of a symbol-wise UD-EP code \( \Psi_{\rm o,B} \) and a channel code.
\end{corollary}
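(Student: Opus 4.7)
The plan is to establish the corollary by identifying the orthogonal UD-EP code $\Psi_{\rm o,B}$ with a particular specialization of the UD-S-CWEP family $\Psi_{\rm cw}$, and then invoking the definition of the FF-CCMA system as the concatenation of any UD-S-CWEP code with a channel code $\mathcal{C}_{gc}$. Since the corollary is structural rather than computational, the proof will be essentially a matter of matching definitions, with the key observation being the one already made in the S-CWEP subsection, namely that $\Psi_{\rm o,B}$ arises precisely when ${\bf G}_{\rm M}^{\bf 1} = {\bf I}_m$.

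First, I would recall the definition of $\Psi_{\rm o,B}$ over $\text{GF}(2^m)$ and verify that for each EP $C_j^{\rm td} = \alpha^{j-1}\cdot C_{\rm B}$ the ``zero'' element is ${\bf 0}$ and the ``one'' element is the $j$-th standard basis vector, so that the full-zero generator matrix equals ${\bf 0}$ and the full-one generator matrix equals the $m \times m$ identity ${\bf I}_m$. This immediately places $\Psi_{\rm o,B}$ inside the S-CWEP family (the full-zero matrix is zero, as required by the S-CWEP definition in Section II.B), and since ${\rm Rank}({\bf I}_m) = m$, the USPM constraint of Theorem~\ref{USPM_2m} is satisfied. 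Hence $\Psi_{\rm o,B}$ is a bona fide UD-S-CWEP code with loading factor $\eta = M/m = 1$.

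Next, I would appeal to the framework of Section III.D--III.E: an FF-CCMA system is, by construction, the concatenation of a UD-S-CWEP code $\Psi_{\rm cw}$ (whose ${\bf G}_{\rm M}^{\bf 1}$ is the generator matrix of the multiuser code $\mathcal{C}_{mc}$) with a global channel code $\mathcal{C}_{gc}$. Substituting $\Psi_{\rm cw} = \Psi_{\rm o,B}$, i.e.\ taking the multiuser code $\mathcal{C}_{mc}$ to be the trivial length-$m$ repetition/identity code with generator ${\bf I}_m$, the three-phase encoding ${\rm F}_{{\rm B}2q}/\mathcal{A}_{\rm M}/{\bf G}_{gc}$ reduces to the orthogonal time-slot assignment of the FF-TDMA scheme described in \cite{FFMA}, because ${\bf b}[k]\cdot{\bf I}_m = {\bf b}[k]$ simply places each user's bit in its own coordinate of the $m$-tuple FFSP block.

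The only thing requiring a small argument is the equivalence between the SF/DF structures used in FF-TDMA and the serial/parallel modes of the EP encoder described here; however, this follows directly from the remark in Section III.A that the serial mode corresponds to SF-FFMA (and similarly the parallel mode to DF-FFMA), so no new machinery is needed. I do not expect any genuine obstacle: the result is a definitional specialization, and the main point is merely to check that the identity-matrix instance of a UD-S-CWEP code reproduces exactly the orthogonal UD-EP code $\Psi_{\rm o,B}$ underlying FF-TDMA, which is precisely what Theorem~\ref{USPM_2m} together with the S-CWEP definition guarantees.
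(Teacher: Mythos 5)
Your proposal is correct and follows essentially the same route as the paper: the paper justifies the corollary by the one-line observation (made in Section II.B and repeated just before the corollary) that $\Psi_{\rm o,B}$ is the special instance of the UD-S-CWEP family whose full-one generator matrix is the $m \times m$ identity, so its concatenation with $\mathcal{C}_{gc}$ is by definition an FF-CCMA system. Your write-up merely makes explicit the checks the paper leaves implicit (zero full-zero matrix, ${\rm Rank}({\bf I}_m)=m$ for the USPM constraint of Theorem~\ref{USPM_2m}, loading factor $\eta=1$), which is consistent with, not different from, the paper's argument.
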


\vspace{-0.15in}
\subsection{FF-CCMA mode}
The loading factor of the generator matrix ${\bf G}_{\rm M}^{\bf 1}$ of an FF-CCMA system is smaller than one, i.e., $\eta < 1$. 
If the generator matrix ${\bf G}_{\rm M}^{\bf 1}$ satisfies the USPM structural property constraint given by Theorem \ref{USPM_2m}, we can construct an UD-S-CWEP code over GF($2^m$), denoted by ${\Psi}_{\rm cw}$, i.e., ${\Psi}_{\rm cw} = \{C_1^{\rm cw}, C_2^{\rm cw}, \ldots, C_j^{\rm cw}, \ldots, C_M^{\rm cw}\}$, with $C_j^{\rm cw} = ({\bf 0},\alpha^{l_{j,1}})$ for $1 \le j \le M$. The UD-S-CWEP codes are used to support FF-CCMA systems.

For the FF-CCMA mode, the EP encoder operates in parallel mode and may exhibit better error performance than in serial mode. This is because the data rate in serial mode is generally very low, i.e., $1/m$, which leads to a worse BER, as discussed in \cite{FFMA}. When the number of bits per user, $K$, is greater than one, the data rate in parallel mode becomes $K/m$, which is greater than $1/m$.
In general, when the number of information bits is comparable to the number of parity bits, classical message passing algorithms (MPA) can be used to decode the PA-FF-CCMA system. On the other hand, the BMD detection algorithm may be more suitable.

Additionally, in the FF-CCMA mode, the EP encoder \( \Psi_{\rm cw} \) is concatenated with the channel code \( {\mathcal C}_{gc} \) at the transmitter, which requires a joint decoding algorithm at the receiver for both the multiuser code \( {\mathcal C}_{mc} \) and the channel code \( {\mathcal C}_{gc} \).
We currently present two decoding algorithms: MPA and BMD algorithms. As a result, there are four possible joint decoding methods:

\begin{itemize}
  \item MPA for multiuser code and BMD for channel code, referred to as MPA-BMD for brevity;
  \item BMD for multiuser code and BMD for channel code, referred to as BMD-BMD for brevity;
  \item MPA for multiuser code and MPA for channel code, referred to as MPA-MPA for brevity; and
  \item BMD for multiuser code and MPA for channel code, referred to as BMD-MPA for brevity.
\end{itemize}

Although four decoding methods exist, the latter two, i.e., MPA-MPA and BMD-MPA, are less commonly used. Since $K$ typically represents a short payload in the um-MTC scenario, the channel coding often operates at a low data rate. Therefore, the MPA algorithm is not well-suited for decoding the channel code ${\mathcal C}_{gc}$ in such cases.

\vspace{-0.1in}
\subsection{FF-CDMA mode}
The loading factor of the generator matrix ${\bf G}_{\rm M}^{\bf 1}$ of an FF-CDMA system is equal to one, i.e., $\eta = 1$. To support such an FF-CDMA system, 
the typical UD-AI-CWEP code is constructed based on a ternary orthogonal matrix over GF($3^m$), denoted as ${\Psi}_{\rm ai,T}$, i.e., ${\Psi}_{\rm ai,T} = \{C_1^{\rm cd}, C_2^{\rm cd}, \ldots, C_j^{\rm cd}, \ldots, C_M^{\rm cd}\}$, with $C_j^{\rm cd} = (\alpha^{l_{j,0}}, \alpha^{l_{j,1}})$ for $1 \le j \le M$.
In our paper, $\alpha^{l_{j,0}}$ and $\alpha^{l_{j,1}}$ are respectively the $j$-th row of the additive inverse matrix of the $\kappa$-fold ternary orthogonal matrix ${\bf T}_{\rm o, ai}(2^{\kappa}, 2^{\kappa})$ and the $j$-th row of the $\kappa$-fold ternary orthogonal matrix ${\bf T}_{\rm o}(2^{\kappa}, 2^{\kappa})$, where $m = 2^{\kappa}$.

In this paper, the EP encoder in the FF-CDMA mode operates in serial mode, ensuring \textit{double orthogonality} in both the finite field and the complex field. This enables the use of \textit{complex-field correlation operations} and/or \textit{finite-field correlation operations} to decode the EP code.
Furthermore, the UD-AI-CWEP code $\Psi_{\rm ai,T}$ and the channel code ${\mathcal C}_{gc}$ together form a \textit{concatenated code}, which functions as an \textit{error-correction orthogonal spreading code}. Similar to the FF-CDMA mode, the MPA and BMD algorithms can be applied to decode the channel code.
In most cases, we first decode the multiuser code using correlation operations in the complex field, and then apply the MPA or BMD algorithms to decode the channel code.

In addition, when power allocation is applied to the FF-CDMA systems, referred to as PA-FF-CDMA systems, different UD-AI-CWEP codes may yield the same error performance. This is because the differences caused by complex-field correlation gain can become equal, as the power allocation method assigns more power to the case with less correlation gain. Furthermore, since the PA-FF-TDMA system also operates with equal power, the PA-FF-CDMA systems exhibit the same error performance as the PA-FF-TDMA systems. Based on this, we can derive the following corollary:
\begin{corollary} \label{Coro_ff_CCMA}
  A PA-FF-TDMA system can be realized using an FF-CDMA system, where the length of the UD-AI-CWEP (which is equivalent to the spreading factor) in the FF-CDMA system must match the assigned power level of the PA-FF-TDMA system. Under these conditions, the complex-field correlation at the receiver ensures that the error performance of the FF-CDMA system is equivalent to that of the PA-FF-TDMA system.
\end{corollary}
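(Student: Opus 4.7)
The plan is to reduce both systems to a scalar Gaussian hypothesis test with a common effective signal-to-noise ratio (SNR), so that matched per-bit error probabilities, when fed into the same outer channel code $\mathcal{C}_{gc}$, yield matched overall error performance.

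First I would write down the per-user decision statistic of the PA-FF-TDMA receiver. Each information bit $b_{j,k}$ of user $j$ occupies one BPSK-modulated time slot with allocated power level $\mu$ drawn from $\mu_{\mathrm{pav}}$, so the matched-filter output is $y=\sqrt{\mu P_{\mathrm{avg}}}\,(2b_{j,k}-1)+z$ with $z\sim\mathcal{N}(0,N_0/2)$, giving the familiar effective SNR of $2\mu P_{\mathrm{avg}}/N_0$ and bit error probability $Q\bigl(\sqrt{2\mu P_{\mathrm{avg}}/N_0}\bigr)$.

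Next I would derive the analogous FF-CDMA statistic using the Walsh realization $\mathbf{C}_{\mathrm{wal}}={\rm F}_{\rm F2C}(\mathbf{T}_{\rm o}(2^{\kappa},2^{\kappa}))$ introduced just above. Each user is spread by the row $\mathbf{C}_{\mathrm{wal},j}\in\{\pm 1\}^{L}$ with $L=2^{\kappa}$, and the chips are transmitted at uniform power $P_{\mathrm{avg}}$. Despreading with $\mathbf{C}_{\mathrm{wal},j}$ at the receiver exploits Property 2 of $\mathbf{T}_{\rm o}$ (zero cross-correlation of distinct rows and self-correlation equal to $L$) to annihilate the interuser interference, producing $L\sqrt{P_{\mathrm{avg}}}\,(2b_{j,k}-1)+\tilde z$ with $\tilde z\sim\mathcal{N}(0,LN_0/2)$ and hence effective SNR $2LP_{\mathrm{avg}}/N_0$. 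Equating the two expressions for the SNR forces $L=\mu$, which is precisely the matching condition asserted in the corollary; the identical scalar statistics then make the equivalence of BER, both uncoded and after the common outer code $\mathcal{C}_{gc}$, immediate, because the soft information passed to the decoder has the same conditional distribution in both systems.

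The hard part will not be the one-dimensional SNR computation but the system-level bookkeeping. In PA-FF-TDMA the $J$ users occupy disjoint slots, each carrying energy $\mu P_{\mathrm{avg}}$, whereas in FF-CDMA those same users are superposed on a single $L$-chip block at uniform power $P_{\mathrm{avg}}$. I would need to show that the positions removed as ``default zeros'' in the PA construction can be put in one-to-one correspondence with the chip positions consumed by the spreading sequences, so that both systems operate under the same energy per information bit and occupy the same number of channel uses. Only after this alignment is verified does the per-bit SNR identity translate into a fair equivalence of error performance, and it is exactly at this accounting step that the requirement ``length of the UD-AI-CWEP equals the assigned power level'' acquires its operational meaning.
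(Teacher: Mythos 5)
Your proposal is correct and follows essentially the same reasoning as the paper, which justifies this corollary only informally: the complex-field correlation (despreading) gain of the Walsh-realized UD-AI-CWEP plays the role of the allocated power level in the PA-FF-TDMA system, so matching the spreading factor to the power level yields the same effective received SNR and hence the same error performance, both uncoded and through the common channel code \( \mathcal{C}_{gc} \). Your matched-filter computation, \( Q\bigl(\sqrt{2\mu P_{\rm avg}/N_0}\bigr) \) versus \( Q\bigl(\sqrt{2LP_{\rm avg}/N_0}\bigr) \) with \( L=\mu \), is just a quantitative version of that argument (the only nit being that the zero cross-correlation and gain \( L \) you invoke come from the complex-field orthogonality of \( \mathbf{C}_{\rm wal} \), not from the finite-field correlations stated in Property 2).
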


\vspace{-0.2in}
\subsection{FF-NOMA mode}
The loading factor of the generator matrix ${\bf G}_{\rm M}^{\bf 1}$ of an FF-NOMA system is larger than one, i.e., $\eta > 1$. To support an FF-NOMA system, 
this paper introduces a ternary non-orthogonal matrix ${\bf T}_{\rm no}(3,2)$ constructed over GF($3^2$), denoted by ${\Psi}_{\rm no}$, i.e., ${\Psi}_{\rm no} = \{C_1^{\rm no} = (22, 11), C_2^{\rm no} = (12, 21), C_3^{\rm no} = (02, 01)\}$.

Due to the non-orthogonal nature in the finite field, users must be separated in the complex field. Therefore, the EP encoder should operate in serial mode.  
Furthermore, there should be a one-to-one (or many-to-one) mapping between the CFSP signal block ${\bf r}^{(k)}$ and the user-block ${\bf b}[k]$, so that we can directly decode the multiuser code using the MAP criterion, or obtain the soft input (likelihood information) for further channel decoding. Similar to the FF-CDMA mode, either the MPA or BMD algorithms can be employed to decode the channel code.

In \cite{FFMA}, the FFMA system based on an AIEP code $\Psi_{\rm s}$ constructed over a prime field GF($p$) can also support FF-NOMA transmission, e.g., network FFMA.
An FF-NOMA system can improve the SE, with some loss in error performance.


\subsection{Number of Users Served under Different FFMA Modes}

From the above discussion, we can observe that the generator matrix set of an EP code determines the modes of FFMA, such as FF-TDMA, FF-CDMA, FF-CCMA, and FF-NOMA. In other words, the number of users served by the FFMA system is also governed by the loading factor of the generator matrix set, i.e., \( \eta = \frac{M}{m} \).  
Assuming that the information section of the channel code supports \( T \) data blocks (or \( T \) EP codes), according to Theorem \ref{Theorem_EP_ChannelCode}, we have
\[
  J \le \frac{MT}{K} = \frac{\eta m \cdot T}{K},
\]
where \( J \) represents the total number of users served by the FFMA system, with the EP encoder having a loading factor \( \eta \).

\textbf{Example 9:}
As shown in Fig. \ref{f_summary}, consider the use of an $(16, 12)$ linear block code for transmission, with parameters \( T = 3 \), \( m = 4 \), \( K = 1 \), and \( N = 16 \).  
For the FF-TDMA and/or FF-CDMA modes with a loading factor of \( \eta = 1 \), each data block can serve \( J_{mc} = \eta \cdot m = 4 \) users, meaning the system can support a maximum of \( J = J_{mc} \cdot T = 12 \) users.  
In the FF-CCMA mode, with a loading factor of \( \eta = \frac{3}{4} \), each data block can serve \( J_{mc} = \eta \cdot m = 3 \) users, so the system can support a maximum of \( J = J_{mc} \cdot T = 9 \) users.  
For the FF-NOMA mode, with a loading factor of \( \eta = 1.5 \), each data block can serve \( J_{mc} = \eta \cdot m = 6 \) users, and the system can support a maximum of \( J = J_{mc} \cdot T = 18 \) users.
$\blacktriangle \blacktriangle$



\section{Overload EP codes based network FFMA}

As aforementioned, the NO-EP code ${\Psi}_{\rm no}$ is not an UD-EP code, so it cannot be decoded based on the generator matrix ${\bf G}_{\rm M}^{\bf 1}$.
Nevertheless, when the NO-EP code ${\Psi}_{\rm no}$ is applied into network layer, 
we can recover the bit-sequences without ambiguity.

\begin{figure}[t]
  \centering
  \includegraphics[width=0.9\textwidth]{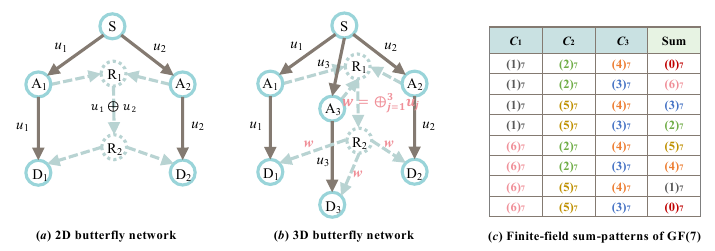}
  \caption{Butterfly networks. (a) 2-dimensional butterfly network; (b) 3-dimensional butterfly network; and (c) finite-field sum-patterns of $3$-user AI-EP code over GF($7$).} 
  \label{f.3D_network}
  \vspace{-0.3in}
\end{figure}

In this section, we investigate two overload EP codes based network FFMA systems.
One of the overload EP codes is the NO-AI-CWEP code $\Psi_{\rm no}$ constructed over GF($3^2$), which can support $3$-user with a loading factor of $\eta = 1.5$. 
The other overload EP code $\Psi_{\rm s}$ is constructed based on the prime field GF($7$), which is a symbol-wise AI-EP code.
The AI-EPs over GF($7$) are $C_1^{\rm s}= (1,6), C_2^{\rm s} = (2,5)$,
and $C_3^{\rm s} = (3, 4)$. 
The Cartesian product, i.e., $\Psi_{\rm s} = C_1^{\rm s} \times C_2^{\rm s} \times C_3^{\rm s}$, gives a 3-user AI-EP code. 
Since $\log_2\left(7-1\right)<3$, $\Psi_{\rm s}$ is not an UD-AIEP code, but is an overload EP code with LF of $R_q = 3$.

As known, network coding is one of the most important networking techniques, which can increase the throughput and reduce delay significantly \cite{R3}. The transmit multiple bit sequences can be encoded at the relay nodes, and the destination nodes can recover all the transmit bit sequences via algebraic decoding. 
Fig. \ref{f.3D_network} (a) shows the famous ``butterfly network'' which transmits 2-bit message from the source node to the two destination nodes.

To compare with the classical network coding, we present a 3-dimensional (3D) butterfly network in which the source node sends 3-bit message to 3-destinations as shown in Fig. \ref{f.3D_network} (b).

Let \((u_1, u_2, u_3)\) represent a codeword of the overload EP code \(\Psi\), i.e., \((u_1, u_2, u_3) \in \Psi\).  
We now describe the process of network FFMA.  
The source transmits a 3-bit message to the network.  
The \(j\)-th transmit node, denoted by \(A_j\), first maps the 1-bit message using the function \({\rm F}_{{\rm B}2q}\) to obtain \(u_j\), where \(u_j \in C_j\) and \(1 \leq j \leq 3\).  
We will now discuss the process for two overload EP codes, one at a time.  
Suppose the 3-bit message is \((0, 0, 0)_2\).

\subsection{Based on the NO-CWEP code $\Psi_{\rm no}$}
When the NO-AI-CWEP code ${\Psi}_{\rm no}$ constructed over GF($3^2$) is used,
we can obtain $u_1= (22)_3, u_2 = (12)_3$, and $u_3 = (02)_3$. 
When $u_j$ for $1\le j \le3$ arrives at the relay node $R_1$, it can form the FFSP block $w = \bigoplus_{j=1}^3u_j = (00)_3$, which is then transmitted to the relay node $R_2$. 
The relay node $R_2$ passes the FFSP block $w$ to the 3-destination nodes. 
At the destination node $D_j$, based on the FFSP block $w$ and transmit symbol $u_j$, it can recover the transmit 3-bit from the source node. 

Take the $1^{\rm st}$ destination node $D_1$ as an example, 
$D_1$ receives $u_1 = (22)_3$ and $w = (00)_3$. 
According to Table I, it is found that both the combinations of
$u_1=(22)_3, u_2 = (12)_3, u_3= (02)_3$ 
and $u_1=(11)_3, u_2 = (21)_3, u_3= (01)_3$ 
can make the FFSP block $w$ equal to $(00)_3$. 
Regarding as $u_1=(22)_3$, we find $u_1=(22)_3, u_2 = (12)_3, u_3= (02)_3$, and the transmit 3-bit message is $(0,0,0)_2$.
Thus, we can recover the transmit bits without ambiguity.

\subsection{Based on the AI-EP code $\Psi_{\rm s}$}
When the AI-EP code ${\Psi}_{\rm s}$ constructed over GF($7$) is used,
we can obtain $u_1=\left(1\right)_7, u_2=\left(2\right)_7$, and $u_3=\left(4\right)_7$. 
When $u_j$ for $1\le j \le3$ arrives at the relay node $R_1$, it can form the FFSP symbol $w=\bigoplus_{j=1}^3u_j = (1+2+4)_7 = (0)_7$, which is then transmitted to the relay node $R_2$. 
The relay node $R_2$ passes the FFSP symbol $w$ to the 3-destination nodes. 
At the destination node $D_j$, based on the FFSP symbol $w$ and transmit symbol $u_j$, it can recover the transmit 3-bit from the source node. 

Take the $1^{\rm st}$ destination node $D_1$ as an example, 
$D_1$ receives $u_1=\left(1\right)_7$ and $w = (0)_7$. 
By the decoding table as shown in Fig. \ref{f.3D_network} (c), 
it is found that both $\left(u_1,u_2,u_3\right)=\left(1,2,4\right)_7$ and $\left(u_1,u_2,u_3\right)=\left(6,5,3\right)_7$ can make the FFSP symbol $w$ equal to $(0)_7$. 
Regarding as $u_1=\left(1\right)_7$, it is derived that $\left(u_1,u_2,u_3\right)=\left(1,2,4\right)_7$, thus the transmit 3-bit message is $\left(0,0,0\right)_2$.

In summary, based on the proposed 3D butterfly network, we can decode the overload EP codes, i.e., the NO-CWEP code $\Psi_{\rm no}$ and the AI-EP code $\Psi_{\rm s}$, without ambiguity.

\vspace{-0.1in}
\section{Simulation results}
This section simulates the error performance of the proposed FFMA systems in a GMAC. We begin by introducing the configuration of the polarization-adjusted vector, followed by an analysis of the error performance of the FFMA systems, comparing them to classical CFMA systems such as NOMA, IDMA, and polar codes with spreading.

In the simulation, we construct three binary LDPC codes, denoted as ${\mathcal C}_{gc,b1}$, ${\mathcal C}_{gc,b2}$, ${\mathcal C}_{gc,b3}$, and one ternary LDPC code ${\mathcal C}_{gc,t}$. 
The code ${\mathcal C}_{gc,b1}$ is a binary $(400, 300)$ short LDPC code with a rate of $R_{gc,b1} = 0.75$, while ${\mathcal C}_{gc,b2}$ is a binary $(10000, 8400)$ long LDPC code with a rate of $R_{gc,b2} = 0.84$. 
The code ${\mathcal C}_{gc,b3}$ is a binary $(1000, 800)$ LDPC code with a rate of $R_{gc,b3} = 0.8$, while ${\mathcal C}_{gc,t}$ is a ternary $(2000, 1601)$ LDPC code with a rate of $R_{gc,t} \approx 0.8$. 
Note that ${\mathcal C}_{gc,b3}$ and ${\mathcal C}_{gc,t}$ have nearly the same rate, but the codeword length of ${\mathcal C}_{gc,t}$ is twice that of ${\mathcal C}_{gc,b3}$.

For the simulation, we default to the PA-FFMA configuration, where each FFMA system is assigned a corresponding power level. 

\vspace{-0.1in}
\subsection{Polarization Adjusted Vector}

To examine the polarization-adjusted vector (PAV), we consider two scenarios: the first involves FF-CCMA without channel coding, which is equivalent to the FF-TDMA mode, and the second involves FF-CCMA with channel coding. 
In addition, we adopt the concept of \textit{regular PAV} \cite{FFMA}, where all information bits are assigned a uniform power, and all parity bits are assigned another uniform power.

\subsubsection{Regular PA-FF-TDMA mode}
For the regular PA-FF-CCMA mode without channel coding (or the equivalent regular PA-FF-TDMA mode), power needs to be allocated to the information bits and the parity bits of the multiuser code ${\mathcal C}_{mc}$. We assume that the multiuser code ${\mathcal C}_{mc}$ is an $(m, M)$ linear block code with a loading factor $\eta = M/m$, and the length of the parity bits of ${\mathcal C}_{mc}$ is set to $Q = m - M$.

Let $K$ denote the number of information bits per user. The power allocated to each information symbol is represented by $\mu_{1} P_{avg}$, while the power allocated to each parity symbol is represented by $\mu_{2} P_{avg}$. Consequently, we redefine the PAV in its regular form (or regular PAV) as follows:
\begin{equation}
  \begin{array}{cc}
  {\mu}_{\rm reg}^{\rm td} = (\mu_{1}, \mu_{2}), \\
  \text{s.t.}, \quad C1: K \cdot \mu_{1} + Q \cdot \mu_{2} = m,
  \end{array}
\end{equation}
where the subscript ``reg'' and the superscript ``td'' in ${\mu}_{\rm reg}^{\rm td}$ denote ``regular'' and ``FF-TDMA mode'' (or FF-CCMA without channel-coding mode), respectively.
The condition $C1$ ensures that the total power remains constant at $m P_{avg}$. Note that the \textit{polarization-adjusted scaling factor (PAS)}, defined as ${\mu}_{\rm pas} = \mu_{1}/\mu_{2}$ in \cite{FFMA}, is also used to quantify the polarization feature.

In this paper, we assume that all unused power, i.e., $(M-K) P_{avg}$, is equally distributed among the information bits. We refer to this method as the \textit{Maximum Information Power (MIP)}, which was also introduced in \cite{FFMA}.
Therefore, the regular PAV is given by
\(
  {\mu}_{\rm reg}^{\rm td} = (\mu_{1}, \mu_{2}) 
                           = \left(\frac{M}{K}, 1\right),
\)
which implies that ${\mu}_{\rm pas} = \mu_{1}/\mu_{2} = \frac{M}{K}$.

Consider that the EP code (or multiuser code) is constructed based on ${\mathcal C}_{gc,b1}$, which is a $(400, 300)$ binary LDPC code. In this case, the unused power is given by $(300 - K) P_{avg}$. Therefore, the regular PAV is ${\mu}_{\rm reg}^{\rm td} = \left(\frac{300}{K}, 1\right)$.
For example, when $K = 10$, the regular PAV is ${\mu}_{\rm reg}^{\rm td} = (30, 1)$. Similarly, when $K = 100$, the corresponding regular PAV is ${\mu}_{\rm reg}^{\rm td} = (3, 1)$.

\subsubsection{Regular PA-FF-CCMA mode}

In the regular PA-FF-CCMA mode with channel coding, power must be allocated to both the information section and the parity section of the channel code ${\mathcal C}_{gc}$, where the information section is determined by the multiuser code ${\mathcal C}_{mc}$ (or the EP code $\Psi_{\rm cw}$). The parameters of the multiuser code ${\mathcal C}_{mc}$ are the same as those defined previously. Additionally, we assume that the number of bits per user, $K$, is equal to or smaller than $M$, and that the EP encoder operates in parallel mode.
Suppose the global channel code ${\mathcal C}_{gc}$ is an $(N, K_{gc})$ linear block code with a rate of $R_{gc} = \frac{K_{gc}}{N}$. Let the length of the parity bits of ${\mathcal C}_{gc}$ be $R = N - K_{gc}$.

In this context, let $\mu_{1}$ and $\mu_{2}$ represent the polarization-adjusted factors for the information symbol and the parity symbol of the multiuser code, respectively. Additionally, let $\mu_{c}$ denote the polarization-adjusted factor for each parity symbol of the channel code. 
Thus, the regular PAV is expressed as:
\begin{equation}
  \begin{array}{cc}
  {\mu}_{\rm reg}^{\rm cc} = (\mu_{1}, \mu_{2}, \mu_{c}), \\
  \text{s.t.}, \quad C2: K \cdot \mu_{1} + Q \cdot \mu_{2}
                        + R \cdot \mu_{c} = N,
  \end{array}
\end{equation}
where the superscript ``cc'' in ${\mu}_{\rm reg}^{\rm cc}$ denotes the ``FF-CCMA with channel coding mode''.
The condition $C2$ ensures that the total power remains constant at $N P_{avg}$.

In this paper, we consider two power allocation methods. The first method is the aforementioned MIP power allocation scheme, where we equally distribute all unused power, i.e., $(K_{gc}-K-Q)P_{avg}$, among the information bits.
Therefore, the regular PAV is given by:
\[
  {\mu}_{\rm reg, pll}^{\rm cc} = (\mu_{1}, \mu_{2}, \mu_{c}) 
                           = \left(\frac{K_{gc}-Q}{K}, 1, 1\right),
\]
which implies that the regular PAV for the multiuser code ${\mathcal C}_{mc}$ is ${\mu}_{\rm reg}^{\rm td} = \left(\frac{K_{gc}-Q}{K}, 1\right)$.

Now, consider the channel code constructed based on the ${\mathcal C}_{gc,b2}$, which is a $(10000, 8400)$ binary LDPC code. In this case, the unused power is equal to $(8400-100-K)P_{avg}$. 
For instance, when $K = 10$, the regular PAV is ${\mu}_{\rm reg, pll}^{\rm cc} = (830, 1, 1)$, whereas when $K = 100$, the regular PAV is ${\mu}_{\rm reg, pll}^{\rm cc} = (83, 1, 1)$.

In the second case, the unused power is allocated to both the information bits and the parity bits of the multiuser code ${\mathcal C}_{mc}$. We refer to this method as the \textit{Maximum Block Information Power (MBIP)} scheme. The MBIP method consists of two phases.

\textbf{Phase One:} The information section power, equal to $K_{gc} \cdot P_{avg}$, is evenly allocated to the occupied data blocks of one user. Since the EP encoder works in parallel mode, each user occupies one data block. Therefore, the data block is assigned power equal to $\frac{K_{gc}}{m} \cdot P_{avg}$.

\textbf{Phase Two:} The power allocated to the information section within a block is equally distributed among the information bits, following the MIP method. Since the EP encoder operates in parallel mode, each information bit is assigned a power of $\frac{K_{gc}}{m} \cdot \frac{M}{K} \cdot P_{avg}$.

In summary, the regular PAV for the FF-CCMA with an EP encoder operating in parallel mode is expressed as:
\[
  {\mu}_{\rm reg, pll}^{\rm cc} = (\mu_{1}, \mu_{2}, \mu_{c}) 
            = \left(\frac{M K_{gc}}{Km}, \frac{K_{gc}}{m}, 1\right).
\]

Now, consider the case where the EP code is based on ${\mathcal C}_{gc,b1}$, and the channel code is based on ${\mathcal C}_{gc,b2}$. In this case, when $K = 10$, the regular PAV becomes ${\mu}_{\rm reg, pll}^{\rm cc} = (630, 21, 1)$, and when $K = 100$, it becomes ${\mu}_{\rm reg, pll}^{\rm cc} = (63, 21, 1)$.

It should be noted that when the EP encoder operates in serial mode, the regular PAV must be updated accordingly.
For the MIP method, the regular PAV for the FF-CCMA with an EP encoder operating in serial mode is updated as:
\(
{\mu}_{\rm reg, sel}^{\rm cc} = (\mu_{1}, \mu_{2}, \mu_{c}) 
= \left( \frac{(K_{gc} - KQ)}{K}, 1, 1 \right).
\)
For the MBIP method, the regular PAV is updated as:
\(
{\mu}_{\rm reg, sel}^{\rm cc} = (\mu_{1}, \mu_{2}, \mu_{c}) 
= \left( \frac{M K_{gc}}{Km}, \frac{K_{gc}}{Km}, 1 \right),
\)
where the subscript ``sel'' stands for the serial mode. The derivation for the serial mode is not elaborated here, as the analysis process is analogous to that of the parallel mode.

\subsection{Error Performance of FF-CCMA Systems}
We first evaluate the error performance of two decoding algorithms for the FF-CCMA system: the MSA algorithm and the BMD algorithm. The MSA decoding is performed with $50$ iterations, while the BMD algorithm, introduced in \cite{FFMA}, will be further discussed in detail in our other papers.
Next, we analyze the error performance of the encoder in both serial and parallel modes, considering different power allocation methods.
During the simulation, the EP code $\Psi_{\rm cw}$ for the FF-CCMA system is constructed using the short LDPC code ${\mathcal C}_{gc,b1}$. The channel coding is considered in two scenarios: one without any channel coding and the other where the channel code employs the long LDPC code ${\mathcal C}_{gc,b2}$.

\begin{figure*}[t!]
  \centering
  \subfigure[FF-CCMA without channel coding.]{\includegraphics[width=0.46\textwidth]{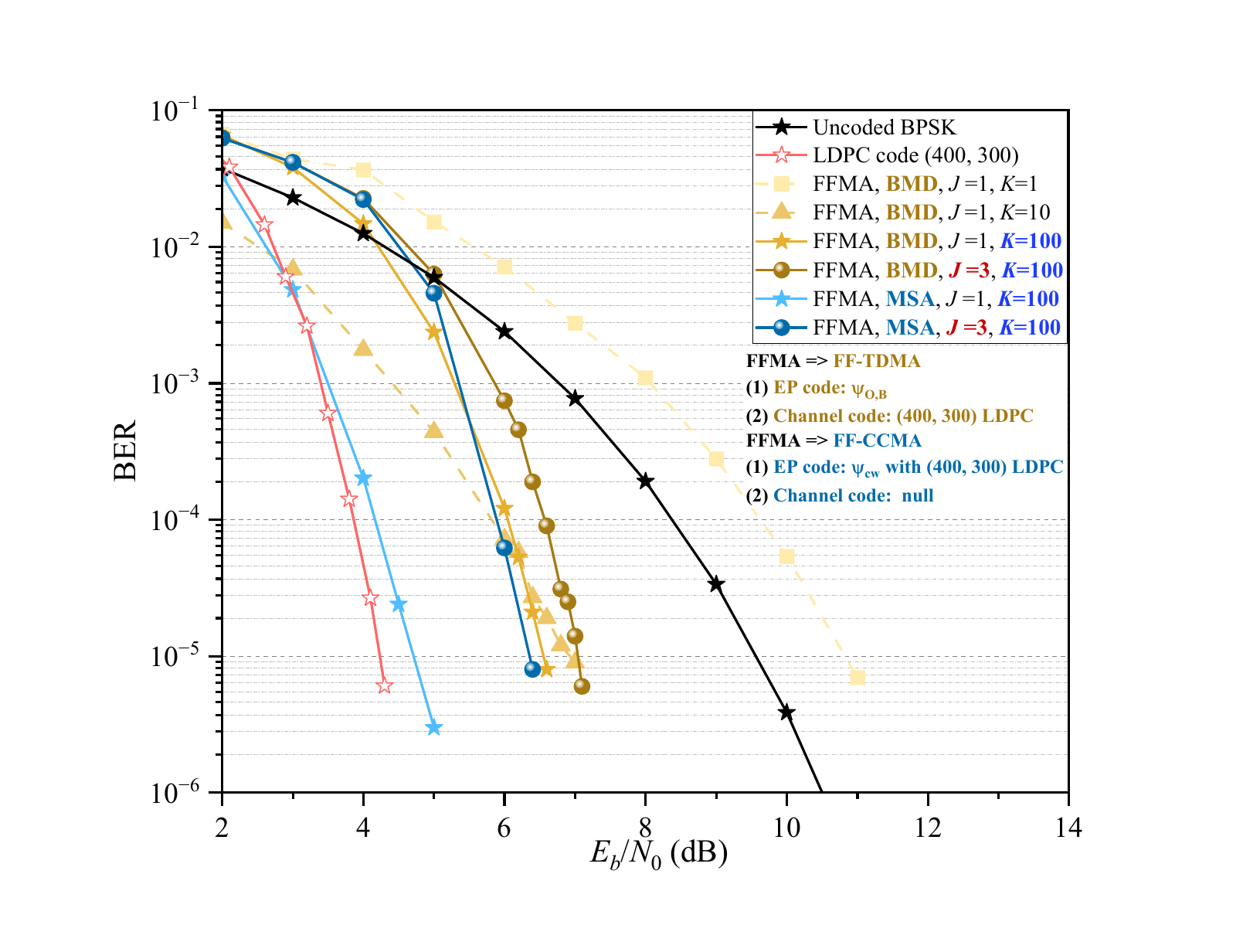}}
  \label{CCMA_sub1}
  \subfigure[FF-CCMA with channel coding.]{\includegraphics[width=0.47\textwidth]{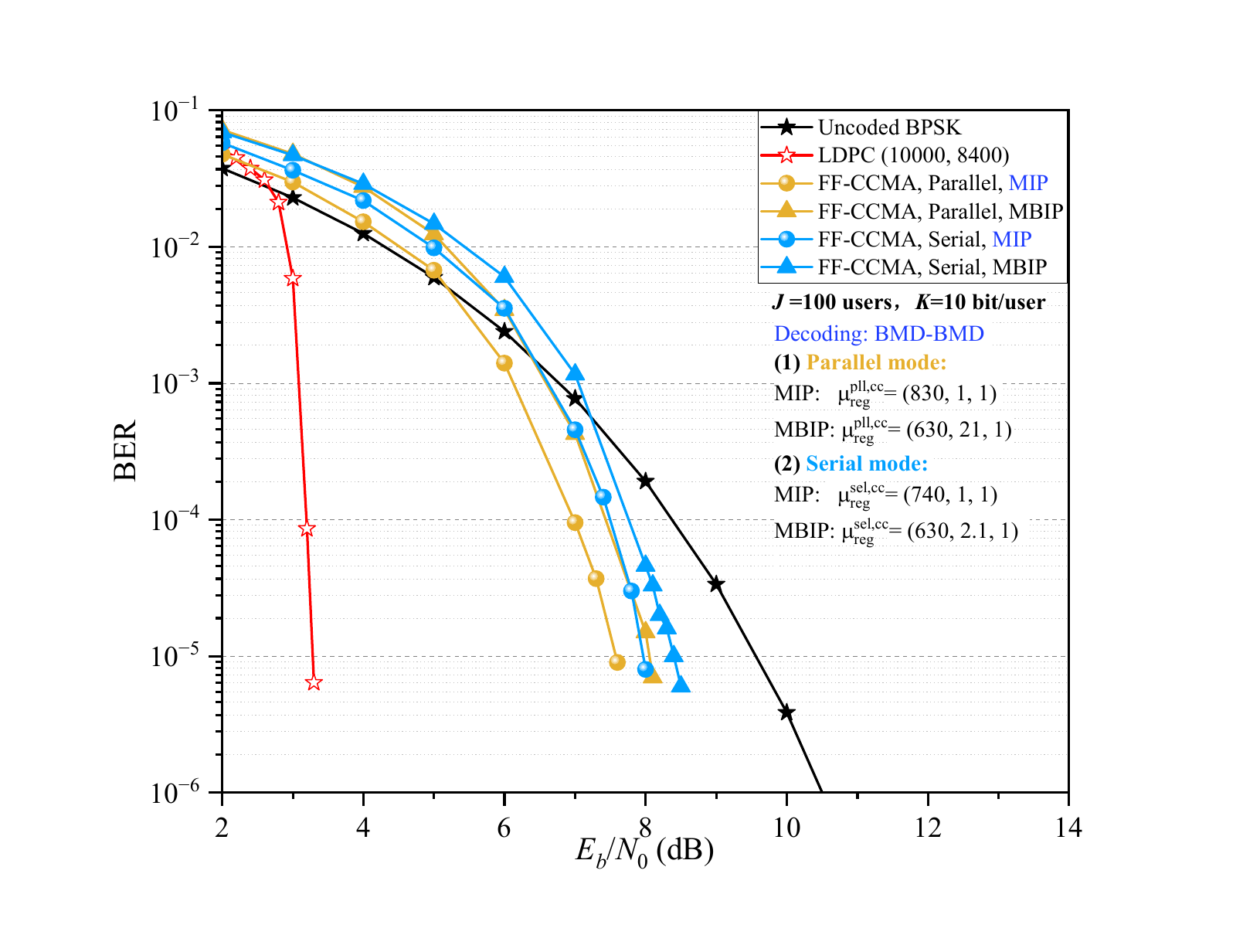}}
  \label{CCMA_sub2}
  \caption{BER performances of FF-CCMA systems in a GMAC.}
  \label{f.CCMA}
  \vspace{-1cm}
\end{figure*}

Fig. \ref{f.CCMA} (a) illustrates the error performance of the FF-CCMA system without channel coding, which is equivalent to the FF-TDMA system. In the simulation, we set the number of users as $J = 1, 3$, and the number of bits per user as $K = 1, 10, 100$. The MIP power allocation scheme is explored.
From Fig. \ref{f.CCMA} (a), as $K$ increases from $1$, $10$ to $100$, the BER performance of the BMD algorithm first improves, then deteriorates. However, the BMD algorithm remains applicable for all values of $K$. Additionally, when $J = 1$ and $K = 100$, the FF-TDMA system behaves like a $0.25$ rate PA-LDPC code. In this case, the MSA algorithm provides significantly better BER performance than the BMD algorithm, and the BER performance of the MSA algorithm is even comparable to that of the LDPC code ${\mathcal C}_{gc, b1}$, which has a rate of $0.75$.
When $J = 3$ and $K = 100$, the MSA algorithm also outperforms the BMD algorithm in terms of BER performance.


The BER performances of the FF-CCMA with different EP encoders are shown in Fig. \ref{f.CCMA} (b), where we evaluate both the serial and parallel modes of the EP encoder. In the simulation, we set $J = 100$ and $K = 10$, and use the BMD-BMD algorithm to decode the FF-CCMA system. Both the MIP and MBIP schemes are explored for power allocation.
From Fig. \ref{f.CCMA} (b), it is clear that the EP encoder in parallel mode consistently provides better BER performance than the serial mode. Therefore, when $K$ is smaller than $M$, parallel mode is preferred over serial mode. However, when $K$ is large, for instance, when $K > M$, both parallel and serial modes must be considered simultaneously. This combined approach, known as the \textit{hybrid mode}, will be further explored in our future work.

Besides, it can be observed that the MIP scheme provides better BER performance than the MBIP scheme. However, this result is not always conclusive, as the error performance is influenced by various factors, including the simulation parameters $J$, $K$, the LDPC codes ${\mathcal C}_{gc, b1}$ and ${\mathcal C}_{gc, b2}$, and even the decoding algorithms used.
In this paper, we focus on the short packet scenario where $K < M$ and the BMD-BMD (or MPA-BMD) decoding algorithm is employed. Therefore, we use the MIP scheme in the following discussion.



\begin{figure*}[t!]
  \centering
  \subfigure[$R_{sum} = 0.75$.]{\includegraphics[width=0.46\textwidth]{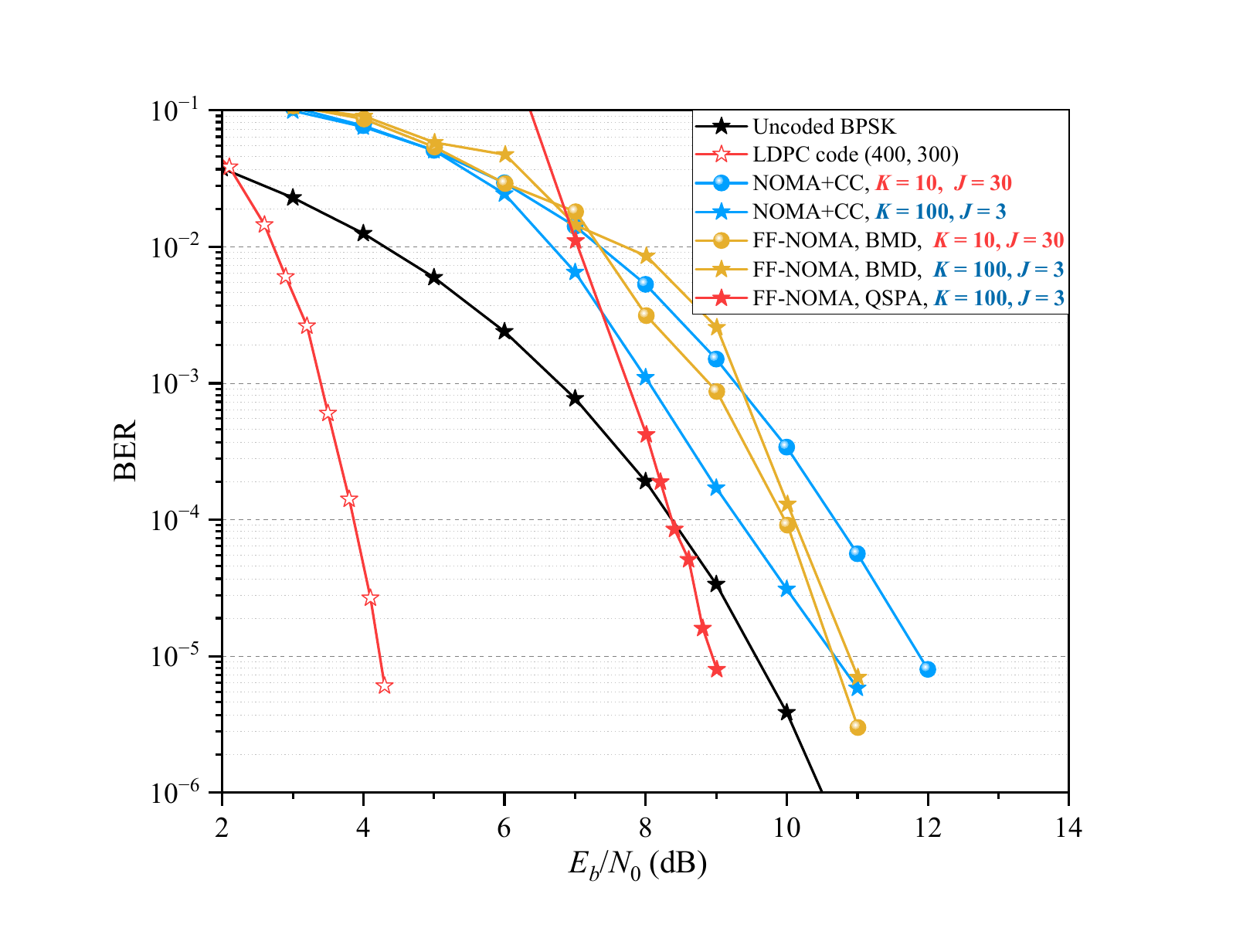}}
  \label{NOMA_sub1}
  \subfigure[$R_{sum} = 1.2$.]{\includegraphics[width=0.47\textwidth]{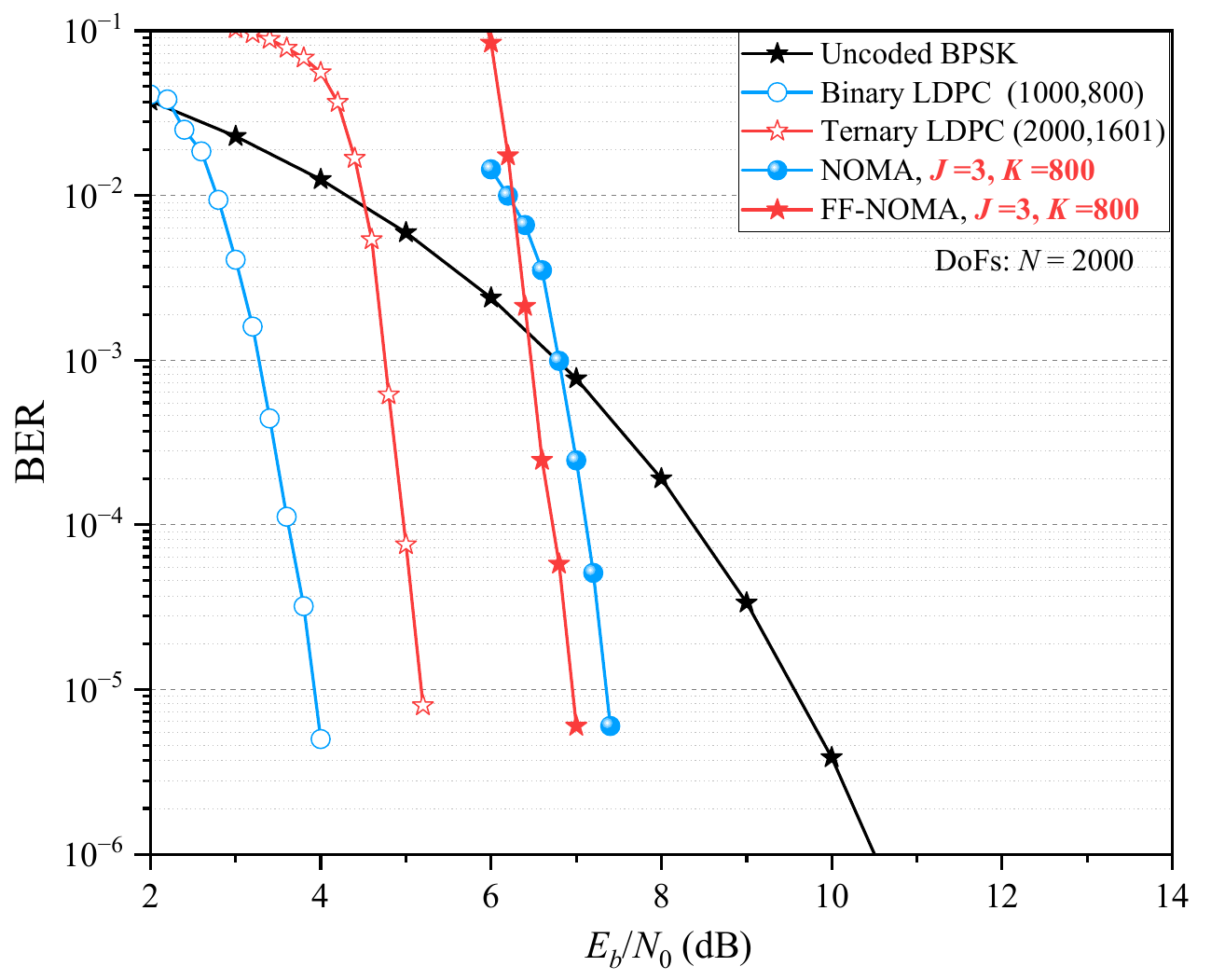}}
  \label{NOMA_sub2}
  \caption{BER performance of FF-NOMA and NOMA systems in a GMAC.}
  \label{f.NOMA}
  \vspace{-1cm}
\end{figure*}

\subsection{Error Performance of FF-NOMA Systems}

In this subsection, we compare the proposed FF-NOMA system with the non-orthogonal spreading NOMA system. To evaluate the error performance of the FF-NOMA system, we use the NO-CWEP code $\Psi_{\rm no}$ with a loading factor of $1.5$, as described in Section V, followed by a channel code. For a fair comparison, the non-orthogonal spreading NOMA system first applies a channel code, followed by a ternary uniquely decodable code (T-UDC) \cite{UD_CDMA1_2012, UD_CDMA2_2012}. The T-UDC supports $3$ users with a spreading factor (SF) of $2$, so its loading factor is also $1.5$. These parameters are the same as our proposed NO-AI-CWEP code.

In our proposed FF-NOMA system, two decoding algorithms are considered: the BMD algorithm and the $3$-ary sum-product algorithm (QSPA), implemented with $50$ iterations. The $3$-ary QSPA will be discussed in more detail in a separate paper. For power allocation, the MIP method is applied.

In Fig. \ref{f.NOMA} (a), assume the total number of degrees of freedom (DoFs) is $N = 400$. The FF-NOMA system utilizes the short binary LDPC code ${\mathcal C}_{gc, b1}$ with a rate of $0.75$, while the NOMA system employs a convolutional code (CC) with a rate of $0.75$. We consider two sets of parameters: one where the number of users is $J = 3$ and the number of bits is $K = 100$, and another where $J = 30$ and $K = 10$. With $N = 400$, the total sum rate is given by $R_{\text{sum}} = \frac{JK}{N} = 0.75$.

When $K = 100$ and $J = 3$, it is observed that the error performance of FF-NOMA with the QSPA algorithm significantly outperforms that of the BMD algorithm in terms of BER. Since the QSPA algorithm is a soft-decision algorithm, it generally provides better error performance compared to the hard-decision BMD algorithm. This behavior of the FF-NOMA decoding algorithms is consistent with that of the FF-CCMA systems. However, when $K = 10$ and $J = 30$, the QSPA algorithm becomes less effective due to the loss of high-reliability information bits. In this case, the BMD algorithm is preferred, as is the case with the FF-CCMA system.

Additionally, when $K = 10$, $J = 30$, and $\text{BER} = 10^{-5}$, our FF-NOMA system with the BMD algorithm provides a gain of more than $1$ dB, demonstrating superior BER performance compared to the NOMA with CC system. On the other hand, when $K = 100$ and $J = 3$, although the FF-NOMA with the BMD algorithm results in worse BER performance than the NOMA with CC system, the FF-NOMA with the QSPA algorithm still offers a coding gain of approximately $1.2$ dB, which significantly outperforms the NOMA with CC system in terms of BER. Therefore, the simulation results indicate that when $R_{\text{sum}} < 1$, regardless of whether the number of users is small ($J = 3$) or large ($J = 30$), our proposed FF-NOMA system consistently delivers better performance than the classical NOMA system.

Fig. \ref{f.NOMA} (b) illustrates the error performance for the case where the sum rate is $R_{\text{sum}} = 1.2$, which exceeds $1$, in contrast to Fig. \ref{f.NOMA} (a), where the sum rate is $R_{\text{sum}} = 0.75$, less than $1$. Assume the total number of DoFs is $N = 2000$. The FF-NOMA system uses a ternary LDPC code ${\mathcal C}_{gc, t}$ with parameters $(2000, 1601)$, offering an approximate loading factor of $0.8$. In contrast, the NOMA system employs a binary LDPC code ${\mathcal C}_{gc, b3}$ with parameters $(1000, 800)$ and a loading factor of $0.8$. We set $J = 3$ and $K = 800$, yielding a total sum rate given by $R_{\text{sum}} = \frac{JK}{N} = 1.2$.

From Fig. \ref{f.NOMA} (b), it can be observed that when the payload is relatively large, for example, $K = 800$, both the FF-NOMA and NOMA systems achieve a coding gain compared to the uncoded case. This is in contrast to the scenario with a smaller payload, where $K = 10$ or $100$, which demonstrates that short payload transmissions present a more challenging problem. Additionally, when $\text{BER} = 10^{-5}$, our proposed FF-NOMA system provides approximately a $0.4$ dB gain compared to the classical NOMA system, highlighting the efficiency of our FF-NOMA approach. Interestingly, although the error performance of a ternary LDPC code is worse than that of a binary LDPC code due to the 3ASK modulation, the resulting FF-NOMA system still outperforms the NOMA system in terms of BER. This phenomenon will be further investigated in the following paper.

\begin{figure*}[t!]
  \centering
  \subfigure[$J = 100/99$ and $K =10$.]{\includegraphics[width=0.47\textwidth]{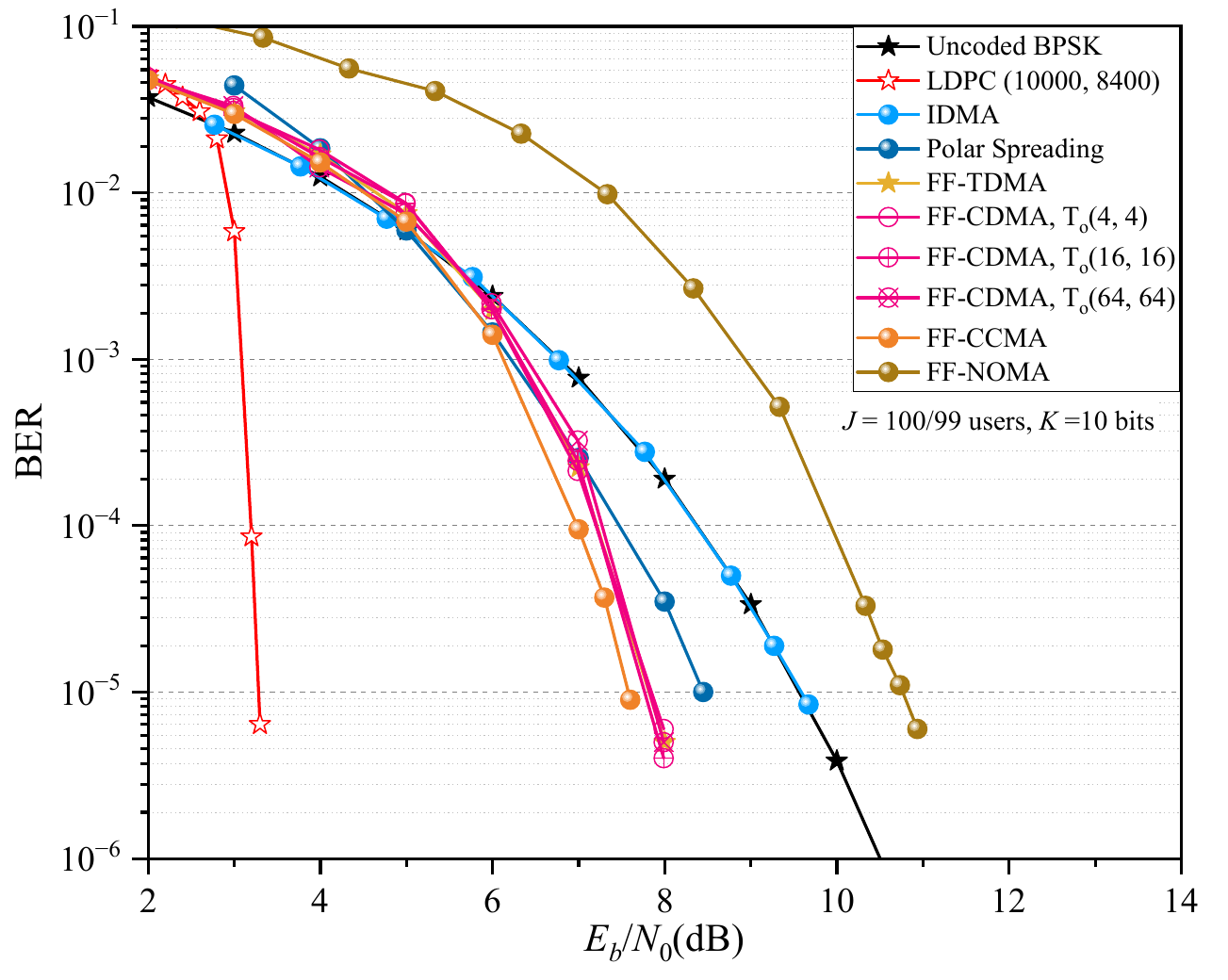}}
  \subfigure[$J = 63$ and $K = 100$.]{\includegraphics[width=0.47\textwidth]{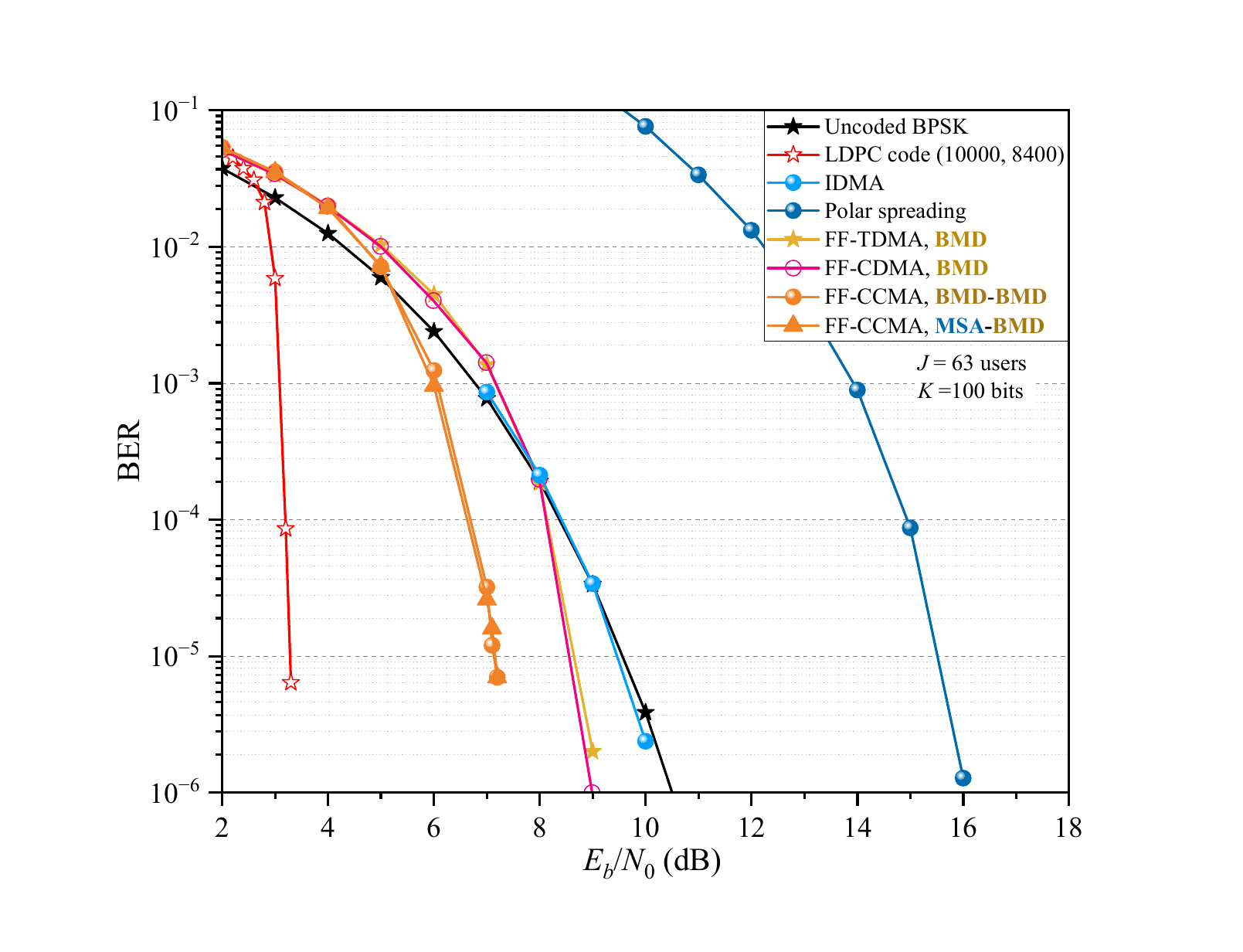}}
  \caption{BER performances among different MA Systems.}
  \label{f.Various_MA}
  \vspace{-1cm}
\end{figure*}

\subsection{Comparasions among Different MA Systems}
In this subsection, we compare the proposed FFMA systems with IDMA and polar spreading systems. The simulation conditions for IDMA and polar spreading are consistent with those presented in \cite{FFMA}. 

For the IDMA system, we utilize a concatenation code consisting of two components: a channel code (e.g., conventional code), followed by a repetition code \cite{IDMA1,IDMA2}. In general, the repetition code is employed to enhance the signal's energy and reduce multiuser interference, while the conventional code improves error performance, making the required $E_b/N_0$ closer to the Shannon limit. 
To ensure a fair comparison with the FFMA system, during simulations, the IDMA system is configured with $N = 10000$ resources, using a rate $0.84$ convolutional encoder that outputs $12$-bit codewords, and the spreading factor is increased to $840$. 
For the polar random spreading system \cite{Polar2}, we utilize a CA-polar code with a CRC length of $2$. The spreading sequence is generated using a Gaussian random sequence. The total frame length of the CA-polar spreading systems is $9984$.
The iterative decoding process of the polar spreading system combines successive interference cancellation (SIC), MMSE estimation, and single-user decoding.

For the FF-CDMA mode, we consider three AI-CWEP codes, constructed based on ${\bf T}_{\rm o}(4,4)$, ${\bf T}_{\rm o}(4,4)$, ${\bf T}_{\rm o}(16,16)$, and ${\bf T}_{\rm o}(64,64)$, respectively. Additionally, the FF-CDMA system typically employs two types of transform functions corresponding to BPSK and 3ASK modulations. Let the average power of BPSK and 3ASK be denoted as $P_{\text{avg}, b}$ and $P_{\text{avg}, t}$, respectively. To maintain a constant average power per symbol, $P_{\text{avg}}$, we must set $P_{\text{avg}, b} = P_{\text{avg}, t}$. Moreover, the power allocation follows the MIP scheme.

Fig. \ref{f.Various_MA} (a) illustrates the error performance of various MA systems under the conditions of $K = 10$ and $J = 100$. First, we observe that the proposed FF-CDMA, FF-TDMA, and FF-CCMA systems all outperform the classical IDMA and polar spreading systems in terms of error performance, highlighting the effectiveness of our proposed FFMA systems.
Next, we note that the BER performance of the three PA-FF-CDMA systems is identical, which is also the case for the PA-FF-TDMA system. This similarity arises from the MIP power allocation scheme used, which ensures that both the PA-FF-TDMA and the three PA-FF-CDMA systems receive equal power in the complex field, resulting in the same error performance across these FFMA systems.
Finally, the proposed FF-CCMA system shows slightly better error performance than both the FF-CDMA and FF-TDMA systems.

Fig. \ref{f.Various_MA} (b) investigates the error performance of various MA systems under the conditions of $K = 100$ and $J = 63$. Compared to the case with $K = 10$ shown in Fig. \ref{f.Various_MA} (a), we observe that as $K$ increases, the proposed FFMA system provides even better BER performance than the classical CFMA systems. This is due to the fact that, as $K$ increases, the CFMA systems suffer from the limitations imposed by their finite blocklength, which degrades their error performance. In contrast, the error performance of the proposed FFMA system improves.
Furthermore, when $K = 100$, both MSA-BMD and BMD-BMD decoding methods can be used for the FF-CCMA systems. From Fig. \ref{f.Various_MA} (b), it is found that both decoding algorithms yield nearly identical BER performance.

\section{Conclusion and Remarks}

In this paper, we introduce codeword-wise EP codes, and construct two specific codeword-wise EP codes, which are S-CWEP codes and AI-CWEP codes.
To make the designed codeword-wise EP codes uniquely decodable, USPM structural property constraints are presented for both the S-CWEP codes over GF($2^m$) and AI-CWEP codes over GF($3^m$).

Then, we construct UD-S-CWEP codes ${\Psi}_{\rm cw}$ based on linear block channel codes, which are used to support FF-CCMA. In addition, we introduce $\kappa$-fold orthogonal ternary orthogonal matrix ${\bf T}_{\rm o}(2^{\kappa}, 2^{\kappa})$ and ternary non-orthogonal matrix ${\bf T}_{\rm no}(M,m)$. Based on the $\kappa$-fold orthogonal ternary matrix ${\bf T}_{\rm o}(2^{\kappa}, 2^{\kappa})$ and its additive inverse matrix ${\bf T}_{\rm o, ai}(2^{\kappa}, 2^{\kappa})$, we can construct generalized UD-AI-CWEP codes $\Psi_{\rm ai,T}$ over GF($3^m$) where $m = 2^{\kappa}$, which is used to realize FF-CDMA. Based on the ternary non-orthogonal matrix ${\bf T}_{\rm no}(M,m)$ and its additive inverse matrix ${\bf T}_{\rm no, ai}(M,m)$, we construct NO-AI-CWEP codes $\Psi_{\rm no}$, which can be regarded as FF-NOMA. By using these codeword-wise EP codes, we can enlarge the applications of FFMA and support different modes of FFMA systems, such as FF-CCMA, FF-CDMA, and FF-NOMA.
Simulation results show that all these four modes can support massive users transmission with well-mannered error performances.

Since the UD-AI-CWEP code $\Psi_{\rm ai,T}$ and NO-CWEP code $\Psi_{\rm no}$ are constructed over GF($3^m$), the transform function ${\rm F}_{\rm F2C}$ exploits 3ASK. By increasing the modulation order, e.g., from BPSK to 3ASK, more information is loaded through the modulated symbol. Thereby, we see that an increase in dimensionality (e.g., from BSSK signaling to 3ASK signaling) can provide a solution for implementing error-correction orthogonal/non-orthogonal spreading codes. 

In the future, we will continue exploring different codeword-wise EP codes to support other classical CFMA techniques, such as IDMA, polarized EP code for FFMA system and others.



\vfill
\end{document}